\def\titlerunning#1{\gdef\titrun{#1}}
\def\author#1{\gdef\autrun{\def\and{\unskip, }#1}\gdef\@author{#1}}
\def\address#1{{\def\and{\\\hspace*{18pt}}\renewcommand{\thefootnote}{}%
\footnote {#1}}%
\markboth{\autrun}{\titrun}}
\def\email#1{E-mail: #1}
\def\subjclass#1{\par\medskip
\noindent\textbf{Mathematics Subject Classification (2010).} #1}
\def\keywords#1{\par\medskip
\noindent\textbf{Keywords.} #1}
\definecolor{dullmagenta}{rgb}{0.4,0,0.4}   
\definecolor{darkblue}{rgb}{0,0,0.4}
\newcommand{\eq}[1]{\eqref{#1}}
\newtheorem{theorem}{Theorem}[section]
\newtheorem{proposition}[theorem]{Proposition}
\newtheorem{lemma}[theorem]{Lemma}
\newtheorem{corollary}[theorem]{Corollary}
\newtheorem{definition}[theorem]{Definition}
\newtheorem{remark}[theorem]{Remark}
\newtheorem*{induction}{Induction hypothesis}
\newtheorem*{remark*}{Remark}
\newtheorem*{remarks}{Remarks}
\numberwithin{equation}{section}
\DeclareMathOperator{\supp}{supp}
\DeclareMathOperator{\tr}{tr}
\DeclareMathOperator{\Ran}{Ran}
\DeclareMathOperator{\dist}{dist}
\DeclareMathOperator{\diam}{diam}
\newcommand{\pr}{\prime}
\newcommand\R{\mathbb R}
\newcommand\N{\mathbb N}
\newcommand\Z{\mathbb Z}
\newcommand\G{\mathbb{G}} 
\renewcommand\P{\mathbb P}
\newcommand\E{\mathbb E}
\renewcommand\L{\mathrm{L}}
\newcommand\bL{\boldsymbol{L}}
\newcommand\btau{\boldsymbol{\tau}}
\newcommand{\cD}{\mathcal{D}}
\newcommand{\cE}{\mathcal{E}}
\newcommand{\cG}{\mathcal{G}}
\newcommand{\cJ}{\mathcal{J}}
\newcommand{\cL}{\mathcal{L}}
\newcommand{\cS}{\mathcal{S}}
\newcommand\e{\mathrm{e}}
\newcommand\Chi{\raisebox{.2ex}{$\chi$}}
\newcommand{\abs}[1]{\left\lvert #1 \right\rvert}
\newcommand{\norm}[1]{\left\lVert #1 \right\rVert}
\newcommand{\scal}[1]{\left\langle #1 \right\rangle}
\newcommand{\set}[1]{\left\{ #1 \right\}}
\newcommand{\pa}[1]{\left( #1 \right)}
\newcommand{\br}[1]{\left[ #1 \right]}
\newcommand{\up}[1]{^{(#1)}}
\newcommand{\x}{\boldsymbol{x}}
\newcommand{\bolda}{\boldsymbol{a}}
\newcommand{\boldx}{\boldsymbol{x}}
\newcommand{\boldb}{\boldsymbol{b}}
\newcommand{\boldu}{\boldsymbol{u}}
\newcommand{\y}{\boldsymbol{y}}
\newcommand{\boldy}{\boldsymbol{y}}
\newcommand{\boldr}{\boldsymbol{r}}
\newcommand{\bolds}[1]{\boldsymbol{#1}}
\newcommand{\bom}{\boldsymbol{\omega}}
\newcommand{\blambda}{\boldsymbol{\Lambda}}
\newcommand{\blam}{\boldsymbol{\Lambda}}
\newcommand{\wlam}{\widetilde{\boldlambda}}
\newcommand{\wups}{\widetilde{\Upsilon}}
\newcommand{\ups}{\Upsilon}
\newcommand\eps{\varepsilon}
\newcommand\La{\Lambda}
\newcommand{\vphi}{\varphi}
\newcommand{\boxLx}{\Lambda_L(x)}
\newcommand{\boldlambda}{\mathbf{\Lambda}}
\newcommand{\NboxLx}{\mathbf{\Lambda}_{L}^{(N)}(\boldx)}
\newcommand{\NboxLu}{\mathbf{\Lambda}_{L}^{(N)}(\boldu)}
\newcommand{\Nboxlu}{\mathbf{\Lambda}_{\ell}^{(N)}(\boldu)}
\newcommand{\nboxLa}{\mathbf{\Lambda}_{L}^{(n)}(\bolda)}
\newcommand{\nboxLb}{\mathbf{\Lambda}_{L}^{(n)}(\boldb)}
\newcommand{\nboxLx}{\mathbf{\Lambda}_{L}^{(n)}(\boldx)}
\newcommand{\nboxLu}{\mathbf{\Lambda}_{L}^{(n)}(\boldu)}
\newcommand{\nboxx}{\mathbf{\Lambda}^{(n)}(\boldx)}
\newcommand{\nboxy}{\mathbf{\Lambda}^{(n)}(\boldy)}
\newcommand{\nboxlu}{\mathbf{\Lambda}_{\ell}^{(n)}(\boldu)}
\newcommand{\naboxLx}{ \boldlambda^{(n)}(\boldx) = \prod_{i = 1}^{n}  \,\, \Lambda_{L_i}(x_i)   }
\newcommand{\naboxLa}{ \boldlambda^{(n)}(\bolda) = \prod_{i = 1}^{n}  \,\, \Lambda_{L_i}(a_i)   }
\newcommand{\Rlamxy}{\Chi_{\x} R_{\blam}(E)\Chi_{\y}}
\newcommand{\Rlamab}{\Chi_{\bolda} R_{\blam}(E)\Chi_{\boldb}}
\newcommand{\Rlamxyz}{\Chi_{\x} R_{\blam}(z)\Chi_{\y}}
\newcommand{\Chix}{\Chi_{\x}}
\newcommand{\Chiy}{\Chi_{\y}}
\newcommand{\setn}{\bigl\{1, \, ... , \, n \bigr\}}
\newcommand{\Ndspace}{\mathbb{Z}^{Nd}}
\newcommand{\dspace}{\mathbb{Z}^{d}}
\newcommand{\wdel}{\widetilde{\delta}_{+}}
\newcommand{\Bl}{\Bigl}
\newcommand{\Br}{\Bigr}
\newcommand{\En}{E^{(n)}}
\newcommand{\EN}{E^{(N)}}
\newcommand\beq{\begin{equation}}
\newcommand\eeq{\end{equation}}
\newcommand{\qtx}[1]{\quad\text{#1}\quad}
\newcommand{\sqtx}[1]{\;\text{#1}\;}
\begin{document}

\titlerunning{Multi-particle continuous Anderson Hamiltonians}
\title{Bootstrap multiscale analysis  and localization for  multi-particle continuous Anderson Hamiltonians}

\author{Abel Klein\thanks{A.K. was  supported in part by the NSF under grant DMS-1301641.}
\and  
Son T. Nguyen}

\date{}
\maketitle

\address{A. Klein: {University of California, Irvine;
Department of Mathematics;
Irvine, CA 92697-3875,  USA.}
 \email{aklein@uci.edu}
\and
 S.T. Nguyen: {Mathematics Department;
University of Missouri;
Columbia, MO 65211 USA.}
  \email{sondgnguyen1@gmail.com}
}

\begin{abstract}
 We extend the bootstrap multiscale analysis developed by Germinet and Klein   to the multi-particle continuous Anderson Hamiltonian, obtaining  Anderson localization with finite multiplicity of eigenvalues,   decay of eigenfunction correlations, and a strong form of dynamical localization.  We do not require a covering condition.  The initial step for this multiscale analysis,   required to hold for   energies in a nontrivial interval at the bottom of the spectrum, is verified   for multi-particle continuous  Anderson Hamiltonians.  We also  extend the 
  unique continuation principle for spectral projections of Schr\" odinger operators  to arbitrary rectangles, and use it  to prove Wegner estimates for  multi-particle continuous Anderson Hamiltonians without the requirement of  a covering condition.

 \end{abstract}

\subjclass{Primary 82B44; Secondary 47B80, 60H25, 81Q10.}
\keywords{Multi-particle localization, multi-particle Anderson Hamiltonian, continuous Anderson Hamiltonian, multi-particle random Schr\"odinger operators, multiscale analysis, Anderson localization, dynamical localization, multi-particle Wegner estimates.}

 \tableofcontents

\section*{Introduction}
The multi-particle Anderson Hamiltonian is an  alloy-type   random 
Schr\"odinger operator describing  $n$ interacting electrons  moving in  a medium with random impurities. It is the continuous version of the (discrete) multi-particle Anderson model.

Localization was proved for  the multi-particle discrete  Anderson model   by Chulaevsky and Suhov \cite{CS1,CS2,CS3}, using a multiscale analysis,  and  Aizenman and Warzel \cite{AWmp}, using the fractional moment method. 
  Chulaevsky, Boutet de Monvel and Suhov \cite{CBS} extended the results of Chulaevsky and Suhov to the 
 multi-particle  continuous Anderson Hamiltonian, establishing Anderson and dynamical localization at the bottom of the spectrum.

 The bootstrap multiscale analysis,  developed in the one-particle case  by Germinet and Klein  \cite{GK1} (see also \cite{Kl}), is an enhanced multiscale analysis that yields  sub-exponentially decaying probabilities for `bad' events.  The  initial step  for the bootstrap  multiscale analysis only
requires the verification of polynomial decay of the finite volume 
resolvent, at some  sufficiently large  scale,
with probability bigger than some minimal probability   $1 -p_0  $, where $0< p_0 <1$ is independent of the scale. An important feature of the bootstrap multiscale analysis is that the final probability estimates are independent of the
probability estimate in  
 the initial step:  any desired sub-exponential decay for the
probabilities of  `bad' events can be achieved. The bootstrap multiscale analysis yields Anderson localization with finite multiplicity of eigenvalues,   decay of eigenfunction correlations, and a strong form of dynamical localization.

We  previously extended  the bootstrap multiscale analysis     to the multi-particle (discrete) Anderson model  \cite{KlN}.  The initial step for the bootstrap multiscale analysis of  \cite[Theorem~1.5]{KlN} has to hold for all energies in the spectrum (and hence for all energies); it can be verified  for the multi-particle Anderson model at high disorder, as discussed in \cite[Remark~1.6]{KlN}.

In this article we extend the bootstrap multiscale analysis (and its consequences)  to the multi-particle (continuous) Anderson Hamiltonian; we do not require a covering condition. The initial step is  only required to hold for  all energies in a nontrivial interval at the bottom of the spectrum (or equivalently, for all energies below some fixed energy).   We also show  that we always have  this initial step in some nontrivial interval at the bottom of the spectrum  for multi-particle  Anderson Hamiltonians.  The consequences to the bootstrap multiscale analysis include, in addition to Anderson and dynamical localization,  new results for multi-particle (continuous) Anderson Hamiltonians:
finite multiplicity of eigenvalues,   decay of eigenfunction correlations, and a strong form of dynamical localization (see Theorem~\ref{localizationH}).

Although the results in this paper are written for the continuous multi-particle  Anderson Hamiltonian, they also apply to the discrete multi-particle Anderson model, yielding localization at the bottom of the spectrum for the discrete model at any disorder.

The main definitions and results are stated in Section~\ref{secmain}. Theorem~\ref{localizationH} states that continuous multi-particle  Anderson Hamiltonians exhibit Anderson localization with finite multiplicity of eigenvalues,   decay of eigenfunction correlations, and a strong form of dynamical localization in an interval at the bottom of the spectrum.  
   Theorem~\ref{maintheorem} is the bootstrap multiscale analysis.  The consequences regarding  localization (Anderson localization with finite multiplicity of eigenvalues,   dynamical localization, decay of eigenfunction correlations)   are given in Corollary~\ref{localization}. In Section~\ref{secinitialMSA} we show that the hypotheses of Theorem~\ref{maintheorem} (the initial step for the bootstrap multiscale analysis) are always satisfied at some nontrivial  interval at the bottom of the spectrum. Section~\ref{sectoolkilt} contains a collection of technical results necessary for the multiscale analysis in the continuum.  The proof of Theorem~\ref{maintheorem}
is given in Section~\ref{secMSAproof}, and the derivation of Corollary~\ref{localization} is discussed in Section~\ref{secMSAloc}.

 In the multi-particle case events based on disjoint boxes are not necessarily independent, even if the boxes are far apart from each other.   This difficulty is overcome by  the use of the concepts of    partially and fully separated boxes (Subsection~\ref{secpfsep}) and  partially and fully interactive boxes (Subsection~\ref{secprint})  introduced by Chulaevsky and Suhov \cite{CS1,CS2,CS3}.  The relevant  distance between boxes is the Hausdorff distance (see \eq{Hausddist}), introduced in this context by  Aizenman and Warzel \cite{AWmp}. In the multiscale analysis partially interactive boxes are handled by the induction hypothesis, i.e., by the conclusions of Theorem \ref{maintheorem} for a smaller number of particles (see Lemmas~\ref{PIsuit} and \ref{PINSl}), and fully interactive boxes  are handled similarly to one particle boxes (see Lemma~\ref{part1prop2}).

The multiscale analysis requires Wegner estimates. Wegner estimates  were previously  proved for the $n$-particle discrete Anderson model 
 \cite{CS1,Ki2,KlN}.  In the continuum,  Wegner estimates for the  $n$-particle  Anderson Hamiltonian with a covering condition  were proved  in \cite{KZ,BCSS}, and without the covering condition in \cite{HK}.

The one-particle  energy interval multiscale analysis \cite{FMSS,vDK,GK1,Kl} requires a two-volume Wegner estimate, i.e.,  an estimate of the probability of the spectra of independent finite volume Hamiltonians   being close together.  Chulaevsky and Suhov \cite{CS1,CS2,CS3} realized that for $n$-particles this estimate is required for partially separated  finite volume Hamiltonians, that is, finite volume Hamiltonians on partially separated rectangles (here we need rectangles, not just boxes),
and proved such an estimate in the discrete case (see also \cite{KlN}).  In the continuum, such an estimate was proved  for the  $n$-particle  Anderson Hamiltonian  with a covering condition \cite{BCSS}. This two-volume Wegner estimate is now  proven  without the covering condition  in  \cite{HK} and in Corollary~\ref{Wegner2} below by somewhat different arguments. 

 Our definition of the  finite volume random potential (see \eq{rpn}), as well as our definition of fully and partially  separated rectangles (Definition~\ref{pfsep}), are slightly different than the ones used in \cite{BCSS,HK}. While \cite{BCSS,HK} take the finite volume random potential to be the restriction of the infinite volume random potential to  the $n$-particle rectangle,  our finite volume random potential  contains only random variables indexed by sites located in the faces of  the rectangle  (see \eq{rp1}).   We prove a Wegner estimate in Theorem~\ref{Wegner0} in which  the expectation is taken only with respect to the random variables indexed by one face of the rectangle (a  one-particle box).  In Corollary~\ref{Wegner2}  we derive from Theorem~\ref{Wegner0} a two-volume Wegner estimate  for partially separated rectangles as  in Definition~\ref{pfsep}. To do this, in Appendix~\ref{apUCP} we extend the results of  \cite{Kl2}, proving a unique continuation principle for spectral projections of Schr\" odinger operators on arbitrary rectangles.

\section{Main definitions and results}\label{secmain}

We start by defining the multi-particle  Anderson Hamiltonian.  We write   $\bolda = (a_1, \ldots, a_n) \in \R^{nd}\cong (\R^d)^n$, and  set
  $\norm{\bolda} := \max \{ \norm{a_1}, \ldots, \norm{a_n} \}$, where $\norm{x}=\norm{x}_\infty  := \max \{\abs{x_1}, \ldots, \abs{x_d} \}$ for $x = (x_1, \ldots, x_d) \in \R^{d}$.

\begin{definition}\label{defAndmodel}
Given $n\in \N$,  the $n$-particle Anderson Hamiltonian is  the  random Schr\"odinger 
operator on 
$L^{2}(\mathbb{R}^{nd})$ given by
\beq \label{AndH}
H_{\bom}^{(n)} :=H_{0,\bom}^{(n)} + U, \qtx{with} H_{0,\bom}^{(n)}: =  -\Delta^{(n)} + V_{\bom}^{(n)} ,
\eeq
where:
\begin{enumerate}
\item 
$\Delta^{(n)}$ is the $nd$-dimensional  Laplacian operator.

\item 
$V_{\bom}^{(n)}$ is the random potential  given by ($\x =(x_1, ..., x_n) \in \mathbb{R}^{nd}$)
\begin{align}
V_{\bom}^{(n)}(\x)= \sum_{i = 1, ..., n} V_{\bom}^{(1)}(x_i),\qtx{with} V_{\bom}^{(1)}(x)= \sum_{k \in \mathbb{Z}^{d}}  \omega_{k}  \, u(x-k),
\end{align}  
where
\begin{enumerate}
\item 
$\bom=\{ \omega_k \}_{k\in
\Z^d}$ is a family of independent 
identically distributed random
variables  whose  common probability 
distribution $\mu$ has a bounded density $\rho$ and satisfies  $\set{0,M_+}\subset \supp \mu \subseteq 
[0, M_{+}]$ for some $M_{+} > 0$;

\item 
the single site potential $u$ is a measurable function on $\R^d$
 with 
 \begin{equation} \label{u}
u_{-}\Chi_{\Lambda_{\delta_{-}}(0)}\le u \le \Chi_{\Lambda_{\delta_{+}}(0)}\quad \text{for some constants $u_{-}, \delta_{\pm}\in (0,\infty)$}.
\end{equation}
($\Lambda_{\delta_{\pm}}(0)= (-\frac {\delta_{\pm}} 2,\frac {\delta_{\pm}} 2)^d$. We take  $u\le 1$ without loss of generality.)

\end{enumerate}

\item
$U$ is a potential governing the finite range interaction between the $n$ particles.  We take 
\beq\label{2body}
U(\x) = \sum_{1 \leq i < j \leq n } \widetilde{U}(x_i - x_j) ,
\eeq
where $\widetilde{U}\colon \R^d  \to [0, \infty)$ is a bounded measurable function, $\widetilde{U}(y)= \widetilde{U}(-y)$,  with
 $\widetilde{U}(y) = 0$ for $\norm{y}> r_0$ for some $0<r_0 < \infty$.
\end{enumerate}
\end{definition}

\begin{remarks} (i)  The results of this paper are valid if we only assume that the probability measure $\mu$ is uniformly H\"older continuous, i.e., there exist constants $C<\infty$ and $\alpha \in (0,1]$ such  $\mu([a, a+t])\le C t^\alpha$ for all $a\in \R$ and $t\ge 0$.  We assumed that $\mu$ has a bounded density (i.e.,  $\mu$ is uniformly H\"older continuous with $\alpha=1$) for simplicity. 

\noindent{(ii)} We took $U$ to be a two-particle interaction potential  as in \eq{2body} for simplicity.
Our results hold for  nonnegative  bounded finite range   $n$-particle interaction potentials.
\end{remarks}

The $n$-particle Anderson Hamiltonian $H_{\bom}^{(n)}$ is a $\Z^d$-ergodic random Schr\"odinger operator on $L^{2}(\mathbb{R}^{nd})$.  Here   $\Z^d$ acts on $\mathbb{R}^{nd}$  by 
$$(x_1, x_2 \ldots, x_n)\in \mathbb{R}^{nd} \to (x_1 + a, x_2 +a,\ldots, x_n +a)\in \mathbb{R}^{nd}\qtx{for} a \in \Z^d.$$ It follows   (see \cite[Proposition~V.2.4]{CL})
that there exists fixed subsets $\Sigma^{(n)}$,  $\Sigma_{\mathrm{pp}}^{(n)}$, $\Sigma_{\mathrm{ac}}^{(n)}$  and $\Sigma_{\mathrm{sc}}^{(n)}$ of $\R$ so that the spectrum $\sigma(H_{\bom}^{(n)})$
of $H_{\bom}^{(n)}$,  as well as its pure point, 
absolutely continuous, and singular continuous  components,
are equal to these fixed sets with probability one.

Note that $H_{\bom}^{(1)}=H_{0,\bom}^{(1)}$, and it is well known that    $\Sigma^{(1)} =[0,\infty)$  (e.g., \cite{KiM}). It follows, letting $\Sigma_0^{(n)}$ denote the almost sure spectrum of $H_{0,\bom}^{(n)}$, that
\beq\label{Sigma0n}
\Sigma_0^{(n)}= \overline{\Sigma^{(1)}+\ldots + \Sigma^{(1)}}=[0,\infty).
\eeq
In Appendix~\ref{apbottom} we show that  we also have 
\beq\label{spctn}
\Sigma^{(n)}=[0,\infty).
\eeq

\textbf{We now fix a  multi-particle Anderson Hamiltonian $H_{\bom}^{(n)}$, $n\in \N$.}\smallskip

We use the following  definitions and notation:
\begin{enumerate}
\item  Given $\bolda = (a_1, \ldots, a_n) \in \R^{nd}$, we let
  $\scal{\bolda} :=(1 + \norm{\bolda}^2)^{\frac 12}$, $\diam \bolda := \max_{i, \, j = 1, \dots, n} \norm{a_i - a_j}$,  and $\cS_{ \bolda} = \bigl \{a_1, \,...,\, a_n   \bigr\}$.

\item Given $\bolda, \boldb  \in \R^{nd}$, we set $d_{H} ( \bolda, \, \boldb):= d_{H} ( \cS_{ \bolda}, \, \cS_{ \boldb})$, 
where  $d_{H}(S_1, \,S_2)$ denotes the     Hausdorff distance between  finite subsets   $S_1, \, S_2 \subseteq \R^{d}$,  given by
\begin{align}\label{Hausddist}
d_{H}(S_1, \,S_2)& := \max \Bl\{ \max_{x \in S_1} \, \min_{y \in S_2} \norm{x - y} \, , \,   \max_{y \in S_2} \, \min_{x \in S_1} \norm{x - y}  \Br\}\\
&= \max \Bl\{ \max_{x \in S_1} \, \dist(x, \,\,S_2) \, , \,   \max_{y \in S_2} \, \dist(y, \,\,S_1)  \Br\}.
\notag
\end{align}
It follows from the definition that (see \cite{AWmp})
\beq \label{dHdist}
d_{H}( \bolda, \, \boldb )  \leq \norm{\bolda - \boldb} \leq d_{H}( \bolda, \, \boldb )  + \diam\bolda \qtx{for} \bolda, \, \boldb \in \R^{nd} .
\eeq

 \item We fix $\nu_n > \tfrac{nd}{2} $ and let $T_n$ be the operator on $L^{2} \pa{\R^{nd}}$ given by multiplication of the function $\scal{\x} ^{\nu_n}$, where $\scal{\x} = (1 + \norm{\x}^2)^{\frac 12} $.

\item We set $\Chi_{\boldx} = \Chi_{\set{\y \in \R^{nd}; \norm{\y -\x} < \frac 12}}$  for $\boldx \in \R^{nd}$.
  
\end{enumerate}

We prove localization for the  multi-particle  Anderson Hamiltonian $H_{\bom}^{(n)}$, $n\in \N$, as follows.  (Note that $ \Chi_{[0, {E_N})} (H_{\bom}\up{N})=  \Chi_{(-\infty, {E_N})} (H_{\bom}\up{N})$ since $H_{\bom}\up{N}\ge 0$.)

\begin{theorem}\label{localizationH}  Given  $N \in \N$,  there exists an energy  ${E_N}> 0$ such that:

\begin{enumerate}

\item  The following holds with probability one:
\begin{enumerate}
\item \emph{(Anderson Localization)} $H_{\bom}^{(N)}$ has pure point spectrum in the interval $[0, {E_N})$.
Moreover,  there exists $M=M_N>0$ such that
for all $E\in [0, {E_N})$ and  
 $\psi \in  \Chi_{\set{E}} (H_{\bom}^N)$   we have 
\begin{equation}\label{expdecayIntro}
 \|\Chi_{\x} \psi\| \le C_{\bom,E}\norm{T_N^{-1} \psi}\, e^{- M \norm{\x}} \qquad \text{for all}\quad  \x \in \R^{Nd}.
\end{equation}
In particular,  each   {eigenfunction} {$\psi$}   of  $H_{\bom}^{(N)}$
with {eigenvalue}  $E\in [0, {E_N})$ is exponentially localized  with the non-random rate of decay $ M>0$.

\item \emph{(Finite multiplicity of eigenvalues)} The eigenvalues of $H_{\bom}^{(N)}$ in $[0, {E_N})$ have finite multiplicity:  
 \beq
 \tr  \Chi_{\set{E}} (H_{\bom}^{(N)}) <\infty  \qtx{for all} E \in [0, {E_N}).
\eeq

\item \emph{(Summable Uniform Decay of Eigenfunction Correlations (SUDEC))}  For every $\zeta \in \pa{0 , 1}$ there exists a constant       $C_{\bom, \zeta}$  such that for every $E \in [0, {E_N})$ and $\phi, \psi \in \Ran  \Chi_{\set{E}} (H_{\bom}^{(N)})$ we have
\beq
\norm{\Chi_{\x} \phi } \norm{\Chi_{\y} \psi} \leq{ C_{\bom, \zeta}  \norm{T_N^{-1} \phi} \norm{T_N^{-1} \psi} \scal{\x}^{2\nu}   }{  e^{-\pa{d_{H} \pa{\x, \y}}^{\zeta} }  } 
\eeq
for all  $\x, \y \in \R^{Nd}$.
\end{enumerate}

\item \emph{(Dynamical Localization)}  For every $\zeta \in \pa{0, 1}$ and  $\y \in \R^{Nd}$ there exists a constant $C_\zeta(\y)$ such that
\beq
\E \set{\sup_{\abs{g} \leq 1}   \norm{\Chi_{\x}   \Chi_{[0, {E_N})} (H_{\bom}^{(N)}) g( H_{\bom}^{(N)})   \Chi_{\y}  }    } \leq C_\zeta(\y) e^{-\pa{d_{H} \pa{ \x, \y  }}^{\zeta}} 
\eeq
for all  $\x, \y \in \R^{Nd}$,
 the supremum being taken over  Borel functions $g$ on $\R$ with  $ \sup_{t \in \R} \abs{g(t)}\le 1 $.
 In particular, we have
\beq
\E \set{ \sup_{t \in \R}   \norm{\Chi_{\x}  \Chi_{[0, {E_N})} (H_{\bom}^{(N)}) e^{itH_{\bom}^{(N)}} \Chi_{\y}   }  } \leq C_\zeta(\y)e^{-\pa{d_{H} \pa{ \x, \y  }}^{\zeta}}\eeq
for all  $\x \in \R^{Nd}$.

\end{enumerate}

\end{theorem}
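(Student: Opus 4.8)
The plan is to obtain Theorem~\ref{localizationH} from the bootstrap multiscale analysis of Theorem~\ref{maintheorem} together with its localization consequences (Corollary~\ref{localization}), by an induction on the number of particles $n \le N$. Recall that running the bootstrap multiscale analysis for $H_{\bom}^{(n)}$ on an energy interval $\cI$ requires exactly two ingredients: a Wegner estimate valid on $\cI$, including the two-volume Wegner estimate for partially separated rectangles, and the initial-length-scale estimate on $\cI$, i.e.\ polynomial decay of the finite-volume resolvent at one sufficiently large scale with probability at least $1-p_0$ for some fixed $p_0 \in (0,1)$. For our model the Wegner estimates hold on all of $\R$ by Theorem~\ref{Wegner0} and Corollary~\ref{Wegner2} (which in turn rest on the unique continuation principle of Appendix~\ref{apUCP}), so the only remaining model-dependent input is the initial step, and only on a nontrivial interval at the bottom of the spectrum.

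Next I would verify the initial step for each $n$. For $n=1$, $H_{\bom}^{(1)}=H_{0,\bom}^{(1)}$ is the one-particle continuous Anderson Hamiltonian and the initial estimate at the bottom of the spectrum, without a covering condition, is the $n=1$ instance of the result of Section~\ref{secinitialMSA}. For $n\ge 2$ the mechanism I would use is: with probability at least $1-p_0$, the restriction of $H_{\bom}^{(n)}$ to a box of a fixed large side $L_0$ has spectrum bounded below by a scale-independent $E_*>0$. One arranges $V_{\bom}^{(n)} \ge \eta_0$ on a union $W$ of small cubes placed so that every subcube of a (slowly growing) side $G$ of the box contains one of them; since $\set{0,M_+}\subset\supp\mu$ gives $\P\{\omega_0 \ge \eta_0/u_-\}>0$, a Bernoulli/large-deviation count shows this event has probability at least $1-p_0$. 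The scale-free quantitative unique continuation principle (Appendix~\ref{apUCP}) then upgrades $V_{\bom}^{(n)}\ge\eta_0$ on $W$ to a genuine spectral lower bound $\inf\sigma \ge E_*$, and a Combes--Thomas estimate at energies below $E_*/2$ yields the required (in fact sub-exponential in $L_0$) off-diagonal decay of the resolvent. One must check that the degradation of the unique-continuation constant as $G$ grows is slow enough that the Combes--Thomas rate still beats polynomial decay at scale $L_0$.

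With the initial step in hand, I would close the induction. Assume the conclusions of Corollary~\ref{localization} hold for all $n'<n$ on intervals $[0,E_{n'})$ with $E_{n'}>0$; together with the initial step for $n$ particles on $[0,\widetilde E_n)$ and the Wegner estimates, Theorem~\ref{maintheorem} applies to $H_{\bom}^{(n)}$: partially interactive boxes are handled by the induction hypothesis (Lemmas~\ref{PIsuit} and \ref{PINSl}) and fully interactive boxes as in the one-particle case (Lemma~\ref{part1prop2}), producing the bootstrap multiscale analysis conclusions on an interval $[0,E_n)$, where $E_n>0$ may be smaller than $\widetilde E_n$ and than $\min_{n'<n}E_{n'}$ because controlling partially interactive boxes forces the smaller-particle-number localization interval to contain a neighborhood of the energies under consideration. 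Then Corollary~\ref{localization} translates these conclusions into the four statements of Theorem~\ref{localizationH}: the exponential eigenfunction-decay bound \eq{expdecayIntro} with non-random rate $M_N$, finite multiplicity of eigenvalues, SUDEC, and the strong expectation form of dynamical localization, all with the Hausdorff distance $d_H$ in the sub-exponential exponent. Taking $n=N$ gives the theorem, with $E_N:=\min\{E_1,\dots,E_N\}>0$ (equivalently, any positive energy below this).

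The main obstacle is the initial step without a covering condition. Absent covering, the single-site potentials need not sum to a strictly positive function, so a spectral gap above $0$ for finite-volume operators cannot be extracted deterministically; it must be produced probabilistically, and the only tool available to convert the bound $V_{\bom}^{(n)}\ge\eta_0$ on a sparse set $W$ into a true spectral lower bound is the quantitative unique continuation principle --- needed here on rectangles, which is precisely why Appendix~\ref{apUCP} extends \cite{Kl2}. A secondary difficulty is bookkeeping the particle-number induction so that the final interval $[0,E_N)$ stays nontrivial; this is exactly why the result is confined to the bottom of the spectrum.
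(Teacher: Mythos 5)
Your overall architecture is exactly the paper's: the theorem is obtained by combining the initial-step verification at the bottom of the spectrum (Theorem~\ref{smallcl}), the bootstrap multiscale analysis (Theorem~\ref{maintheorem}, whose induction on particle number already incorporates the handling of partially interactive boxes via the smaller-particle-number conclusions and the two-volume Wegner estimate of Corollary~\ref{Wegner2}), and Corollary~\ref{localization}; the paper states the deduction in precisely this form. The bookkeeping you describe (the fewer-particle intervals must be larger, whence the factors $\En=2^{N-n}E\up{N}$, and the final $E_N$ is determined by the scale $L_0$ at which the initial step is verified) is also consistent with the paper, up to the minor point that the threshold energy $E_{L_0}\up{n}$ of \eq{EL} depends (logarithmically) on $L_0$ rather than being scale-independent --- which is harmless since the initial step is needed at a single scale.

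The genuine issue is your proof of the initial step for $n\ge 2$. The paper does \emph{not} produce the spectral lower bound by a sparse-net positivity argument plus unique continuation on the $nd$-dimensional rectangle. Instead, Theorem~\ref{smallcl} exploits the tensor-sum structure \eq{specsumt} of the non-interacting finite-volume operator $H^{(n)}_{0,\bom,\blam}$ (a consequence of the definition \eq{rpn} of the finite-volume random potential as a sum over the $n$ coordinates): on the intersection $\Omega_{L,\x}=\bigcap_i\Omega_{L,x_i}$ of $n$ one-particle events, each of probability $\ge 1-p_0/n$ by the one-particle Lifshitz-tail bound quoted from \cite{GKber}, one gets $\inf\sigma(H^{(n)}_{0,\nboxLx})\ge 2nE^\pr_L$, and $U\ge 0$ then gives the same bound for $H^{(n)}_{\nboxLx}$; Combes--Thomas (Lemma~\ref{cthomas}) finishes. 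No $n$-particle unique continuation principle enters the initial step at all (Appendix~\ref{apUCP} is used only for the Wegner estimates). Your proposed route is both unnecessary and quantitatively problematic: Theorem~\ref{thmUCPSP} requires one $\delta$-ball per \emph{unit} cube, and a version with spacing $G$ growing with $L_0$ degrades (after rescaling) like $(\delta/G)^{C(1+K^{2/3})G^{4/3}}$, so the resulting lower bound $E_*$ is far from scale-independent; whether $e^{-c\sqrt{E_*}L_0}\le L_0^{-\theta}$ survives with $G\sim(\log L_0)^{1/d}$ is exactly the check you leave open, and for $d=1$ the exponent $G^{4/3}\sim(\log L_0)^{4/3}$ overwhelms $\log L_0$, so the argument as sketched fails there. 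You should replace this step by the reduction to the one-particle case via \eq{specsumt}. (Also, the Wegner estimates of Theorem~\ref{Wegner0} hold on intervals $I\subseteq[0,E_+)$ with $E_+$-dependent constants, not uniformly on all of $\R$; this does not affect the argument.)
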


\begin{remark*}
SUDEC (Summable Uniform Decay of Eigenfunction Correlations) is equivalent to SULE (Semi Uniformly Localized Eigenfunctions); see  \cite[Remark~3]{GKjsp}.
\end{remark*}

The theorem is proved by a bootstrap multiscale analysis, a statement about  finite volume multi-particle Anderson Hamiltonians. Our finite volumes will be boxes and rectangles, defined as follows:
\begin{enumerate}

\item 
The one-particle box  centered at $x\in\R^d$ with side of length $L >0$ is $\Lambda_L(x)=\set{y \in \R^d; \; \norm{y-x} <  \tfrac{L}{2}}$.  We set $ \widehat{\Lambda} = \Lambda \cap \dspace$.

\item The  $n$-particle box centered at $\boldx\in \R^{nd}$  with side length  $L>0$ is 
   \beq \notag
\nboxLx= \set{\y \in \R^{nd}; \norm{\y-\x} <  \tfrac{L}{2}}= \prod_{i = 1}^{n}   \Lambda_L(x_i);
\eeq
note that $\blam\up{1}_L(x)= \Lambda_L(x)$.  By a box $\boldlambda_L$ in $\R^{nd}$ we mean an $n$-particle box  $\nboxLx$ for some $\boldx \in \R^{nd}$. 
Note that  $\Chi_{\boldx} = \Chi_{\blam_{1}\pa{\boldx}}$  for $\boldx \in \R^{nd}$.

\item  We also define  $n$-particle  rectangles in $\R^{nd}$ centered at points  $\x \in \R^{nd}$:  
  \[ 
 \naboxLx,  \qtx{where}  L_1,L_2,\ldots, L_n>0.
   \]  
(We mostly use $n$-particle boxes, but in a few places we will need $n$-particle  rectangles.)
 
\end{enumerate}

\begin{definition} \label{deffinvol} Given an $n$-particle rectangle $\blam =\naboxLa$, we define the corresponding finite volume Anderson Hamiltonian $H_{\bom, \blam}^{(n)}$  on  $ \L^2(\blam)$ by
\beq 
H_{\bom, \blam}^{(n)} :=H_{0,\bom, \blam}^{(n)} + U_{\blam}, \qtx{with}   H_{0,\bom, \blam}^{(n)}:=-\Delta_{\blam}^{(n)} + V_{\bom, \blam}^{(n)}  ,
\eeq
where $\Delta_{\blam}^{(n)}$ is the Laplacian on $\blam$ with Dirichlet boundary condition, $U_{\blam}$ is the restriction of $U$ to $\blam$, and 
\beq\label{rpn}
V_{\bom , \blam}^{(n)} (\x) = \sum_{i = 1}^{n} V_{\bom, \Lambda_{L_i}(a_i)}^{(1)} (x_i) \qtx{for} \boldx\in \blam ,
\eeq
where $V_{\bom , \Lambda} ^{(1)} $ is defined for a one-particle box $\Lambda \subseteq \R^{d}$ by 
\beq\label{rp1}
V_{\bom , \Lambda} ^{(1)} (x) = \sum_{k \in \widehat{\Lambda}} \omega_{k} \; u(x-k)\qtx{for} x\in \La.
\eeq
We set 
\beq  
R_{\bom, \blam}^{(n)}(z)= (H_{\bom, \blam}^{(n)}  -z)^{-1} \qtx{for} z \notin \sigma \left( H_{\bom, \blam}^{(n)}   \right).
\eeq
 \end{definition}

  Note that  $H_{\bom,\La}^{(1)}=H_{0,\bom,\La}^{(1)}$ and  we have (cf. \eq{Sigma0n}) 
\beq\label{specsumt}
\sigma(H_{0,\bom, \blam}^{(n)})= { \sigma(H^{(1)}_{\bom, \Lambda_{L_1}(a_1)}) + \cdots + \sigma(H^{(1)}_{\bom, \Lambda_{L_n}(a_n)})}.
\eeq
We will often omit the dependency on $n$ from the notation, where it is clear,  and just write  $H_{\bom, \blam} $ for $H_{\bom, \blam}^{(n)} $ and $R_{\bom, \blam}(z)$ for $R_{\bom, \blam}^{(n)}(z)$.

The bootstrap multiscale analysis  uses  three types of good boxes,  defined for a fixed   $\bom$ (omitted from the notation).

\begin{definition} Let $\boldlambda=\nboxLx$ be an $n$-particle box  and let $E\in \R$. 
Let $\theta > 0$, $\zeta \in (0,1)$, and $m>0$.  Then:
\begin{enumerate}
\item 
The  $n$-particle box $\boldlambda$ is  $(\theta, E)$-suitable if, and only if, 
$E \notin \sigma \Bl(H_{\boldlambda}  \Br)$  and
\beq
\norm{\Rlamab} \leq L^{-\theta} \qtx{for all}\bolda, \boldb \in \boldlambda \qtx{with}  \norm{\bolda - \boldb} \geq \tfrac{L}{100} .
\eeq
Otherwise, $\boldlambda$ is called $(\theta, E)$-nonsuitable.

\item The  $n$-particle box $\boldlambda$ is  $(\zeta, E)$-subexponentially suitable (SES) if, and only if, 
$E \notin \sigma \Bl(H_{\boldlambda}  \Br)$  and
\beq
 \norm{\Rlamab}  \leq e^{-L^{\zeta}}
\qtx{for all}\bolda, \boldb \in \boldlambda \qtx{with}  \norm{\bolda - \boldb} \geq \tfrac{L}{100} .
\eeq
Otherwise,    $\boldlambda$   is called   $(\zeta, E)$-nonsubexponentially suitable (nonSES).

\item  The  $n$-particle box $\boldlambda$ is   $(m, E)$-regular if, and only if, 
$E \notin \sigma \Bl(H_{\boldlambda}  \Br)$  and
\beq
 \norm{\Rlamab}  \leq e^{-m\norm{\bolda - \boldb}}
\;\; \text{for all} \;\;\bolda, \boldb \in \boldlambda \;\; \text{with} \; \norm{\bolda - \boldb} \geq \tfrac{L}{100} .
\eeq
Otherwise, $\boldlambda$ is called $(m, E)$-nonregular.

\end{enumerate}
\end{definition}

\begin{remark}  \label{goodbox}
The different types of good boxes are related:
\begin{enumerate}
\item $\nboxLx$  $(m, E)$-regular  \ $\Longrightarrow$  \ $\nboxLx$  
$\left(\tfrac{  mL  }{ 100 \log L  }, E\right)$-suitable.
\item $\nboxLx$  $(\theta, E)$-suitable  \ $\Longrightarrow$  \ $\nboxLx$ \,  $\left(\tfrac{\theta \, \log L}{L}, E\right)$-regular.
\item $\nboxLx$ $\Bl(L^{\zeta-1}, E \Br)$-regular  \ $\Longrightarrow$  \ $\nboxLx$    $\left(\zeta-\tfrac{\log 100}{\log L}, E\right)$-SES.
\item 
$\nboxLx$   $\bigl(\zeta, E \bigr)$-SES  \ $\Longrightarrow$  \ $\nboxLx$   $(L^{\zeta-1}, E)$-regular.
\end{enumerate}
\end{remark}

Our main technical result
 extends the bootstrap multiscale analysis of Germinet and Klein \cite{GK1} (see also \cite{Kl}) to the multi-particle Anderson Hamiltonian.

\begin{theorem}[Bootstrap multiscale analysis]  \label{maintheorem}
 There exist  $p_{0} (n)=p_0(d,n)>0$, $n=1,2,\ldots$, such that, 
 for every   $N \in \N$,  given $\theta > 8Nd$ and  an energy  $E\up{N}> 0$,
there exists ${\cL}=\cL(d,\norm{\rho}_\infty, N, \theta, E\up{N})$,  such that if for some $L_0\ge \cL$ and all  $n=1,2,\ldots,N$ we have 
\beq \label{condpn}
\sup_{\x \in \R ^{nd}} \P  \Bigl\{ \blam_{L_0}^{(n)} (\x) \;\; \text{is} \;\; (\theta,\,E)\text{-nonsuitable} \Bigr\} \leq p_{0}(n)\sqtx{for all} E \le  \En:=2^{N-n}E\up{N},
\eeq
then, given $0< \zeta  <1$, we can find a length scale $L_{\zeta} = L_\zeta(d,\norm{\rho}_\infty, N, \theta,E\up{N},L_0)$,  $\delta_\zeta = \delta_{\zeta}(d,\norm{\rho}_\infty, N, \theta,E\up{N},L_0)>0$, and $m_{\zeta} = m_{\zeta}(\delta_\zeta,L_{\zeta}) > 0$,  so that the following holds for $n = 1, 2, ..., N$:

\begin{enumerate}
\item For every $E \le  \En$,  $L \geq L_{\zeta}$, and $\bolda \in \R^{nd}$,   we have 
\begin{align} 
\P \Bigl\{   \nboxLa  \;\; \text{is} \;\;  \left ({m_\zeta}, \,E \right )\text{-nonregular} \Bigr\}    \leq e^{-L^{\zeta}}.
\end{align}
\item Given  $E_1 < \En$, set  $I(E_1)=[E_1-\delta_{\zeta}, E_1+\delta_{\zeta}]\cap (-\infty, \En]$.  Then, for  every $E_1 < \En$,  $L \geq L_{\zeta}$, and   $\bolda,\boldb \in \R^{nd}$
  with $d_{H} ( \bolda, \, \boldb) \ge L$,  we have  
\begin{align} \label{concmsa}
\P \Bigl\{ \exists \, E \in I(E_1)\;\,\text{so} \;\,  \nboxLa \;\, \text{and} \;\,  \nboxLb  \;\, \text{are} \;\,  \left ({m_\zeta}, \,E \right )\text{-nonregular} \Bigr\} \leq e^{-L^{\zeta}}. 
\end{align}
\end{enumerate}
\end{theorem}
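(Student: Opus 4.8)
The plan is to run a double induction: an outer induction on the number of particles $n=1,2,\ldots,N$, and, for each fixed $n$, an inner multiscale induction over a geometrically growing sequence of scales $L_{k+1}=\lfloor L_k^{\alpha}\rfloor$ with $\alpha$ slightly larger than $1$. For $n=1$ the statement is, after the adaptations of Section~\ref{sectoolkilt} (Dirichlet boxes, no covering condition, the one-face Wegner estimate of Theorem~\ref{Wegner0}), the one-particle bootstrap multiscale analysis of Germinet--Klein \cite{GK1}. The outer induction is what makes the multi-particle step work: when analysing an $n$-particle box one encounters sub-boxes that are \emph{partially interactive}, i.e.\ that factor, after splitting the particle coordinates into two clusters, into a product of boxes carrying strictly fewer than $n$ particles each; the resolvent on such a box is controlled by the conclusions of the theorem for those smaller particle numbers, i.e.\ by the outer induction hypothesis applied at energies $\le\En$ (this is the content of Lemmas~\ref{PIsuit} and \ref{PINSl}, and the choice $\En=2^{N-n}\EN$ together with \eqref{specsumt} guarantees that a $j$-particle sub-cluster, $j<n$, only sees energies below its own threshold $E^{(j)}=2^{N-j}\EN$). \emph{Fully interactive} boxes, which do not so factor, have all their particles confined to a cluster of bounded diameter, so their admissible centres lie in a region of reduced dimension, and they are treated essentially as one-particle boxes (Lemma~\ref{part1prop2}). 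The deterministic backbone throughout is the geometric resolvent inequality and the eigenfunction decay inequality in the continuum: if every scale-$\ell$ sub-box meeting a suitable shell of a scale-$L$ box $\blam$ is good and $E\notin\sigma(H_\blam)$, then $\blam$ is good at scale $L$ with an appropriately reconstructed rate.

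Fix $n$, assume the theorem for $1,\ldots,n-1$, and carry the inner induction through the four bootstrap regimes. Stage~I~$\to$~II: from hypothesis \eqref{condpn}, that at the single large scale $L_0$ the probability of a non-$(\theta,E)$-suitable box is at most $p_0(n)$, a single-energy multiscale analysis in which one only needs \emph{one} good sub-box among many (so that a bad scale-$L$ box forces either a Wegner resonance or many ``independent'' bad sub-boxes) upgrades this, scale by scale, to a single-energy estimate with sub-exponentially small probability and sub-exponential resolvent decay, i.e.\ to control of $(\zeta,E)$-nonSES boxes. Stage~II~$\to$~III refines the probability so that a bad scale-$\ell$ box becomes an admissible input at negligible cost. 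Stage~III~$\to$~IV is the energy-interval multiscale analysis, where one must control all $E$ in a short window $I(E_1)$ simultaneously; this demands a \emph{two-volume} estimate on the probability that two far-apart boxes are simultaneously non-regular for some common $E$. Here the separation dichotomy enters probabilistically: for \emph{fully separated} pairs the two finite-volume random potentials are independent, so a product bound closes the estimate; for \emph{partially separated} pairs one invokes the two-volume Wegner estimate for partially separated rectangles, Corollary~\ref{Wegner2} (note that sub-configurations of a box are rectangles, not boxes, which is why rectangles are unavoidable). Running Stage~IV yields conclusion~(ii); conclusion~(i) is the single-energy output of Stage~III, converted from SES to $(m_\zeta,E)$-regular via Remark~\ref{goodbox}. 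At each stage one checks that the Wegner volume factor and the combinatorial count of sub-box configurations are beaten by the decay gained, which is what forces the quantitative threshold $\theta>8Nd$, uniform over $n\le N$.

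The main obstacle is precisely the failure of independence for disjoint multi-particle boxes, and the architecture above is engineered around it: the particle-number induction absorbs partially interactive boxes; the Hausdorff-distance separation dichotomy supplies either genuine independence (fully separated) or a Wegner-type input (partially separated) for the remaining pairs; and, crucially in the continuum and without a covering condition, the two-volume Wegner estimate on rectangles is obtained by deriving Corollary~\ref{Wegner2} from Theorem~\ref{Wegner0} and the unique continuation principle for spectral projections on arbitrary rectangles proved in Appendix~\ref{apUCP}. A secondary burden is bookkeeping: tracking how $\cL$, the rate $m_\zeta$, and the window half-width $\delta_\zeta$ depend on $(d,\norm{\rho}_\infty,N,\theta,\EN,L_0)$ as one threads the four stages and the $n$-fold particle induction, and verifying that the accumulated losses never exhaust the sub-exponential probabilities or the positive decay rates — this is where the independence of the final estimates from $p_0$ is seen, since $p_0(n)$ only needs to be small enough to ignite Stage~I.
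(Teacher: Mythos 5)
Your overall architecture --- induction on the particle number, with Lemmas~\ref{PIsuit} and \ref{PINSl} absorbing partially interactive boxes, Lemma~\ref{part1prop2} reducing far-apart fully interactive pairs to independent (fully separated) ones, Corollary~\ref{Wegner2} supplying the two-volume Wegner input for partially separated rectangles, and a four-stage bootstrap --- is the paper's proof. But your derivation of conclusion (i) does not work as stated. You propose to obtain (i) by taking the single-energy output of the third multiscale analysis (subexponential probability of being $(\zeta_0,E)$-nonSES) and converting SES to regularity via Remark~\ref{goodbox}. Remark~\ref{goodbox}(iv) converts $(\zeta,E)$-SES at scale $L$ into $(L^{\zeta-1},E)$-regularity, a mass that tends to $0$ as $L\to\infty$, whereas conclusion (i) asserts a \emph{fixed} rate $m_\zeta>0$ valid for all $L\ge L_\zeta$. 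To recover the fixed mass one must run the single-energy fourth multiscale analysis (Proposition~\ref{part4mainthm1}): it takes as input the Stage III estimate at one large initial scale $L_0$, where $m_0\ge L_0^{\zeta_0-1}$ is a genuine positive constant, propagates $\left(\tfrac{m_0}{2},E\right)$-regularity to all larger scales in the sequence with probability $e^{-L^{\zeta_2}}$, and then Lemma~\ref{part1thm} fills in the intermediate scales. Your Stage IV handles only the energy-interval statement (ii); the single-energy counterpart is not optional.

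A smaller inaccuracy: the intermediate stages are shuffled. In the paper, Stage I upgrades the probability of $(\theta,E)$-nonsuitability from $p_0$ to polynomial decay $L_k^{-p}$ along multiplicative scales $L_{k+1}=YL_k$; Stage II converts that into polynomial-probability $(m,E)$-regularity along $L_{k+1}=L_k^{\gamma}$; only Stage III produces subexponential probabilities (for SES). Your ``Stage I $\to$ II'' compresses all three into one step, which obscures where $\theta>8Nd$ and the smallness condition $p_0(N)<\tfrac12(2Y)^{-Nd}$ are actually used (in the combinatorics of Stage I, where at most $J$ pairwise $\ell$-distant bad sub-boxes are tolerated). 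The stage bookkeeping is repairable, but the missing single-energy Stage IV is a real gap in the argument for (i).
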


Theorem~\ref{smallcl} shows that the hypotheses of Theorem~\ref{maintheorem} are always satisfied at some nontrivial  interval at the the bottom of the spectrum.

\begin{corollary} [Localization] \label{localization}  Given  $N \in \N$,  an energy  $E\up{N}> 0$, and an  open 
 interval $I\subseteq (-\infty, E\up{N})$,
suppose that the conclusions of Theorem \ref{maintheorem} hold for all energies $E \in I$. Then
 the conclusions of Theorem~\ref{localizationH}  hold on the interval $I$ (i.e., with $I$ substituted for the interval $[0,E_N)$ in Theorem~\ref{localizationH}).
\end{corollary}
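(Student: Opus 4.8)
The plan is to deduce the conclusions of Theorem~\ref{localizationH} on $I$ from the single-box estimate (i) and the two-box estimate (ii) of Theorem~\ref{maintheorem}, following the derivation of localization from the bootstrap multiscale analysis in the one-particle continuum of Germinet and Klein \cite{GK1} (see also \cite{Kl}) and its multi-particle adaptation in our discrete work \cite{KlN}. I would argue by induction on $N$: for $N=1$ this is precisely \cite{GK1}, and for $N>1$ the hypothesis supplies the $n$-particle estimates of Theorem~\ref{maintheorem} on $I$ for all $n\le N$, while the corollary applied at levels $n<N$ then yields full $n$-particle localization on $I$ for $n<N$ (legitimate since $I\subseteq(-\infty,E\up{N})\subseteq(-\infty,\En)$). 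The continuum ingredients this requires --- the eigenfunction decay inequality (EDI), the strong generalized eigenfunction expansion (SGEE), and the Hilbert--Schmidt and trace estimates for resolvents sandwiched by $T_N^{-1}=\scal{\x}^{-\nu_N}$, for which $\nu_N>\tfrac{Nd}2$ is needed --- are the technical results collected in Section~\ref{sectoolkilt}. The starting point is that, for each fixed $\zeta\in(0,1)$, the hypothesis provides, at every scale $L\ge L_\zeta$ and uniformly in the center, exponential decay of the finite-volume resolvent at the fixed rate $m_\zeta>0$ with probability at least $1-e^{-L^\zeta}$: for a single $N$-particle box by (i), and for pairs of $N$-particle boxes at Hausdorff distance $\ge L$ over an entire energy window $I(E_1)\subseteq I$ by (ii).

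First I would prove Anderson localization, part (i)(a). Fix one $\zeta\in(0,1)$ and cover $I$ by the windows $I(E_1)$. By the SGEE, for spectrally almost every $E\in I$ there is a nonzero polynomially bounded generalized eigenfunction $\psi$ of $H_{\bom}^{(N)}$, with $\norm{\Chi_\x\psi}\le C_E\scal{\x}^{\nu_N}$; choose $\x_0$ where $\norm{\Chi_{\x_0}\psi}\scal{\x_0}^{-\nu_N}$ is essentially maximal. The EDI forces $\boldlambda_L^{(N)}(\x_0)$ to be $(m_\zeta,E)$-nonregular for all large $L$, since regularity would make $\psi$ too small near $\x_0$. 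A Borel--Cantelli argument applied to estimate (ii) along a geometric sequence of scales then yields a full-measure event on which, for all such $E$ and all $\x$ with $d_H(\x,\x_0)\ge L$ in the relevant range of scales, $\boldlambda_L^{(N)}(\x)$ is $(m_\zeta,E)$-regular, so the EDI gives $\norm{\Chi_\x\psi}\le C_{\bom,E}\norm{T_N^{-1}\psi}\,e^{-(m_\zeta/2)\norm{\x}}$. Configurations $\x$ that are far from $\x_0$ in Euclidean distance but close in Hausdorff distance --- the genuinely multi-particle situation, in which the relevant boxes are interactive --- are handled using the $n$-particle localization for $n<N$ from the induction hypothesis, exploiting that for $E$ low enough the partially interactive problem splits into lower-particle factors. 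Hence $\psi\in L^2(\R^{Nd})$ and $E$ is an eigenvalue; since this holds spectrally a.e.\ in $I$, the spectral projection $\Chi_I(H_{\bom}^{(N)})$ is pure point and \eqref{expdecayIntro} holds with the non-random rate $M_I:=m_\zeta/2$.

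For the remaining statements I would rerun the same EDI and Borel--Cantelli machinery uniformly over an orthonormal basis of each eigenspace $\Ran\Chi_{\set E}(H_{\bom}^{(N)})$, retaining the weight $\scal{\x}^{\nu_N}$, which is square-summable against $T_N^{-1}$ because $2\nu_N>Nd$. As in \cite{GK1} this gives finite multiplicity (part (i)(b)) and the SUDEC bound (part (i)(c)), and SUDEC is equivalent to SULE by \cite{GKjsp}. The estimate controlling the case of $\x$ and $\y$ far apart must here be the two-box estimate (ii): because two $N$-particle configurations that are close in Hausdorff distance (for instance mutual rearrangements or translates) produce interactive boxes, the decay is necessarily measured in $d_H$ rather than in Euclidean distance (cf.\ \eqref{dHdist}), and because events attached to disjoint $N$-particle boxes need not be independent one cannot, unlike in the one-particle proof, extract the two-box estimate from independence together with (i). Dynamical localization, part (ii) of Theorem~\ref{localizationH}, follows by taking expectations: as in \cite{GK1}, the quantity $\E\set{\sup_{\abs g\le1}\norm{\Chi_\x\Chi_I(H_{\bom}^{(N)})g(H_{\bom}^{(N)})\Chi_\y}}$ is bounded, through the SGEE and the EDI, by the probabilities that suitable boxes around $\x$ and $\y$ fail to be simultaneously $(m_\zeta,E)$-regular for some $E$ in a window $I(E_1)$; estimates (i) and (ii) bound these by $e^{-L^\zeta}$ when $d_H(\x,\y)\ge L$, and summing over the relevant scales produces the bound $e^{-\pa{d_H\pa{\x,\y}}^{\zeta'}}$ for any $\zeta'<\zeta$, which, since $\zeta\in(0,1)$ was arbitrary, yields the assertion for every $\zeta\in(0,1)$. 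Specializing $g(H_{\bom}^{(N)})=e^{itH_{\bom}^{(N)}}$ gives the strong form with $\sup_{t\in\R}$.

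The main obstacle is not an isolated hard step but the need to carry the entire one-particle scheme through with Euclidean distance systematically replaced by the Hausdorff distance $d_H$ and with ``independence of events on disjoint boxes'' systematically replaced by the two-box estimate (ii) of Theorem~\ref{maintheorem}: every geometric and combinatorial lemma about chains of boxes and about pairs of distant boxes has to be re-examined with partially and fully interactive configurations in mind, and the induction on $N$ must be organized so that interactive configurations reduce to a strictly smaller number of particles. A lesser, purely continuum point is to ensure that the SGEE, the EDI, and the trace estimates operate on the interval $I$, which need not reach the bottom of the spectrum, rather than on an interval of the form $[0,E_N)$; this is exactly what the toolkit of Section~\ref{sectoolkilt} is designed to supply, so it poses no genuine new difficulty.
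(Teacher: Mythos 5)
Your proposal follows essentially the same route as the paper, which itself only cites the standard derivation of localization from the bootstrap multiscale analysis in \cite{vDK,GK1,GKjsp,GKber} and its multi-particle adaptation in \cite{KlN}: generalized eigenfunction expansion plus the eigenfunction decay inequality, Borel--Cantelli on the two-box estimate \eqref{concmsa}, decay measured in $d_H$ with interactive configurations reduced to fewer particles by induction on $N$, and then SUDEC, finite multiplicity, and dynamical localization by the arguments of \cite{GK1,GKjsp}. The only minor inaccuracy is attributing the SGEE/EDI machinery to Section~\ref{sectoolkilt}, which contains the deterministic resolvent toolkit rather than those inputs (they come from the cited references), but this does not affect the correctness of the outline.
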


 Theorem~\ref{localizationH} follows immediately from Theorem~\ref{maintheorem}, Theorem~\ref{smallcl}, and Corollary~\ref{localization}.

\section{Wegner estimates}

\subsection{Fully and partially separated rectangles}\label{secpfsep}

Let     $\blam=\naboxLa$ be an $n$-particle rectangle.  Given  $\cJ \subseteq \setn $,  we set
\begin{gather}\notag
\boldlambda(\bolda_{\cJ})=\boldlambda^{\cJ}(\bolda_{\cJ}) = \prod_{i \in \cJ} \Lambda_{L_i}(a_i). \qtx{where} \bolda_{\cJ} = (a_i \,\, , \,i \in \cJ),\; \bolda=(\bolda_{\cJ},\bolda_{\cJ^c}); 
\\ \notag
\Pi_{\cJ} \boldlambda^{(n)}(\bolda) = \bigcup_{i \in \cJ} \Lambda_{L_i}(a_i),  \quad \Pi_{i} \boldlambda^{(n)}(\bolda)  = \Pi_{\set{i}} \boldlambda^{(n)}(\bolda) =\Lambda_{L_i}(a_i);\\  \notag
\Pi \boldlambda^{(n)}(\bolda)  = \Pi_{\setn} \boldlambda^{(n)}(\bolda).
\end{gather}

\begin{definition} \label{pfsep}
Let $\nboxx= \prod_{i = 1}^{n}  \,\, \Lambda_{L_i}(x_i)$  and    $\nboxy= \prod_{i = 1}^{n}  \,\, \Lambda_{\ell_i}(y_i)$ be a pair of $n$-particle rectangles.   
\begin{enumerate}
\item  $\nboxx$ and $\nboxy$ are partially separated if, and only if, 
\begin{align} 
\text{either}\quad \Lambda_{L_i}(x_i)\cap  \Pi \nboxy=\emptyset   \qtx{for some}  i \in \setn, \\ \notag
\text{or } 
\ \Lambda_{\ell_j}(y_j)\cap  \Pi \nboxx =\emptyset 
\qtx{for some} j \in \setn.
\end{align}

\item  $\nboxx$ and $\nboxy$ are fully separated if, and only if, 
\beq \label{fullsep}
 \Pi \blam^{(n)}(\boldx) \cap  \Pi \blam^{(n)}(\boldy)   =\emptyset.
\eeq
\end{enumerate}
\end{definition}

Note that, in view of our definition of the finite volume random potentials (see \eq{rpn} and \eq{rp1}), events based on fully separated  rectangles are independent.  Moreover, if the $n$-particle rectangles $\nboxx$ and $\nboxy$ are partially separated, with, say,
$\Lambda_{L_i}(x_i)\cap  \Pi \nboxy=\emptyset $, then events based on $\nboxy$ are independent of the random variables $\set{\omega_k; \ k\in \widehat{\Lambda_{L_i}(x_i)}}$.

\subsection{The Wegner estimates}
Given a one-particle box   ${\Lambda_{L}(x)}$, we will use $\E_{\Lambda_{L}(x)}$ and $\P_{\Lambda_{L}(x)}$  to denote the expectation and probability with respect to the probability distribution of the random variables $\set{\omega_k; \ k\in \widehat{\Lambda_{L}(x)}}$.

\begin{theorem} \label{Wegner0} Let $n \in \N$  and  $E_+ >0$. There exist  constants   
$$ \gamma_{n,E_+}  = \gamma_{n,E_+}(d,M_+, \delta_-, \|\widetilde{U}\|_\infty)>0\sqtx{and}  C_{n,E_+}=C(d,M_+,u_-, \delta_\pm, \|\widetilde{U}\|_\infty, n, E_+),$$  such that, for all  $n$-particle rectangles   $\blam=\naboxLa$  with $\bolda=(a_1,\ldots,a_n)\in \R^{nd}$ and   $  114 \sqrt{nd} \le L_i \le L$ for $i=1,\ldots,n$,  
and all   intervals $I\subseteq [0,E_+)$   with $\abs{I}\le 2\gamma_{n,E_+}$, we have
\beq\label{wegest}
\E_{\Lambda_{L_i}(a_i)}\set{ \tr \Chi_{I} \pa{ H\up{n}_{\bom, \, \boldlambda}   } } \leq C_{n,E_+}\norm{\rho}_{\infty} \abs{I} L^{nd} \qtx{for} i=1,2,\ldots,n.
\eeq
In particular,  for any $ E \leq E_+$,  $0 < {\eps} \le \gamma_{n,E_+}$, and  $i=1,2,\ldots,n$, we have
\beq \label{weggamma}
\P_{\Lambda_{L_i}(a_i)}\Bl\{ \norm{R\up{n}_{\bom,  \boldlambda}(E)} \geq \tfrac{1}{{\eps}} \Br\} = \P_{\Lambda_{L_i}(a_i)} \Bl\{d\,(\sigma(H\up{n}_{\bom, \boldlambda}), E) \leq {\eps} \Br\} \leq 2 C_{n,E_+}\!
\norm{\rho}_\infty\, {\eps} L^{nd}\! .
\eeq
\end{theorem}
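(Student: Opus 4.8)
\emph{Proof proposal.}
First, I would reduce \eqref{weggamma} to \eqref{wegest}. Since $H^{(n)}_{\bom,\blam}\ge 0$ has compact resolvent, $\|R^{(n)}_{\bom,\blam}(E)\|=\dist(\sigma(H^{(n)}_{\bom,\blam}),E)^{-1}$, so for $I:=[E-\eps,E+\eps]\cap[0,\infty)$ we have $\{\|R^{(n)}_{\bom,\blam}(E)\|\ge\tfrac1\eps\}=\{\sigma(H^{(n)}_{\bom,\blam})\cap I\ne\emptyset\}\subseteq\{\tr\Chi_I(H^{(n)}_{\bom,\blam})\ge1\}$, and since $|I|\le 2\eps\le 2\gamma_{n,E_+}$, Chebyshev's inequality together with \eqref{wegest} (which by the same proof is valid for all intervals $I\subseteq[0,E_++\gamma_{n,E_+})$ with $|I|\le 2\gamma_{n,E_+}$) gives \eqref{weggamma}. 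I would prove \eqref{wegest} itself by the Combes--Hislop--Klopp scheme, which here has three ingredients: (i) a quantitative unique continuation principle making a positive fraction of the single-site potential on one face act as a covering; (ii) spectral averaging over the random variables of that face; (iii) a deterministic, $\bom$-uniform bound on the number of eigenvalues of $H^{(n)}_{\bom,\blam}$ in a slightly larger energy window, of order $L^{nd}$.

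Fix $i\in\setn$ and let $W$ be multiplication on $L^2(\blam)$ by $W(\x)=\sum_{k\in\widehat{\Lambda_{L_i}(a_i)}}\partial_{\omega_k}H^{(n)}_{\bom,\blam}(\x)$; by \eqref{rpn}, \eqref{rp1} and \eqref{u} one has $u_-\sum_{k\in\widehat{\Lambda_{L_i}(a_i)}}\Chi_{\Lambda_{\delta_-}(k)}(x_i)\le W(\x)\le c(d,n,\delta_+)$ on $\blam$, and $-\Delta_{\blam}^{(n)}\le H^{(n)}_{\bom,\blam}\le -\Delta_{\blam}^{(n)}+K_n$ with $K_n=K_n(d,n,M_+,\delta_+,\|\widetilde U\|_\infty)$. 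The delicate point, where a covering condition is classically used, is that $W$ depends only on the $i$-th one-particle variable $x_i$: it is constant in the other $nd-d$ variables and vanishes on large open subsets of $\blam$. Here I would invoke the unique continuation principle for spectral projections on arbitrary rectangles established in Appendix~\ref{apUCP} (extending \cite{Kl2}): there exist $\gamma_{n,E_+}>0$ and $\kappa_{n,E_+}>0$, depending only on $d,M_+,\delta_-,\|\widetilde U\|_\infty$ (besides $n$ and $E_+$), such that whenever $L_i\ge 114\sqrt{nd}$ for all $i$ and $I\subseteq[0,E_+)$ has $|I|\le 2\gamma_{n,E_+}$,
\[
\Chi_I(H^{(n)}_{\bom,\blam})\,W\,\Chi_I(H^{(n)}_{\bom,\blam})\ge\kappa_{n,E_+}\,\Chi_I(H^{(n)}_{\bom,\blam})\qquad\text{for every }\bom,
\]
equivalently $\scal{\psi,W\psi}\ge\kappa_{n,E_+}\|\psi\|^2$ for all $\psi\in\Ran\Chi_I(H^{(n)}_{\bom,\blam})$; in particular $\tr\Chi_I(H^{(n)}_{\bom,\blam})\le\kappa_{n,E_+}^{-1}\tr\bigl(W\,\Chi_I(H^{(n)}_{\bom,\blam})\bigr)$.

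Next, since $H^{(n)}_{\bom,\blam}$ is affine in $\{\omega_k:k\in\widehat{\Lambda_{L_i}(a_i)}\}$ with nonnegative $\omega_k$-derivatives summing to $W$, spectral averaging over these variables — the Combes--Hislop--Klopp estimate in the single-site form of \cite{Kl2}, applied one variable at a time — yields $\E_{\Lambda_{L_i}(a_i)}\{\tr(W\,\Chi_I(H^{(n)}_{\bom,\blam}))\}\le c(d,n,\delta_+)\,\|\rho\|_\infty\,|I|\,N_\blam$, where $N_\blam:=\sup_{\bom}\tr\Chi_{[0,E_+']}(H^{(n)}_{\bom,\blam})$ for a suitable $E_+'=E_+'(E_+,M_+,\delta_+)$. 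Because $H^{(n)}_{\bom,\blam}\ge-\Delta_{\blam}^{(n)}\ge0$, $N_\blam$ is at most the number of Dirichlet eigenvalues of $-\Delta_{\blam}^{(n)}$ below $E_+'$, so $N_\blam\le C(d,n)(1+E_+')^{nd/2}\prod_{i=1}^{n}L_i^d\le C_{n,E_+}L^{nd}$ by the elementary Weyl bound on the box $\blam$ and $L_i\le L$. Combining with the previous paragraph gives $\E_{\Lambda_{L_i}(a_i)}\{\tr\Chi_I(H^{(n)}_{\bom,\blam})\}\le\kappa_{n,E_+}^{-1}c(d,n,\delta_+)\,C_{n,E_+}\,\|\rho\|_\infty\,|I|\,L^{nd}$, which is \eqref{wegest} with a constant $C_{n,E_+}$ of the asserted form (the $u_-$-dependence entering only through $\kappa_{n,E_+}^{-1}$); \eqref{weggamma} then follows as above.

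The crux is ingredient (i). The unique continuation principle of \cite{Kl2} is designed for a genuine covering of a cube by single-site bumps, whereas here the expectation is over one face only, so the usable perturbation $W$ is strongly degenerate on the $nd$-dimensional rectangle $\blam$ — constant in $nd-d$ directions and vanishing on large regions. One must therefore establish the quantitative unique continuation principle for spectral projections on arbitrary (anisotropic) rectangles, with the explicit scale threshold $114\sqrt{nd}$ and with constants $\gamma_{n,E_+},\kappa_{n,E_+}$ that do not degenerate as the side lengths $L_1,\dots,L_n$ vary independently; this is exactly what replaces the covering condition and is carried out in Appendix~\ref{apUCP}. Ingredients (ii) and (iii), and the bookkeeping of the dependence of $\gamma_{n,E_+}$ and $C_{n,E_+}$ on $d,M_+,u_-,\delta_\pm,\|\widetilde U\|_\infty,n,E_+$, are then routine.
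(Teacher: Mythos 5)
Your overall architecture (reduce \eqref{weggamma} to \eqref{wegest} by Chebyshev; UCPSP to make the one-face perturbation effective; spectral averaging over the variables of that face; a deterministic Weyl-type bound $\tr\Chi_{(-\infty,E]}(H\up{n}_{\bom,\blam})\le C E^{nd/2}L^{nd}$ from Lemma~\ref{NE}) is the same as the paper's, and the paper likewise defers the averaging step to \cite[Lemma~3.1]{Kl2} and \cite{HK}. But there is one concrete missing step, and you have mislocated where the degeneracy of $W$ is resolved. Theorem~\ref{thmUCPSP} in Appendix~\ref{apUCP} is proved for the genuinely covering weight $W\up{\La}=\sum_{k\in\widehat{\widehat{\La}}}\Chi_{B(y_k,\delta)}$, with one ball around essentially every lattice point of the $nd$-dimensional rectangle; the appendix extends \cite{Kl2} only to arbitrary centers and independent side lengths, \emph{not} to observation sets that are unions of slabs $\set{\x : x_i\in B(k,\delta)}$. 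So the operator inequality $\Chi_I(H)\,W\,\Chi_I(H)\ge\kappa_{n,E_+}\Chi_I(H)$ for your degenerate one-face potential $W$ does not follow from the appendix as you assert, and as stated your ingredient (i) is unproved.

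The paper bridges this by a short but essential pointwise domination. With $\eta=\min\set{\tfrac{\delta_-}{2},\tfrac12}$ and $W\up{\blam}=\sum_{\bolds{k}\in\blam\cap\Z^{nd}}\Chi_{B\up{n}_\eta(\bolds{k})}$, one writes $\x=(x_q,\x_q^\perp)$, factors $\Chi_{B\up{n}_\eta(\bolds{k})}(\x)\le\Chi_{B\up{1}_\eta(k_q)}(x_q)\,\Chi_{B\up{n-1}_\eta(\bolds{k}_q^\perp)}(\x_q^\perp)$, and uses that the balls $B\up{n-1}_\eta(\bolds{k}_q^\perp)$ are pairwise disjoint (so the sum over $\bolds{k}_q^\perp$ is at most $1$) to obtain
\begin{equation*}
W\up{\blam}(\x)\;\le\;\sum_{k\in\widehat{\La_q}}\Chi_{B\up{1}_\eta(k)}(x_q)\;\le\;\sum_{k\in\widehat{\La_q}}\Chi_{\Lambda\up{1}_{\delta_-}(k)}(x_q)\;\le\;u_-^{-1}\sum_{k\in\widehat{\La_q}}\theta\up{\blam}_k(\x),
\end{equation*}
where $\theta\up{\blam}_k=\partial_{\omega_k}H\up{n}_{\bom,\blam}$. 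Applying Theorem~\ref{thmUCPSP} to the genuine covering $W\up{\blam}$ and then this domination yields exactly the inequality you claim (the paper's \eqref{chivchin}), with $\kappa_{n,E_+}=u_-\gamma_{n,E_+}^2$; this is also where the $u_-$-dependence of $C_{n,E_+}$ enters. You need to supply this reduction (or an equivalent direct UCPSP for slab-shaped observation sets, which would require reworking the Carleman/UCP input, not just the bookkeeping of Appendix~\ref{apUCP}). With that step inserted, the remainder of your argument matches the paper's proof.
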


We prove Theorem~\ref{Wegner0}  by modifying the proof of \cite[Theorem~1]{HK}.  The main difference  between Theorem~\ref{Wegner0} and  \cite[Theorem~1]{HK} is  that the expectation in \eq{wegest} is taken only with respect to the random variables indexed by the one-particle box $ \Lambda_q $.   This is needed for   proving Corollary~\ref{Wegner2} for a pair of partially  separated $n$-particle rectangles.  Note also that Theorem~\ref{Wegner0}  is proved for arbitrary $n$-particle rectangles, not just $n$-particle boxes  $\nboxLa $ with $\bolda\in \Z^{nd}$ and   $L \in \N$ as in \cite[Theorem~1]{HK}-a consequence of their use of the results of \cite{Kl2}. We extend the results of  \cite{Kl2}  to arbitrary $n$-particle rectangles in Appendix~\ref{apUCP}.

 \begin{proof} Let  $\boldlambda = \prod_{i = 1, \dots, n} \La_i$, where $\La_i= \Lambda_{L_i} (a_i) $, be an  $n$-particle rectangle   with $\bolda\in \R^{nd}$ and   $  114 \sqrt{nd} \le L_i \le L$ for $i=1,\ldots,n$.
Then for $\x\in \La$ we have 
 \begin{align}
 V_{\bom , \blam}^{(n)} (\x) &= \sum_{i = 1}^{n} V_{\bom, \Lambda_{i} }^{(1)} (x_i) =\sum_{i = 1}^{n} \sum_{k \in \widehat{\Lambda_i}} \omega_{k} u(x_i-k)\\ \notag
 & =  \sum_{k \in \Z^d}\omega_k \pa{\sum_{i; \, k\in\widehat{\Lambda_i} } u(x_i -k)}=  \sum_{k \in \Z^d}\omega_k \theta\up{\blam}_k (\boldx),
 \end{align}
 where
 \beq
 \theta\up{\blam}_k (\boldx) = \sum_{\set{i; \, k\in\widehat{\Lambda_i} }} u(x_i -k)\ge  u_{-}  \sum_{\set{i; \, k\in\widehat{\Lambda_i} }}\Chi_{\Lambda\up{1}_{\delta_{-}}(k)} (x_i),
 \eeq
 where we used \eq{u}.  It follows that for $q=1,2,\ldots,n$ we have
 \begin{align} \notag
 H\up{n}_{\bom, \, \boldlambda}&=    -\Delta\up{n}_{\boldlambda} + U_{\boldlambda} + \sum_{k \in \Z^d}\omega_k \theta\up{\blam}_k \\
 & =  -\Delta\up{n}_{\boldlambda} + U_{\boldlambda} + \sum_{k \in \Z^d\setminus \widehat{\La_q}}\omega_k \theta\up{\blam}_k +  \sum_{k \in\widehat{\La_q}}\omega_k \theta\up{\blam}_k. \label{Hdecompq}
 \end{align}

 We now define (with $\eta = \min \set{\frac {\delta_{-}}2, \frac 1 2}$,  $B\up{n}_{ \eta} (\bolds{k})=\set{\x \in \R^{nd}; \ \abs{\x - \bolds{k}}_2 < \eta}$)
 \beq
 W\up{\blam} (\x)= \sum_{\bolds{k}\in \blam \cap \Z^{nd}}\Chi_{B\up{n}_{ \eta} (\bolds{k})} (\x)\qtx{for} \x \in \blam.
 \eeq
 Fix $q \in\set{1, \dots, n}$, and given $\bolds{x}\in  \R^{nd}$, write  $\bolds{x}=(x_q, \bolds{x}^\perp_q)$, where $\bolds{x}^\perp_q \in \R^{(n-1)d}$.  Then
 \beq
 \Chi_{B\up{n}_{ \eta} (\bolds{k})} (\x) \le \Chi_{B\up{1}_{ \eta} (k_q)} (x_q) \Chi_{B\up{n-1}_{ \eta} (\bolds{k}^\perp_q)} (\x^\perp_q) \qtx{for} \bolds{k} \in \Z^{nd}, \;\x \in \R^{nd}.
 \eeq
 We write  $\blam= \La_q \times \blam_q^\perp$, where $\blam_q^\perp= \prod_{i \in \set{ 1, \dots, n}\setminus\set{q}} \La_i $.
 It follows that for all $ \x \in \blam$ we have 
  \begin{align}
W\up{\blam} (\x)&\le \sum_{\bolds{k}\in \blam \cap \Z^{nd}}  \Chi_{B\up{1}_{ \eta} (k_q)} (x_q) \Chi_{B\up{n-1}_{ \eta} (\bolds{k}^\perp_q)} (\x^\perp_q)\\\notag
 &  =  \sum_{k_q \in \widehat{\La_q}}  \Chi_{B\up{1}_{ \eta} (k_q)}(x_q) \set{\sum_{\bolds{k}^\perp_q\in  \blam_q^\perp \cap \Z^{(n-1)d}}   \Chi_{B\up{n-1}_{ \eta} (\bolds{k}^\perp_q)} (\x^\perp_q)}\\ \notag
 & \le   \sum_{k_q \in \widehat{\La_q}}  \Chi_{B\up{1}_{ \eta} (k_q)}(x_q) \le  \sum_{k \in \widehat{\La_q}}  \Chi_{\Lambda\up{1}_{\delta_{-}}(k)}(x_q)\\ \notag
 & \le   \sum_{k \in \widehat{\La_q}}\pa{\sum_{i; \, k\in\widehat{\Lambda_i}}  \Chi_{\Lambda\up{1}_{\delta_{-}}(k)} (x_i)}\le u_-^{-1}  \sum_{k \in \widehat{\La_q}}\theta\up{\blam}_k (\boldx).
 \end{align}

  Fix $E_+>0$. It follows from Theorem~\ref{thmUCPSP} that for any  interval $I\subseteq [0,E_+)$   with $\abs{I}\le 2\gamma_{n,E_+}$ we have  
\begin{align}
 \label{chivchin}
  \Chi_{I}(H\up{n}_{\bom, \boldlambda}) &\le \gamma_{n,E_+}^{-2}\Chi_{I}(H\up{n}_{\bom, \boldlambda}) W\up{\La} \Chi_{I}(H\up{n}_{\bom, \boldlambda}) \\
  \notag &\le   u_-^{-1}  \gamma_{n,E_+}^{-2}\Chi_{I}(H\up{n}_{\bom, \boldlambda}) \pa{ \sum_{k \in \widehat{\La_q}}\theta\up{\blam}_k } \Chi_{I}(H\up{n}_{\bom, \boldlambda}),
 \end{align}
 for  all  $ \bom \in [0,M_+]^{\Z^d}$, where $\gamma_{n,E_+}$ is obtained from \eq{defgamma}: 
\begin{align}\label{defgamman}
\gamma_{n,E_+}^2=\tfrac 1 2   \eta ^{M_{nd} \pa{1 + K^{\frac 2 3}}} \qtx{with} K=  {n(n-1)} \|\widetilde{U}\|_\infty +  2M_+ \delta_+^d +E_+.
\end{align}

The Wegner estimate \eq{wegest} can now be proved following the strategy of \cite[Lemma~3.1]{Kl2},  using \eq{Hdecompq} and \eq{chivchin}. This is what is done in \cite[Proof of Theorem~1]{HK}, the  difference being  that the proof in  \cite{HK} uses a version of \eq{chivchin} where in the right hand side $\pa{ \sum_{k \in \widehat{\La_q}}\theta\up{\blam}_k }$ is replaced by $\pa{\sum_{q=1}^d \sum_{k \in \widehat{\La_q}}\theta\up{\blam}_k }$, and averages over all random variables instead of only  over
the random variables $\set{\omega_i}_{i\in \widehat{\La_q}}$.  The same argument as in \cite{HK}  applies,  using \eq{chivchin} and averaging only over 
the random variables $\set{\omega_i}_{i\in \widehat{\La_q}}$, yielding \eq{wegest}.
 \end{proof}

\begin{corollary}  \label{Wegner2}   Let $n \in \N$  and  $E_+ >0$, and let  $\gamma_{n,E_+}$ be as in  Theorem~\ref {Wegner0}. 
Let 
  $\boldlambda_1 = \prod_{i = 1, \dots, n} \Lambda_{L_i} (a_i) $ and $\boldlambda_2 = \prod_{i = 1, \dots, n} \Lambda_{L^\pr_i} (b_i)  $, with $\bolda,\boldb \in \R^{nd}$ and $  114 \sqrt{nd} \le L_i , L_i^\pr \le L$ for $i=1,\ldots,n$,   be a pair of partially  separated $n$-particle rectangles.    Set
\beq
\widetilde{\sigma} \pa{H_{\blam_1}} = {\sigma} \pa{H_{\blam_1}} \cap (-\infty, E_+]  \; , \hskip 20pt \widetilde{\sigma} \pa{H_{\blam_2}} = {\sigma} \pa{H_{\blam_2}} \cap (-\infty, E_+].
\eeq
 Then there exists a constant $\widetilde{C}_{n,E_+}=\widetilde{C}(d,M_+,u_-, \delta_\pm, \|\widetilde{U}\|_\infty, n, E_+)$,  such that for all  $0<\varepsilon \le \gamma_{n,E_+}$  we have
\beq \label{wegboxes}
\P \Bl\{ \dist \Bl(\widetilde{\sigma} (H_{\boldlambda_{1}}), \widetilde{\sigma} (H_{\boldlambda_{2}}) \Br) \leq {\eps} \Br\} \leq\widetilde{C}_{n,E_+} \norm{\rho}_\infty \,{\eps} L^{2nd} .
\eeq
\end{corollary}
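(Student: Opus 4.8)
The plan is to derive \eqref{wegboxes} from the one-volume Wegner estimate \eqref{weggamma} by conditioning on a suitable subset of the random variables, using partial separation to decouple the two finite-volume Hamiltonians. By Definition~\ref{pfsep}, after possibly interchanging $\boldlambda_1$ and $\boldlambda_2$ (the two cases being symmetric, since \eqref{wegboxes} is symmetric in the two rectangles), there is an index $i_0\in\setn$ with $\Lambda_{L_{i_0}}(a_{i_0})\cap\Pi\boldlambda_2=\emptyset$. By the observation following Definition~\ref{pfsep}, which rests on \eqref{rpn}--\eqref{rp1}, the operator $H_{\bom,\boldlambda_2}$ is then independent of the random variables $\set{\omega_k;\ k\in\widehat{\Lambda_{L_{i_0}}(a_{i_0})}}$, whereas $H_{\bom,\boldlambda_1}$ does depend on them because $\Lambda_{L_{i_0}}(a_{i_0})$ is one of the factors of $\boldlambda_1$.

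First I would condition on the random variables $\set{\omega_k;\ k\in\Z^d\setminus\widehat{\Lambda_{L_{i_0}}(a_{i_0})}}$, writing the probability in \eqref{wegboxes} as the expectation over these variables of the conditional probability $\P_{\Lambda_{L_{i_0}}(a_{i_0})}\set{\cdots}$; under this conditioning the finite set $\sigma_2:=\widetilde{\sigma}(H_{\boldlambda_2})\subseteq[0,E_+]$ is frozen. Since $\dist(\widetilde{\sigma}(H_{\boldlambda_1}),\sigma_2)\le\eps$ forces $\dist(\sigma(H_{\boldlambda_1}),E)\le\eps$ for some $E\in\sigma_2$, a union bound gives
\[
\P_{\Lambda_{L_{i_0}}(a_{i_0})}\set{\dist(\widetilde{\sigma}(H_{\boldlambda_1}),\widetilde{\sigma}(H_{\boldlambda_2}))\le\eps}\le\sum_{E\in\sigma_2}\P_{\Lambda_{L_{i_0}}(a_{i_0})}\set{\dist(\sigma(H_{\boldlambda_1}),E)\le\eps}.
\]
For each fixed $E\in\sigma_2\subseteq[0,E_+]$ and $0<\eps\le\gamma_{n,E_+}$, Theorem~\ref{Wegner0} applied to $\boldlambda_1$ through the face $\Lambda_{L_{i_0}}(a_{i_0})$ (admissible because $114\sqrt{nd}\le L_i\le L$) bounds the corresponding summand by $2C_{n,E_+}\norm{\rho}_\infty\eps L^{nd}$, uniformly in the conditioned variables.

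It remains to bound $\#\sigma_2$ deterministically. Because $\widetilde{U}\ge0$, $u\ge0$, and the $\omega_k$ are nonnegative, one has $H_{\bom,\boldlambda_2}\ge-\Delta^{(n)}_{\boldlambda_2}$, so by the min-max principle $\#\sigma_2\le\tr\Chi_{[0,E_+]}(H_{\boldlambda_2})\le\tr\Chi_{[0,E_+]}(-\Delta^{(n)}_{\boldlambda_2})$; and a standard count of the Dirichlet eigenvalues of $-\Delta$ on the rectangle $\boldlambda_2$, whose edge lengths all lie in $[114\sqrt{nd},L]$, gives $\tr\Chi_{[0,E_+]}(-\Delta^{(n)}_{\boldlambda_2})\le(1+\pi^{-1}\sqrt{E_+})^{nd}L^{nd}$. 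Combining this with the previous display and taking the expectation over the conditioned variables yields \eqref{wegboxes} with $\widetilde{C}_{n,E_+}=2(1+\pi^{-1}\sqrt{E_+})^{nd}C_{n,E_+}$, which indeed depends only on the stated parameters.

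I expect the only delicate point to be the decoupling bookkeeping at the outset: correctly singling out a face of one rectangle that misses the union of faces of the other, and then invoking \eqref{weggamma} with exactly that face as the averaging box — which is precisely what the definitions \eqref{rpn}--\eqref{rp1} and Definition~\ref{pfsep} were arranged to make possible. The union bound and the Dirichlet eigenvalue count are routine, so no further obstacle is anticipated.
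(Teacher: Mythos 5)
Your proposal is correct and is essentially the argument the paper has in mind: the paper only states that Corollary~\ref{Wegner2} ``follows from Theorem~\ref{Wegner0} in the same way \cite[Corollary~2.4]{KlN} is derived from \cite[Theorem~2.3]{KlN}'', and that derivation is precisely your scheme of using partial separation to single out a face $\Lambda_{L_{i_0}}(a_{i_0})$ disjoint from $\Pi\boldlambda_2$, conditioning on the remaining variables so that $\widetilde{\sigma}(H_{\boldlambda_2})$ is frozen, taking a union bound over its at most $O(L^{nd})$ elements, and applying \eqref{weggamma} with $i=i_0$ uniformly in the conditioned variables. The eigenvalue count via $H_{\boldlambda_2}\ge -\Delta^{(n)}_{\boldlambda_2}$ (the paper's Lemma~\ref{NE} serves the same purpose) and the resulting constant are fine, so there is nothing to add.
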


Corollary~~\ref {Wegner2} follows from Theorem~\ref{Wegner0}  in the same way  \cite[Corollary~2.4]{KlN} is derived from \cite[Theorem~2.3]{KlN}. Note that Corollary~\ref {Wegner2} is (up to minor details)  the same as   \cite[Theorem~9]{HK}, although the proofs use somewhat different versions of the Wegner estimate.

\section{Toolkit for the multiscale analysis}\label{sectoolkilt}

\subsection{Deterministic estimates}  The following lemmas are deterministic, i.e., they hold  for a fixed   $\bom$ (omitted from the notation).

 Given $n$-particle boxes $\boldlambda \subseteq \widetilde{\boldlambda}$
we set 
\begin{align} \label{boundary}
& \partial^{\widetilde{\blam}} \blambda := \partial \blam \setminus \partial \widetilde{\blam}, \notag \\
 &\boldlambda ^{\widetilde{\boldlambda}, \; \delta} := \set{ \x \in \blambda \; ; \; \dist (\x, \; \partial^{\widetilde{\blam}} \blambda) \geq \delta}\qtx{for} \delta >0, \\
 &\Upsilon = \Upsilon_{\blam}^{\widetilde{\blam}} := \set{ \x \in \blam \;;\; \dist (\x, \; \partial^{\widetilde{\blam}} \blambda) = \tfrac{1 \; + \; \delta_+}{2}  =: \wdel}, \notag  \\
 &\widetilde{\Upsilon} = \wups_{\blam}^{\wlam}:= \bigcup_{\x \in \Upsilon} \Bigl( \blam_{\tfrac{1}{2}}(\x) \cap \blam \Bigr) .\notag 
\end{align}

\begin{lemma} \label{res-ine}
Let $\blam \subsetneq \wlam$ be two $n$-particles boxes of length $\ell$ and $L$, respectively, with $\ell <  L$, and $z \notin \sigma \pa{H_{\blam}} \cup \sigma \pa{H_{\wlam}}$. Then there exists a constant $C  = C_{n,d}$, such that for $\x \in \blam$ with $\blam_{3+ \delta_+  } (\x) \cap \wlam \subseteq \blam$, and $\y \in \wlam \setminus \blam$, we can find $\bolda \in \ups_{\blam}^{\wlam}$   such that
\begin{align} \label{resolventin}
\norm{\Chiy R_{\wlam}(z) \Chix} &\leq C_{n,d} \,\ell^{nd-1} \sqrt{ 5 +  \max \{ 0, \Re (z)  \}}    \norm{ \Chiy  R_{\wlam} (z) \Chi_{\bolda}  }  
\norm{\Chi_{\bolda} R_{\blam} (z)  \Chix }. 
\end{align}
In particular, if   $\Re(z) \leq E\up{n}$, for some fixed energy $E\up{n}$, we get 
\beq \label{resolventine}
\norm{\Chiy R_{\wlam}(z) \Chix} \leq \ell^{nd} \norm{ \Chiy  R_{\wlam} (z) \Chi_{\bolda}  }  
\norm{\Chi_{\bolda} R_{\blam} (z)  \Chix },
\eeq
provided $\ell$ is sufficiently large (depending on $ E\up{n}$).
\end{lemma}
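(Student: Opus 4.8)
The inequality \eqref{resolventin} is a geometric resolvent identity estimate, and the natural route is to insert a smooth (or Lipschitz) cutoff between $\blam$ and $\wlam$ and commute it with the resolvent. Concretely, I would choose a function $\phi$ that equals $1$ on a neighborhood of $\Chix$'s support inside $\blam$ and equals $0$ outside the region where $\blam_{3+\delta_+}(\x)\cap\wlam\subseteq\blam$, so that $\supp(\nabla\phi)$ lies in the ``collar'' $\ups_{\blam}^{\wlam}$ (this is precisely why $\ups$ was defined with the offset $\wdel=\tfrac{1+\delta_+}{2}$). Then $\phi$ maps $\cD(H_{\blam})$ into $\cD(H_{\wlam})$, and since $\Chix\phi=\Chix$, the standard computation
\[
\Chiy R_{\wlam}(z)\Chix=\Chiy R_{\wlam}(z)\,[H_{\wlam},\phi]\,R_{\blam}(z)\Chix
\]
holds, because $\Chiy(1-\phi)=\Chiy$ (as $\y\notin\blam$) kills the term $\Chiy R_{\wlam}(z)(H_{\wlam}-z)\phi R_{\blam}(z)\Chix = \Chiy\phi R_{\blam}(z)\Chix = 0$. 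The commutator $[H_{\wlam},\phi]=[-\Delta^{(n)},\phi]=-(\Delta\phi)-2(\nabla\phi)\cdot\nabla$ is a first-order operator supported in $\ups_{\blam}^{\wlam}$, and in particular in the enlarged set $\wups_{\blam}^{\wlam}$.

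**Key steps.** First, insert characteristic functions of the unit-cube cover of $\ups$: write $[H_{\wlam},\phi]=\sum_{\bolda\in\ups\cap\Z^{nd}}\Chi_{\bolda}[H_{\wlam},\phi]\Chi_{\bolda'}$ (up to overlaps), so that after the triangle inequality we get a sum of at most $C_{n,d}\,\ell^{nd-1}$ terms (the collar is $(nd-1)$-dimensional of diameter $\sim\ell$, hence $\sim\ell^{nd-1}$ lattice cubes), each of the form $\norm{\Chiy R_{\wlam}(z)\Chi_{\bolda}}\cdot\norm{\Chi_{\bolda}[H_{\wlam},\phi]}\cdot\norm{\Chi_{\bolda'}R_{\blam}(z)\Chix}$; pick the $\bolda$ maximizing the product to get rid of the sum at the price of the combinatorial factor $\ell^{nd-1}$. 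Second, bound the operator norm of the localized commutator: $\norm{\Chi_{\bolda}(\Delta\phi)}$ is bounded by a constant (choosing $\phi$ with bounded derivatives, which the width $\wdel$ of the collar permits), while $\norm{\Chi_{\bolda}(\nabla\phi)\cdot\nabla\,R_{\blam}(z)\Chix}$ is controlled using $\norm{\nabla R_{\blam}(z)}\le\norm{(H_{\blam}-\Re z)^{1/2}R_{\blam}(z)}\cdot\text{(const)}$ combined with $H_{\blam}\ge 0$ and the spectral calculus, which produces the factor $\sqrt{5+\max\{0,\Re z\}}$ (the constant $5$ absorbing $\|U\|_\infty$, the bounded part of the potential, and slack). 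Third, collect the constants into $C_{n,d}$ to obtain \eqref{resolventin}. For the ``in particular'' statement \eqref{resolventine}, simply note $\sqrt{5+\max\{0,E^{(n)}\}}$ is a fixed constant, and $C_{n,d}\ell^{nd-1}\sqrt{5+E^{(n)}}\le \ell^{nd}$ once $\ell$ is large enough depending on $E^{(n)}$ (and $n,d$).

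**Main obstacle.** The routine but delicate point is the gradient bound: controlling $\norm{(\nabla\phi)\cdot\nabla R_{\blam}(z)\Chix}$ cleanly in terms of $\sqrt{5+\max\{0,\Re z\}}$ requires care because $z$ is complex and the form domain estimate $\norm{\nabla\psi}^2=\langle\psi,(-\Delta_{\blam})\psi\rangle\le\langle\psi,(H_{\blam}-U_{\blam})\psi\rangle$ must be turned into a resolvent bound valid for all $z\notin\sigma(H_{\blam})$, not merely real $z$ far from the spectrum; one handles this by writing $-\Delta_{\blam}\le H_{\blam}$ (since $U_{\blam},V^{(n)}_{\bom,\blam}\ge 0$) and estimating $\norm{(-\Delta_{\blam})^{1/2}R_{\blam}(z)}\le\norm{(H_{\blam}+1)^{1/2}R_{\blam}(z)}$, then bounding $\sup_{t\ge 0}\frac{(t+1)^{1/2}}{|t-z|}$ by something like $\frac{(\,|z|+1\,)^{1/2}}{\dist(z,\sigma(H_{\blam}))}+1$; on the sector $\Re z\le E^{(n)}$ relevant to \eqref{resolventine} this is harmless, but tracking the explicit shape $\sqrt{5+\max\{0,\Re z\}}$ in \eqref{resolventin} is where the bookkeeping lives. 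Everything else—the cutoff construction, the commutator expansion, the combinatorial count of collar cubes—is standard geometric resolvent machinery.
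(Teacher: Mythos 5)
Your overall architecture --- a cutoff $\phi$ with $\phi=1$ near $\supp\Chix$, $\supp\nabla\phi$ in the collar around $\ups_{\blam}^{\wlam}$, the geometric resolvent identity $\Chiy R_{\wlam}(z)\Chix=-\Chiy R_{\wlam}(z)[H_{\wlam},\phi]R_{\blam}(z)\Chix$, and the covering of $\wups_{\blam}^{\wlam}$ by $O(\ell^{nd-1})$ unit boxes with the maximizing $\bolda$ retained --- is exactly the proof of the result the paper actually invokes here ([GKber, Lemma~2.4(i)]), and your passage from \eqref{resolventin} to \eqref{resolventine} is fine. The gap is in your treatment of the first-order part of the commutator. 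You propose to control $\norm{(\nabla\phi)\cdot\nabla R_{\blam}(z)\Chix}$ by a \emph{global} spectral-calculus bound on $\norm{(-\Delta_{\blam}+1)^{1/2}R_{\blam}(z)}$, estimating $\sup_{t\ge0}(t+1)^{1/2}/\abs{t-z}$ by a quantity involving $\dist\pa{z,\sigma(H_{\blam})}^{-1}$. This cannot produce \eqref{resolventin}, for two reasons. First, the right-hand side of \eqref{resolventin} contains the \emph{localized} norm $\norm{\Chi_{\bolda}R_{\blam}(z)\Chix}$ and no factor of $\norm{R_{\blam}(z)}$; a global gradient bound de-localizes the second factor and would at best yield $\norm{R_{\blam}(z)}\,\norm{R_{\blam}(z)\Chix}$ in its place. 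In the multiscale analysis this difference is fatal: on good boxes $\norm{\Chi_{\bolda}R_{\blam}(z)\Chix}$ is (sub)exponentially small while $\norm{R_{\blam}(z)}$ is only polynomially bounded via nonresonance. Second, the prefactor $\sqrt{5+\max\{0,\Re z\}}$ is uniform in $\dist\pa{z,\sigma(H_{\blam})}$, whereas your bound degenerates as $z$ approaches the spectrum.

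The missing ingredient is the local interior (Cacciopoli-type) estimate [GKber, Lemma~2.3]: if $(-\Delta+W-z)u=0$ on an open neighborhood of a unit box $\blam_1(\bolda)$, then
$\norm{\Chi_{\bolda}\nabla u}_2\le C_{n,d}\,\sqrt{1+\norm{W}_\infty+\max\{0,\Re z\}}\;\norm{\Chi_{\bolda'}u}_2$
for a slightly enlarged concentric box $\bolda'$; it is proved by pairing the equation with $\chi^2\bar u$ for a local cutoff $\chi$ and taking real parts, which is precisely where $\Re z$ (rather than $\abs{z}$ or the distance to the spectrum) enters. In your setting $u=R_{\blam}(z)\Chix\psi$ does satisfy the homogeneous equation on a neighborhood of the collar, because $\Chix\psi$ is supported at distance greater than $\wdel+\tfrac12$ from $\ups_{\blam}^{\wlam}$ by the hypothesis $\blam_{3+\delta_+}(\x)\cap\wlam\subseteq\blam$. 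With this lemma substituted for your global gradient bound, each summand becomes $\norm{\Chiy R_{\wlam}(z)\Chi_{\bolda}}\cdot C\sqrt{\cdots}\,\norm{\Chi_{\bolda'}R_{\blam}(z)\Chix}$ and the rest of your argument closes.
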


 Lemma \ref{res-ine} is just   \cite[Lemma~2.4(i)]{GKber} with minor modifications.

\begin{lemma} \label{NE}
Given an $n$-particle box, $\blam$ of side $\ell$, for every $E \geq 0$ we have
\begin{align}
\#\set{\lambda \in \sigma \pa{H_{\blam}^{(n)}} \cap (-\infty, E]   } &= \tr \set{\Chi_{(-\infty, \, E]} \pa{H_{\blam}\up{n}}} \leq C_{nd}  E^{\tfrac{nd}{2}} \ell^{nd} .
\end{align}
\end{lemma}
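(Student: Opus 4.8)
The plan is to estimate the eigenvalue counting function $N(E):=\tr\,\Chi_{(-\infty,E]}(H_\blam^{(n)})$ by a simple two-step comparison. First I would discard the nonnegative potentials: since $V_{\bom,\blam}^{(n)}\ge 0$ and $U_\blam\ge 0$ (the single-site potential $u$ is nonnegative by \eq{u} and the interaction $\widetilde U$ is nonnegative by \eq{2body}), we have $H_\blam^{(n)}=-\Delta_\blam^{(n)}+V_{\bom,\blam}^{(n)}+U_\blam\ge -\Delta_\blam^{(n)}$ as quadratic forms on $L^2(\blam)$ with Dirichlet boundary conditions. By the min-max principle the eigenvalues of $H_\blam^{(n)}$ dominate those of $-\Delta_\blam^{(n)}$ eigenvalue by eigenvalue, so for every $E\ge 0$,
\[
N(E)=\#\set{\lambda\in\sigma(H_\blam^{(n)}):\lambda\le E}\le \#\set{\mu\in\sigma(-\Delta_\blam^{(n)}):\mu\le E}.
\]

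Second, I would bound the right-hand side using the explicit spectrum of the Dirichlet Laplacian on the $n$-particle rectangle $\blam=\prod_{i=1}^n\Lambda_\ell(a_i)$, which is a product of $d$-dimensional cubes of side $\ell$, hence an $nd$-dimensional cube of side $\ell$ (up to translation). The Dirichlet eigenvalues are $\frac{\pi^2}{\ell^2}\abs{\bolds{m}}_2^2$ for $\bolds{m}\in\N^{nd}$, so $\#\set{\mu\le E}$ equals the number of lattice points $\bolds{m}\in\N^{nd}$ in the ball of radius $\frac{\ell}{\pi}\sqrt{E}$. A standard lattice-point-counting estimate (each such $\bolds{m}$ contributes a unit cube contained in the ball of radius $\frac{\ell}{\pi}\sqrt E+\sqrt{nd}$, together with the elementary bound valid for all radii) gives $\#\set{\mu\le E}\le C_{nd}\,(1+\tfrac{\ell}{\pi}\sqrt E)^{nd}\le C_{nd}'\,E^{nd/2}\ell^{nd}$ once $\ell\sqrt E$ is bounded below; for the full statement with a clean $E^{nd/2}\ell^{nd}$ one restricts to the regime where this is the dominant term, or absorbs lower-order terms into the constant using that the bound is only nontrivial when $E^{nd/2}\ell^{nd}\ge 1$. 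Alternatively, and more cleanly, one can invoke the Weyl-type / Cwikel--Lieb--Rozenblum bound, or simply the classical fact $N_{-\Delta_\blam}(E)\le C_{nd}\abs{\blam}E^{nd/2}$ with $\abs{\blam}=\ell^{nd}$.

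The only mildly delicate point — and the main obstacle, such as it is — is making the lattice-point count yield exactly the stated form $C_{nd}E^{nd/2}\ell^{nd}$ uniformly in $E\ge 0$ without an additive constant: for very small $E$ the counting function can be $0$ or $1$ while $E^{nd/2}\ell^{nd}$ is tiny, so strictly one needs either the convention that the estimate is vacuous there, or to note that $N(0)\le 1$ (the ground state) and handle small $E$ separately, or to restrict attention (as this lemma is used) to $\ell$ large. I would phrase it so that the constant $C_{nd}$ absorbs the harmless regimes, citing the standard Dirichlet-Laplacian eigenvalue asymptotics; no genuine difficulty arises because the nonnegativity of both $V_{\bom,\blam}^{(n)}$ and $U_\blam$ makes the reduction to the free Laplacian immediate.
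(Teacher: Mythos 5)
Your argument is correct. The paper does not actually write out a proof of this lemma: it simply cites \cite[Lemma~3.3]{KK} (see also \cite[Eq.~(A.7)]{GK5}), which are general Weyl-type trace bounds whose underlying proof is exactly the reduction you describe — since $V_{\bom,\blam}^{(n)}\ge 0$ (the $\omega_k$ are a.s.\ nonnegative and $u\ge0$) and $U_\blam\ge0$, one has $H_{\blam}^{(n)}\ge -\Delta_{\blam}^{(n)}$, and min--max plus lattice-point counting for the Dirichlet Laplacian on an $nd$-dimensional cube of side $\ell$ gives the bound. So your proposal is a self-contained version of what the paper delegates to the references.

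One remark on the point you flag as "mildly delicate": it dissolves completely, and you do not need any convention, case split, or largeness assumption on $\ell$. Since $H_{\blam}^{(n)}\ge -\Delta_{\blam}^{(n)}\ge nd\pi^2/\ell^2>0$, the counting function vanishes for $E<nd\pi^2/\ell^2$, where the estimate is trivial; and for $E\ge nd\pi^2/\ell^2$ one has $\ell\sqrt{E}\ge \pi\sqrt{nd}$, so the term $1$ in $(1+\tfrac{\ell}{\pi}\sqrt{E})^{nd}$ is absorbed into $C_{nd}$. Even more cleanly, for each $\bolds{m}\in\N^{nd}$ with $\tfrac{\pi^2}{\ell^2}\abs{\bolds{m}}_2^2\le E$ the unit cube $\prod_j[m_j-1,m_j]$ lies in the intersection of the ball of radius $\tfrac{\ell}{\pi}\sqrt{E}$ with the positive orthant, so the count is at most $2^{-nd}\omega_{nd}\pi^{-nd}E^{nd/2}\ell^{nd}$ with no additive term at all.
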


 Lemma \ref{NE} follows from   \cite[Lemma~3.3]{KK} (see also \cite[Eq.~(A.7)]{GK5}). 
 
We also use the following Combes-Thomas estimate from \cite[Eq. (19) in Theorem 1]{GK6}.
\begin{lemma}  \label{cthomas}
Let $\blam$ be an $n$-particle box. Then for all  $E < \inf \sigma \pa{H_{\blam}}$
we have
\begin{align} 
\norm{\Rlamxy} & \leq \tfrac 4 3 \pa{\inf \sigma \pa{H_{\blam}}-E}^{-1} e^{- \frac 1 2 \sqrt{{\inf \sigma \pa{H_{\blam}}-E}}\, \pa{\norm{\x - \y}-nd} } 
\end{align}
for all $\x,\y \in \R^{nd}$. In particular, if we take $\norm{\x - \y} \geq \tfrac{L}{100}$ with $L$ sufficiently large, we have 
\begin{align} \label{ctho}
\norm{\Rlamxy} & \leq \tfrac 4 3 \pa{\inf \sigma \pa{H_{\blam}}-E}^{-1} e^{- \frac{1}{ 3} \sqrt{{\inf \sigma \pa{H_{\blam}}-E}}\, \norm{\x - \y}} \\
&\leq \tfrac 4 3 \pa{\inf \sigma \pa{H_{\blam}}-E}^{-1} e^{- \frac{\sqrt{{\inf \sigma \pa{H_{\blam}}-E}}}{300}L}. \notag
\end{align}
\end{lemma}
\subsection{Suitable cover}

Following  \cite[Definition~3.12]{GKber} we introduce   suitable covers of $n$-particle boxes.

\begin{definition}\label{defsuitcov}
Given scales $\ell <  L$,  a  \emph{suitable $\ell$-covering} of a box $\NboxLx$ is
a collection of boxes $\blam_{\ell}$ of the form
\begin{align}\label{standardcover}
\cG_{\NboxLx}^{(\ell)}= \{ \blam_{\ell}(\mathbf{r})\}_{\mathbf{r} \in \G_{\NboxLx}^{(\ell)}},
\end{align}
where
\beq  \label{bbG}
\G_{\NboxLx}^{(\ell)}:= \{ \x + \alpha\ell  \Z^{Nd}\}\cap \NboxLx \quad 
\text{with}  \quad \alpha \in \br{\tfrac {3} {5},\tfrac {4} {5}}   \cap \set{\tfrac {L-\ell}{2 \ell n}; \, n \in \N }.
\eeq
\end{definition}

Suitable covers are useful because of \cite[Lemma~3.13]{GKber}, stated below. 

\begin{lemma}\label{lemcovering} 
Let $\ell \le \frac  L   6$. Then every box   $\NboxLx$  has a   suitable $\ell$-covering, and   for any  suitable $\ell$-covering
  $\cG_{\NboxLx}^{(\ell)}$  of $\NboxLx$ we have 
  \begin{align}\label{nestingproperty}
&\NboxLx =\bigcup_{\boldr \in\G_{\NboxLx}^{(\ell)} } \blam_{\ell}(\boldr),\\ 
\label{bdrycover}
&\text{for each $\y \in \NboxLx$ there is $\boldr \in \G_{\NboxLx}^{(\ell)}$ with $\blam_{\frac{\ell} 5}(y)\cap \NboxLx \subset \blam_{\ell}(\boldr)$},\\
\label{freeguarantee}
&\blam_{\frac{\ell}{5}}(\boldr)\cap \blam_{\ell}(\boldr^{\prime})=\emptyset
\quad \text{for all} \quad \boldr,\boldr^{\pr}\in  \x + \alpha\ell  \Z^{d}, \  \boldr\ne \boldr^{\prime},\\ \label{number}
&  \pa{\tfrac{L} {\ell}}^{Nd}\le \#\G_{\NboxLx}^{(\ell)}= \pa{ \tfrac{L-\ell} {\alpha \ell}+1}^{Nd} \le   \pa{\tfrac{2L} {\ell}}^{Nd}.
\end{align}
Moreover,  given $\y \in  \x + \alpha \ell  \Z^{Nd}$ and $k \in \N$, it follows that
\beq \label{nesting}
\blam_{(2  k \alpha  +1) \ell}(\y)= \bigcup_{\boldr \in  \{ \x + \alpha \ell \Z^{Nd}\}\cap \blam_{(2k \alpha  +1) \ell}(\y) } \blam_{\ell}(\boldr),
\eeq
and  $ \{ \blam_{\ell}(\boldr)\}_{\boldr \in  \{ \x + \alpha\ell  \Z^{Nd}\}\cap\blam_{(2k\alpha +1) \ell}(\y)}$ is a suitable $\ell$-covering of the box $\blam_{(2 k\alpha +1) \ell}(\y)$.  In particular,
\beq \label{ell5cover}
\text{for each $\y \in \Z^{Nd}$ there is $\boldr \in  \x + \alpha\ell  \Z^{Nd}$ with $\blam_{\frac{\ell} 5}(\y) \subset \blam_{\ell}(\boldr)$}.
\eeq

\end{lemma}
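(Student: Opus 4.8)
The plan is to observe that an $N$-particle box $\NboxLx$ is nothing but a cube in $\R^{Nd}$, so the statement is \cite[Lemma~3.13]{GKber} verbatim with $d$ replaced by $Nd$; I would reprove it since the argument is short and purely combinatorial in the lattice $\x+\alpha\ell\Z^{Nd}$. First I would fix $\alpha\in[\tfrac35,\tfrac45]\cap\{\tfrac{L-\ell}{2\ell m}:m\in\N\}$, writing $\alpha=\tfrac{L-\ell}{2\ell m_0}$ with $m_0\in\N$, so that $m_0\alpha\ell=\tfrac{L-\ell}2$. Such an $\alpha$ exists precisely because $\ell\le L/6$: the condition $\tfrac{L-\ell}{2\ell m}\in[\tfrac35,\tfrac45]$ reads $m\in[\tfrac{5(L-\ell)}{8\ell},\tfrac{5(L-\ell)}{6\ell}]$, an interval of length $\tfrac{5(L-\ell)}{24\ell}\ge\tfrac{25}{24}>1$ (using $L-\ell\ge5\ell$), hence containing an integer.

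Next I would pin down the grid. Since $\norm{\alpha\ell\bolds{k}}<\tfrac L2$ is equivalent to $\norm{\bolds{k}}<m_0+\tfrac1{2\alpha}$ with $\tfrac1{2\alpha}\in[\tfrac58,\tfrac56]\subset(0,1)$, the points of $\x+\alpha\ell\Z^{Nd}$ lying in $\NboxLx$ are exactly $\{\x+\alpha\ell\bolds{k}:\bolds{k}\in\{-m_0,\dots,m_0\}^{Nd}\}$, i.e.\ $\G_{\NboxLx}^{(\ell)}$. The extreme ones sit at sup-distance $m_0\alpha\ell=\tfrac{L-\ell}2$ from $\x$, so the $\blam_\ell$-boxes around them reach exactly $\partial\NboxLx$, which gives \eqref{nestingproperty}. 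For \eqref{number} there are $2m_0+1=\tfrac{L-\ell}{\alpha\ell}+1$ points per coordinate, and $\tfrac L\ell=\tfrac{L-\ell}\ell+1\le\tfrac{L-\ell}{\alpha\ell}+1\le\tfrac{5(L-\ell)}{3\ell}+1\le\tfrac{2L}\ell$, the last inequality being the elementary $5(L-\ell)+3\ell\le6L$.

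The real content is \eqref{bdrycover}, and this is the step I expect to require genuine care because of the boundary. Given $\y=\x+\boldv\in\NboxLx$, I would choose in each coordinate $j$ the point $r_j$ of $x_j+\alpha\ell\{-m_0,\dots,m_0\}$ closest to $y_j$, so $\boldr\in\G_{\NboxLx}^{(\ell)}$. If $|v_j|\le\tfrac{L-\ell}2$, then $|y_j-r_j|\le\tfrac{\alpha\ell}2\le\tfrac{2\ell}5$, so any $\z$ with $|z_j-y_j|<\tfrac\ell{10}$ has $|z_j-r_j|<\tfrac\ell{10}+\tfrac{2\ell}5=\tfrac\ell2$. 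If instead $\tfrac{L-\ell}2<|v_j|<\tfrac L2$, say $v_j>0$, then $r_j=x_j+\tfrac{L-\ell}2$, and now one cannot use the grid spacing—instead, for $\z\in\NboxLx$ the two bounds $z_j-x_j<\tfrac L2$ and $z_j-x_j>v_j-\tfrac\ell{10}>\tfrac{L-\ell}2-\tfrac\ell{10}$ give $-\tfrac\ell{10}<z_j-r_j<\tfrac\ell2$. In either case $|z_j-r_j|<\tfrac\ell2$ for all $j$, so $\blam_{\ell/5}(\y)\cap\NboxLx\subset\blam_\ell(\boldr)$. The disjointness \eqref{freeguarantee} (with $\Z^{Nd}$ for the misprinted $\Z^d$) is then a one-line triangle inequality: distinct grid points differ by at least $\alpha\ell\ge\tfrac{3\ell}5$ in sup-norm, while a common point of $\blam_{\ell/5}(\boldr)$ and $\blam_\ell(\boldr')$ would place them at distance $<\tfrac\ell{10}+\tfrac\ell2=\tfrac{3\ell}5$.

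Finally, \eqref{nesting} is just \eqref{nestingproperty} applied to the box $\blam_{(2k\alpha+1)\ell}(\y)$ with the same dilation parameter $\alpha$: for $\y\in\x+\alpha\ell\Z^{Nd}$ the grid is $\y+\alpha\ell\Z^{Nd}$, its points inside $\blam_{(2k\alpha+1)\ell}(\y)$ are those $\y+\alpha\ell\bolds{j}$ with $\norm{\bolds{j}}\le k$, and $\alpha=\tfrac{((2k\alpha+1)\ell)-\ell}{2\ell k}$ lies in $[\tfrac35,\tfrac45]$, so the hypotheses of Definition~\ref{defsuitcov} hold, yielding both the equality \eqref{nesting} and the fact that this collection is a suitable $\ell$-covering of $\blam_{(2k\alpha+1)\ell}(\y)$. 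Statement \eqref{ell5cover} is the closest-grid-point estimate of \eqref{bdrycover} with the boundary constraint removed: for $\y\in\Z^{Nd}$ pick $\boldr\in\x+\alpha\ell\Z^{Nd}$ with $\norm{\y-\boldr}\le\tfrac{\alpha\ell}2\le\tfrac{2\ell}5$, so that $\blam_{\ell/5}(\y)\subset\blam_\ell(\boldr)$. The only nonroutine point is the boundary case in \eqref{bdrycover}, where one must exploit $\z\in\NboxLx$ rather than the grid spacing alone; everything else is bookkeeping once $\alpha$ is fixed, and in practice I would simply invoke \cite[Lemma~3.13]{GKber}, whose proof uses nothing beyond the cube structure of the underlying domain.
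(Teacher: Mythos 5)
Your proposal is correct and takes essentially the same route as the paper, which gives no proof and simply cites \cite[Lemma~3.13]{GKber}, the $N$-particle box being a cube in $\R^{Nd}$. Your added verification (existence of $\alpha$ from $\ell\le L/6$, the closest-grid-point argument with the separate boundary case for \eqref{bdrycover}, and the observation that $\alpha=\tfrac{(2k\alpha+1)\ell-\ell}{2\ell k}$ makes \eqref{nesting} an instance of the definition) accurately reproduces the content of the cited lemma.
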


\begin{remark} \label{remcovering}
In performing the $N$-particle multiscale analysis, we will
utilize Lemma \ref{lemcovering}  in the following way: we first choose some appropriate $k_1$ such that $2k_1\alpha+1 > 3N$. Given $J \in \N$, let $\bolda_1, \cdots, \bolda_t \in \NboxLx$ where $t \leq J N^N$, 
there exists $\y_1, \cdots, \y_t \in \x + \alpha\ell  \Z^{Nd}$ such that $\blam_{3N\ell}(\bolda_1) \cap \NboxLx \subseteq \blam_{\pa{2k_1\alpha+1}\ell}(\y_1) \subseteq \NboxLx, \cdots, \blam_{3N\ell}(\bolda_t) \cap \NboxLx \subseteq \blam_{\pa{2k_1\alpha+1}\ell}(\y_t) \subseteq \NboxLx$.

If the set $\blam_{(2  k_1 \alpha  +1) \ell}(\y^{(1)}) \cup \cdots \cup \blam_{(2  k_1 \alpha  +1) \ell}(\y^{(m)}) \subseteq \NboxLx$ is connected, where $\y^{(1)}, \cdots, \y^{(m)} \in \x + \alpha\ell  \Z^{Nd}$,   then we will take $k_m$ to be the smallest integer such that $\blam_{(2  k_1 \alpha  +1) \ell}(\y^{(1)}) \cup \cdots \cup \blam_{(2  k_1 \alpha  +1) \ell}(\y^{(m)}) \subseteq  \blam_{(2  k_m \alpha  +1) \ell}(\boldr) \subseteq \NboxLx$, for some $\boldr \in  \x + \alpha\ell  \Z^{Nd}$. 
Moreover, for each $\y \in \Z^{Nd}$, we set $\blam_{\ell}^{(\boldy)} = \blam_{\ell}(\boldr)$ where $\blam_{\ell}(\boldr)$ comes from \eq{ell5cover}. 
\end{remark}

\subsection{Partially  and fully interactive boxes}\label{secprint}

Following Chulaevsky and Suhov \cite{CS2,CS3}, we divide  $n$-particle boxes into two types: partially interactive and fully interactive.

\begin{definition} \label{PIFIbox}
An $n$-particle box $\nboxLa$ is said to be partially interactive (PI) if and only if there exists $\emptyset\ne \cJ \subsetneq \setn$ such that
 $\nboxLa \subseteq \cE_{\cJ}$, where $\cE_{\cJ} = \Bl\{ \boldx \in \R^{nd} \,\, \vert \,\, \min_{i \in \cJ, \, j \notin \cJ} \norm{x_i - x_j} > r_0   \Br\}$.\
 
If $\nboxLa$ is not partially interactive, it is said to be  fully interactive (FI).
\end{definition}

If the $n$-particle box $\nboxLa$ is partially interactive, by writing
 $\nboxLa = \boldlambda_{L}^{{\cJ}}(\bold{a}_{\cJ}) \times  \boldlambda_{L}^{{\cJ}^{c}}(\bold{a}_{\cJ^{c}} ) $  we are implicitly stating that   $\nboxLa \subseteq \cE_{\cJ}$ for  $\emptyset\ne \cJ \subsetneq \setn$.  
We  set  $\sigma_\cJ =\sigma \left( H_{\boldlambda_{L}^{{\cJ}}(\bold{a}_{\cJ})}  \right)  $ and $\sigma_{\cJ^{c}}=\sigma \left( H_{ \boldlambda_{L}^{{\cJ}^{c}}(\bold{a}_{\cJ^{c}} )  }  \right)$.
Given $\lambda \in \sigma_\cJ$, we write $P^{\cJ}_\lambda= \Chi_{\set{\lambda}}\pa{H_{\boldlambda_{L}^{{\cJ}}(\bold{a}_{\cJ})} }$.

\begin{lemma} \label{PIpro}
Let $\nboxLu = \boldlambda_{L}^{{\cJ}}(\bold{u}_{\cJ}) \times  \boldlambda_{L}^{{\cJ}^{c}}(\bold{u}_{\cJ^{c}} ) $ be a PI $n$-particle box. Then:

\begin{enumerate}
\item $ \Pi_{\cJ} \nboxLu  \,\, \bigcap \,\, \Pi_{\cJ^{c}} \nboxLu = \emptyset$, so events based on $ \boldlambda_{L}^{{\cJ}}(\bold{u}_{\cJ})$ and  $ \boldlambda_{L}^{{\cJ}^{c}}(\bold{u}_{\cJ^{c}} )  $ are independent.  
\item $H_{\nboxLu} = H_{\bold{\Lambda}_{L}^{{\cJ}}(\bold{u}_{\cJ})} \otimes I_{\bold{\Lambda}_{L}^{ {\cJ}^{c}}(\bold{u}_{\cJ^{c}})} + I_{\bold{\Lambda}_{L}^{{\cJ}}(\bold{u}_{\cJ})} \otimes H_{\bold{\Lambda}_{L}^{ {\cJ}^{c}}(\bold{u}_{\cJ^{c}})}$.

\item  $\sigma \left( H_{\nboxLu} \right) = \sigma \left( H_{\boldlambda_{L}(\bold{u_{\cJ}})}  \right) + \sigma \left( H_{ \boldlambda_{L}(\bold{u_{\cJ^{c}}})}  \right)$. 
 
\item  If $z \notin \sigma \left( H_{\nboxLu} \right)$, we have
\begin{align}
 R_{\nboxLu} (z)  &= \sum_{\lambda \in \sigma _{\cJ}}  \sum_{\mu \in \sigma _{\cJ^{c}}} \tfrac{1}{\lambda + \mu - z} P^{\cJ}_{\lambda} \otimes P^{\cJ^c}_{\mu}\\
& =\sum_{\lambda \in \sigma _{\cJ}}  P^{\cJ}_{\lambda} \otimes 
  R_{ \boldlambda_{L}(\bold{u_{\cJ^{c}}})} (z- \lambda )  = \sum_{\mu \in \sigma _{\cJ^{c}}} 
  R_{  \boldlambda_{L}(\bold{u_{\cJ}})  } (z- \mu )   \otimes P^{\cJ^c}_{\mu},\notag
\end{align}
and for all $\x,\y \in 
\nboxLu$ we get
\begin{align}\label{Rineqz}
\norm{\Rlamxyz} &\leq \sum_{\lambda \in \sigma_{\cJ}} \norm{\Chi_{\x_{\cJ^{c}}}  R_{ \boldlambda_{L}(\bold{u_{\cJ^{c}}})} (z- \lambda )  \Chi_{\y_{\cJ^{c}}}    },\\
\norm{\Rlamxyz} &\leq \sum_{\mu \in \sigma_{\cJ^{c}}} \norm{\Chi_{\x_{\cJ}}  R_{  \boldlambda_{L}(\bold{u_{\cJ}})  } (z- \mu )  \Chi_{\y_{\cJ}}    }.\notag
\end{align}
\end{enumerate}
 \end{lemma}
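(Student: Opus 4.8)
The plan is to exploit the tensor-product structure of a PI box, which is the heart of the matter.

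\medskip

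\noindent\textbf{Step 1: The geometric disjointness.} Since $\nboxLu$ is PI, there is $\emptyset \ne \cJ \subsetneq \setn$ with $\nboxLu \subseteq \cE_{\cJ}$, meaning $\norm{u_i - u_j} > r_0$ whenever $i \in \cJ$, $j \notin \cJ$... wait, more precisely every point $\x$ in the box satisfies $\norm{x_i - x_j} > r_0$. First I would note that $\Pi_\cJ \nboxLu = \bigcup_{i \in \cJ} \Lambda_L(u_i)$ and $\Pi_{\cJ^c}\nboxLu = \bigcup_{j \notin \cJ}\Lambda_L(u_j)$; I claim these are disjoint. Indeed if $y$ lay in $\Lambda_L(u_i)\cap\Lambda_L(u_j)$ for some $i\in\cJ$, $j\notin\cJ$, then picking a point $\x$ of the box whose $i$-th and $j$-th coordinates are near $u_i, u_j$ respectively forces $\norm{x_i - x_j}$ to be bounded by something of order (distance from $x_i$ to $y$) + (distance from $y$ to $x_j$), which can be made $\le r_0$, contradicting $\nboxLu \subseteq \cE_\cJ$ — here one uses $r_0 < \infty$ together with the side-length bookkeeping; actually the clean statement is that $\cE_\cJ$ being defined by a strict inequality on \emph{all} $\x$ in the box already forces the closed boxes $\closure{\Lambda_L(u_i)}$ and $\closure{\Lambda_L(u_j)}$ to be separated by more than $r_0$, hence in particular disjoint. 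Since the random potential on a one-particle box depends only on $\omega_k$ with $k \in \widehat{\Lambda}$, and these index sets are disjoint for disjoint boxes, independence of events based on $\boldlambda_L^\cJ(\boldu_\cJ)$ and $\boldlambda_L^{\cJ^c}(\boldu_{\cJ^c})$ follows. This proves (i).

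\medskip

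\noindent\textbf{Step 2: The Hamiltonian splits.} For (ii), write $L^2(\nboxLu) \cong L^2(\boldlambda_L^\cJ(\boldu_\cJ)) \otimes L^2(\boldlambda_L^{\cJ^c}(\boldu_{\cJ^c}))$. The Dirichlet Laplacian $\Delta^{(n)}_{\nboxLu}$ splits as $\Delta^{\cJ}\otimes I + I\otimes\Delta^{\cJ^c}$ by the product structure of the domain. The random potential splits by \eq{rpn}: $V^{(n)}_{\bom,\nboxLu}(\x) = \sum_{i\in\cJ}V^{(1)}_{\bom,\Lambda_L(u_i)}(x_i) + \sum_{j\notin\cJ}V^{(1)}_{\bom,\Lambda_L(u_j)}(x_j)$, the first sum acting on the first factor, the second on the second. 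The crucial point is the interaction: $U_{\nboxLu}(\x) = \sum_{1\le i<j\le n}\widetilde U(x_i - x_j)$, and a pair $(i,j)$ with $i\in\cJ$, $j\notin\cJ$ contributes $\widetilde U(x_i - x_j)$, which vanishes because $\norm{x_i - x_j} > r_0$ on the box. Hence only "internal" pairs survive, and $U_{\nboxLu} = U^\cJ \otimes I + I \otimes U^{\cJ^c}$. Combining, $H_{\nboxLu} = H_{\boldlambda_L^\cJ(\boldu_\cJ)}\otimes I + I\otimes H_{\boldlambda_L^{\cJ^c}(\boldu_{\cJ^c})}$. This is (ii), and I expect the mild-but-real obstacle here to be doing the domain/boundary-condition bookkeeping carefully so that the Dirichlet forms genuinely split — everything else is formal.

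\medskip

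\noindent\textbf{Step 3: Spectrum and resolvent.} Statement (iii) is the standard fact that $\sigma(A\otimes I + I\otimes B) = \sigma(A)+\sigma(B)$ for self-adjoint operators, valid here since both factor operators have compact resolvent (finite volume), so the spectra are discrete and the sum is closed. For (iv), both factor operators have purely discrete spectrum, so $H_{\boldlambda_L^\cJ(\boldu_\cJ)} = \sum_{\lambda\in\sigma_\cJ}\lambda\,P^\cJ_\lambda$ and likewise for $\cJ^c$; then $H_{\nboxLu} = \sum_{\lambda,\mu}(\lambda+\mu)P^\cJ_\lambda\otimes P^{\cJ^c}_\mu$ and functional calculus gives the displayed series for $R_{\nboxLu}(z)$, the three forms being the same expression grouped differently (resumming over $\mu$ or over $\lambda$ recovers a single factor resolvent). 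Finally, for \eq{Rineqz}: sandwich the identity $R_{\nboxLu}(z) = \sum_{\lambda\in\sigma_\cJ}P^\cJ_\lambda\otimes R_{\boldlambda_L(\boldu_{\cJ^c})}(z-\lambda)$ between $\Chi_\x$ and $\Chi_\y$, writing $\Chi_\x = \Chi_{\x_\cJ}\otimes\Chi_{\x_{\cJ^c}}$, use that $\Chi_{\x_\cJ}P^\cJ_\lambda\Chi_{\y_\cJ}$ has norm $\le 1$ (it is a product of an orthogonal projection with norm-$\le 1$ multiplication operators on either side — indeed $\norm{\Chi_{\x_\cJ}}\le 1$), and apply the triangle inequality over $\lambda\in\sigma_\cJ$; the second inequality in \eq{Rineqz} is identical with the roles of $\cJ$ and $\cJ^c$ swapped. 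The only thing to watch is that these sums are finite, which holds because finite-volume spectra below any bound are finite (Lemma~\ref{NE}) — but one should note the sums in the resolvent formula run over the full spectrum, so either invoke that for fixed $z$ the series converges in operator norm by functional calculus (no finiteness needed), or restrict attention to the relevant energy window as elsewhere in the paper.
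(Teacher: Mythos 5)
Your proposal is correct. The paper in fact states Lemma~\ref{PIpro} without proof (it is the standard tensor-product decomposition for partially interactive boxes going back to Chulaevsky--Suhov), and your argument supplies exactly the expected details: the cleanest version of Step~1 is the one you arrive at after the detour, namely that a point $y\in \Lambda_L(u_i)\cap\Lambda_L(u_j)$ with $i\in\cJ$, $j\in\cJ^c$ would yield $\x\in\nboxLu$ with $x_i=x_j=y$, contradicting $\nboxLu\subseteq\cE_{\cJ}$; and your closing remark about \eq{Rineqz} is the right one --- the inequality holds with a possibly infinite right-hand side, and its finiteness is only needed (and is established via Combes--Thomas and Lemma~\ref{NE}) in the applications such as Lemma~\ref{PIsuit}.
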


\begin{lemma}\label{PIsuit}
 Let $\nboxlu = \boldlambda_{\ell}^{{\cJ}}(\bold{u}_{\cJ}) \times  \boldlambda_{\ell}^{{\cJ}^{c}}(\bold{u}_{\cJ^{c}} ) $ be a PI $n$-particle box and  $E \le E^{(n)}$. If $\ell$ is sufficiently large, the following holds:
\begin{enumerate}

\item   Given $\theta>2nd+2$, suppose  $\boldlambda_{\ell}^{{\cJ}}(\bold{u_{\cJ}})$ is $(\theta, \,E - \mu  )$-suitable  for every $\mu \in \sigma_{\cJ^c} \cap (-\infty, \,2E^{(n)}]$ and  $\boldlambda_{\ell}^{{\cJ^c}}(\bold{u_{\cJ^{c}}})$ is $(\theta, \,E - \lambda  )$-suitable  for every $\lambda \in \sigma_{\cJ} \cap (-\infty, \,2E^{(n)}]$.  Then  $\nboxlu$ is $\left(\tfrac{\theta}{2}, \, E \right)$-suitable.

\item   Given $0 < m \le \frac{1}{6}  \sqrt{E^{(n) }}$, 
suppose  $\boldlambda_{\ell}^{{\cJ}}(\bold{u_{\cJ}})$ is $(m,\, E - \mu)$-regular  for every $\mu \in \sigma_{\cJ^c} \cap (-\infty, \,2\,E^{(n)}]$ and  $\boldlambda_{\ell}^{{\cJ^c}}(\bold{u_{\cJ^{c}}})$ is $(m,\, E - \lambda)$-regular  for every $\lambda \in \sigma_{\cJ} \cap (-\infty, \,2\,E^{(n)}]$.  Then  $\nboxlu$ is $\pa{ m - \tfrac{100 \pa{nd+1} \log \pa{2\ell}}{\ell}, \, E }$-regular.

\item   Given $0 < \zeta^{\prime} < \zeta < 1$, suppose  $\boldlambda_{\ell}^{{\cJ}}(\bold{u_{\cJ}})$ is $(\zeta,\, E - \mu )$-SES  for every $\mu \in \sigma_{\cJ^c} \cap (-\infty, \,2\,E^{(n)}]$ and  $\boldlambda_{\ell}^{{\cJ^c}}(\bold{u_{\cJ^{c}}})$ is $(\zeta,\, E - \lambda)$-SES  for every $\lambda \in \sigma_{\cJ} \cap (-\infty, \,2\,E^{(n)}]$.  Then  $\nboxlu$ is $\left(\zeta^{\prime}, \, E \right)$-SES.

\end{enumerate}
\end{lemma}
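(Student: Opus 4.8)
The three statements are proved by the same argument, differing only in the estimate invoked on the ``interacting block'' and in the bookkeeping of a polynomial loss; I describe the plan for~(i) and indicate the changes for~(ii) and~(iii). Fix $\bolda,\boldb\in\nboxlu$ with $\norm{\bolda-\boldb}\ge\frac{\ell}{100}$. Since the sup-norm on $\R^{nd}$ splits over the blocks $\cJ$ and $\cJ^{c}$, we have $\norm{\bolda-\boldb}=\max\{\norm{\bolda_{\cJ}-\boldb_{\cJ}},\,\norm{\bolda_{\cJ^{c}}-\boldb_{\cJ^{c}}}\}$, and since the hypotheses are symmetric under $\cJ\leftrightarrow\cJ^{c}$ we may assume $\norm{\bolda_{\cJ}-\boldb_{\cJ}}=\norm{\bolda-\boldb}\ge\frac{\ell}{100}$. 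I would first note that $E\notin\sigma(H_{\nboxlu})$: by Lemma~\ref{PIpro}(iii), $\sigma(H_{\nboxlu})=\sigma_{\cJ}+\sigma_{\cJ^{c}}$, and if $E=\lambda+\mu$ with $\lambda\in\sigma_{\cJ}$, $\mu\in\sigma_{\cJ^{c}}$, then $0\le\mu=E-\lambda\le E^{(n)}$, so $\mu$ falls in the range covered by the hypothesis while $E-\mu=\lambda\in\sigma_{\cJ}$, contradicting that $\boldlambda_{\ell}^{\cJ}(\boldu_{\cJ})$ is $(\theta,E-\mu)$-suitable. Then the second bound in \eqref{Rineqz} gives
\[
\norm{\Chi_{\bolda} R_{\nboxlu}(E)\Chi_{\boldb}}\;\le\;\sum_{\mu\in\sigma_{\cJ^{c}}}\norm{\Chi_{\bolda_{\cJ}}\,R_{\boldlambda_{\ell}^{\cJ}(\boldu_{\cJ})}(E-\mu)\,\Chi_{\boldb_{\cJ}}},
\]
and the idea is to split this sum at $\mu=2E^{(n)}$.

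For $\mu\in\sigma_{\cJ^{c}}$ with $\mu\le2E^{(n)}$ the hypothesis applies directly, so each such term is $\le\ell^{-\theta}$ (recall $\norm{\bolda_{\cJ}-\boldb_{\cJ}}\ge\frac{\ell}{100}$); by Lemma~\ref{NE} the number of such $\mu$ is at most $\tr\Chi_{(-\infty,2E^{(n)}]}\bigl(H_{\boldlambda_{\ell}^{\cJ^{c}}(\boldu_{\cJ^{c}})}\bigr)\le C\,\ell^{(n-1)d}$, and since $\theta>2nd+2>2(n-1)d$ this part is $\le\frac12\ell^{-\theta/2}$ once $\ell$ is large. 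For $\mu>2E^{(n)}$ one has $E-\mu<-E^{(n)}<0\le\inf\sigma\bigl(H_{\boldlambda_{\ell}^{\cJ}(\boldu_{\cJ})}\bigr)$, with spectral gap at least $\mu-E^{(n)}$, so the Combes--Thomas estimate of Lemma~\ref{cthomas} (using $\norm{\bolda_{\cJ}-\boldb_{\cJ}}\ge\frac{\ell}{100}$ and $\ell$ large) bounds each such term by $\tfrac43(\mu-E^{(n)})^{-1}e^{-\frac{1}{300}\sqrt{\mu-E^{(n)}}\,\ell}$; counting eigenvalues of $H_{\boldlambda_{\ell}^{\cJ^{c}}(\boldu_{\cJ^{c}})}$ in unit energy windows with Lemma~\ref{NE}, the sum over these $\mu$ converges and is $\le C'\ell^{(n-1)d}e^{-c\ell}$ for some $c>0$, hence $\le\frac12\ell^{-\theta/2}$ for $\ell$ large. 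Adding the two parts proves~(i).

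Parts~(ii) and~(iii) use the identical split. For~(ii), each $\mu\le2E^{(n)}$ term is $\le e^{-m\norm{\bolda_{\cJ}-\boldb_{\cJ}}}=e^{-m\norm{\bolda-\boldb}}$, and multiplying by the eigenvalue count $\le C\ell^{(n-1)d}\le(2\ell)^{nd+1}=e^{(nd+1)\log(2\ell)}$ and using $\norm{\bolda-\boldb}\ge\frac{\ell}{100}$ turns this into $e^{-(m-\frac{100(nd+1)\log(2\ell)}{\ell})\norm{\bolda-\boldb}}$, exactly the claimed rate; for $\mu>2E^{(n)}$ the constraint $m\le\frac16\sqrt{E^{(n)}}$ yields $\frac13\sqrt{\mu-E^{(n)}}\ge2m$, so in the form of the Combes--Thomas bound with $\norm{\bolda_{\cJ}-\boldb_{\cJ}}$ in the exponent the rate exceeds $m$ by at least $m$ (and grows with $\mu$), whence the tail sums to $\le C''\ell^{(n-1)d}e^{-c\ell}e^{-m\norm{\bolda-\boldb}}$, negligible for $\ell$ large. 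For~(iii), each $\mu\le2E^{(n)}$ term is $\le e^{-\ell^{\zeta}}$, so that part is $\le C\ell^{(n-1)d}e^{-\ell^{\zeta}}$, and with the Combes--Thomas tail $\le C'\ell^{(n-1)d}e^{-c\ell}$ the total is $\le e^{-\ell^{\zeta'}}$ for $\ell$ large, since both $\ell^{\zeta}$ and $\ell$ eventually dominate $\ell^{\zeta'}+(nd+1)\log(2\ell)$.

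The one point that needs care is the high-energy tail $\mu>2E^{(n)}$: the argument rests on the fact that $E\le E^{(n)}$ forces $E-\mu$ strictly below $\inf\sigma=0$ with gap at least $\mu-E^{(n)}$, so that Combes--Thomas decay in $\ell$ overwhelms the polynomial eigenvalue count of Lemma~\ref{NE}, and --- for~(ii) --- on the fact that $m\le\frac16\sqrt{E^{(n)}}$ is precisely what makes this tail decay at least as fast as the main term $e^{-m\norm{\bolda-\boldb}}$. Everything else is the routine accounting of polynomial losses, which is what produces the loss of a factor $2$ in $\theta$ for suitability, of $\frac{100(nd+1)\log(2\ell)}{\ell}$ in the rate for regularity, and merely of $\zeta\to\zeta'$ for SES.
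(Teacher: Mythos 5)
Your proof follows the same route as the paper's: the tensor-product resolvent expansion of Lemma~\ref{PIpro}, the split of the spectral sum over $\sigma_{\cJ^c}$ at $2E^{(n)}$ with Lemma~\ref{NE} counting the low-lying eigenvalues, and the Combes--Thomas bound of Lemma~\ref{cthomas} for the high-energy tail. The paper writes out case (ii) and declares (i) and (iii) similar, whereas you detail (i); your explicit check that $E\notin\sigma(H_{\nboxlu})$ and your sharper form of the reduction (choosing the block on which $\norm{\bolda-\boldb}$ is actually attained, not merely one where the distance exceeds $\ell/100$) are welcome refinements but do not change the argument.
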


\begin{proof}  
We prove (ii),   the proofs of (i) and  (iii) are similar. 
Given $\boldx, \boldy \in \Nboxlu$ with $\norm{\boldx - \boldy} \geq \tfrac{\ell}{100},$ then either we have $\norm{\boldx_{\cJ} - \boldy_{\cJ}} \geq \tfrac{\ell}{100},$ or $\norm{\boldx_{\cJ^{c}} - \boldy_{\cJ^{c}}} \geq \tfrac{\ell}{100}.$ Without loss of generality, we suppose that $\norm{\boldx_{\cJ} - \boldy_{\cJ}} \geq \tfrac{\ell}{100}$.  Then, using \eq{Rineqz}, Lemma \ref{NE}, and  that $\blam_{\cJ}= \boldlambda_{\ell}^{{\cJ}}(\bold{u_{\cJ}})$ is $(m,\, E - \mu)$-regular  for every $\mu \in \sigma_{\cJ^c} \cap (-\infty, \,2E^{(n)}]$, setting $\sigma_{\cJ^c}(E) = \sigma_{\cJ^c} \cap (-\infty, \,E]$, we get

\begin{align} \label{begest}
&\norm{\Rlamxy} \leq \sum_{\mu \in \sigma_{\cJ^{c}}} \norm{\Chi_{\x_{\cJ}}  R_{\blam_{\cJ}}  (E - \mu)  \Chi_{\y_{\cJ}}    } \\
& = \!\!\sum_{ \mu \in \sigma_{\cJ^c}(2\En) } \norm{\Chi_{\x_{\cJ}}  R_{\blam_{\cJ}}  (E - \mu)  \Chi_{\y_{\cJ}}    } + \sum_{\mu \in \sigma_{\cJ^{c}}\setminus\sigma_{\cJ^c}(2\En)} \!\!\norm{\Chi_{\x_{\cJ}}  R_{\blam_{\cJ}}  (E - \mu)  \Chi_{\y_{\cJ}}    } \notag \\
&\leq {C_{nd} \pa{ 2E^{(n)}}^{\tfrac{nd}{2}}\ell^{nd}}{e^{-m \norm{\x_{\cJ} - \y_{\cJ}}}} + \sum_{\mu \in\sigma_{\cJ^{c}}\setminus\sigma_{\cJ^c}(2\En)}\!\! \norm{\Chi_{\x_{\cJ}}  R_{\blam_{\cJ}}  (E - \mu)  \Chi_{\y_{\cJ}}    }  \notag\\
&\leq {\ell^{nd + 1}}{e^{-m \norm{\x_{\cJ} - \y_{\cJ}}}}  + \sum_{k = 2}^{\infty}  \;\;\sum_{\substack{\mu \in \sigma_{\cJ^{c}} \\ k\,E^{(n)} < \mu \leq  \pa{k+1}\,E^{(n)}    }} \norm{\Chi_{\x_{\cJ}}  R_{\blam_{\cJ}}  (E - \mu)  \Chi_{\y_{\cJ}}    }. \notag 
\end{align}

Applying \eqref{ctho} for $\mu \in \sigma_{\cJ^{c}}$ with $k\,E^{(n)} <  \mu \;\leq  \pa{k+1} \,E^{(n)}$ we get
\begin{align}
\norm{\Chi_{\x_{\cJ}}  R_{\blam_{\cJ}}  (E - \mu)  \Chi_{\y_{\cJ}}    }& \leq \tfrac{4}{3} \pa{ \inf \sigma_{\cJ}  - \pa{E-\mu}  }^{-1} e^{-\frac{1}{3}  \sqrt{\inf \sigma_{\cJ}   - \pa{E-\mu}  } \norm{\x_{\cJ} - \y_{\cJ}}} \notag \\ \label{kRJ}
& \leq \tfrac{4}{3} \pa{ kE^{(n)} - E  }^{-1} e^{-\frac{1}{3}  \sqrt{kE^{(n)} - E } \norm{\x_{\cJ} - \y_{\cJ}}}\\
& \leq \tfrac{4}{3} \pa{ (k-1)E^{(n)}  }^{-1} e^{-\frac{1}{3}  \sqrt{(k-1)E^{(n) } }\norm{\x_{\cJ} - \y_{\cJ}}}.  \notag
\end{align}
Using  Lemma \ref{NE} and \eq{kRJ}, we have ($k\ge 2$, $\norm{\boldx_{\cJ} - \boldy_{\cJ}} \geq \tfrac{\ell}{100}$)
\begin{align}
&\sum_{\substack{\mu \in \sigma_{\cJ^{c}} \\ k E^{(n)} < \mu \leq  \pa{k+1}E^{(n)}    }}\norm{\Chi_{\x_{\cJ}}  R_{\blam_{\cJ}}  (E - \mu)  \Chi_{\y_{\cJ}}    } \\ \notag
&  \hskip80pt \leq C_{nd}\ell^{nd} \pa{\pa{k+1}\,E^{(n)} }^{\tfrac{nd}{2}-1}\e^{-\frac{1}{3}  \sqrt{(k-1)E^{(n) } }\norm{\x_{\cJ} - \y_{\cJ}}}\\
\notag  & \hskip80pt \le e^{-\frac{1}{6}  \sqrt{(k-1)E^{(n) } }\norm{\x_{\cJ} - \y_{\cJ}}},
\end{align}
for sufficiently large $\ell$, so 
\begin{align}
&\sum_{k = 2}^{\infty}  \;\;\sum_{\substack{\mu \in \sigma_{\cJ^{c}} \\ k\,E^{(n)} < \mu \leq  \pa{k+1}\,E^{(n)}    }} \norm{\Chi_{\x_{\cJ}}  R_{\blam_{\cJ}}  (E - \mu)  \Chi_{\y_{\cJ}}    }\\ \notag
& \qquad  \qquad \le    \sum_{k = 2}^{\infty} e^{-\frac{1}{6}  \sqrt{(k-1)E^{(n) } }\norm{\x_{\cJ} - \y_{\cJ}}}\le 
2 e^{-\frac{1}{6}  \sqrt{E^{(n) } }\norm{\x_{\cJ} - \y_{\cJ}}},
\end{align}
for $\ell$ large.  Using \eq{begest} and $m \le \frac{1}{6}  \sqrt{E^{(n) }}$, we get
\begin{align}
\norm{\Rlamxy}& \leq  {\ell^{nd + 1}}{e^{-m \norm{\x_{\cJ} - \y_{\cJ}}}}  + 2 e^{-\frac{1}{6}  \sqrt{E^{(n) } }\norm{\x_{\cJ} - \y_{\cJ}}}& \\
& \le 2 {\ell^{nd + 1}}{e^{-m \norm{\x_{\cJ} - \y_{\cJ}}}}  \leq e^{- \pa{ m - \tfrac{100 (nd+1) \log (2\ell) }{\ell}   } \norm{\x - \y}   }.   \notag
\end{align}
\end{proof}

\begin{definition}  \label{L distant}
Let $\nboxLa$ and $\nboxLb$ be a pair of  $n$-particle boxes.  We say  
$\nboxLa$ and $\nboxLb$ are $L$-distant if  and only if 
\beq
\max \bigl\{\dist \, ( \,  \boldb, \, \cS_{\bolda}^{n} ) , \,\, \dist \, ( \, \bolda, \, \cS_{\boldb}^{n})   \bigr\} \ge  3nL .
\eeq
\end{definition}

The following lemma gives a sufficient condition for a pair of FI $n$-particle boxes to be fully separated, and hence for events based on these boxes to be independent. We omit the proof.

\begin{lemma} \label{part1prop2} Let   
 $\nboxLa$ and $\nboxLb$ be a pair of FI $n$-particle boxes, where $L$ is sufficiently large.  Then
 $\nboxLa$ and $\nboxLb$ are fully separated if
\beq\label{FIsep}
\max_{x \in \cS_{ \bolda},y \in  \cS_{\boldb}} \norm{x-y} \ge 3nL .
\eeq
In particular,  a pair of  $L$-distant FI $n$-particle boxes are fully separated.
 \end{lemma}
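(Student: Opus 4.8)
The plan is to reduce the claim to a geometric statement about the one-particle projections $\Pi\nboxLa=\bigcup_{i}\Lambda_L(a_i)$ and $\Pi\nboxLb=\bigcup_j\Lambda_L(b_j)$, and then exploit the structure of FI boxes. Recall that an $n$-particle box $\nboxLa$ is fully interactive precisely when it is \emph{not} contained in any $\cE_{\cJ}$ for $\emptyset\ne\cJ\subsetneq\setn$; concretely, for every proper nonempty $\cJ$ there exist $i\in\cJ$, $j\notin\cJ$ with $\norm{a_i-a_j}\le r_0+L$ (since points of $\Lambda_L(a_i),\Lambda_L(a_j)$ lie within $L$ of the centers). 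The standard consequence, which I would first record, is that the "interaction graph" on $\{1,\dots,n\}$ — with an edge between $i$ and $j$ whenever $\norm{a_i-a_j}\le r_0+L$ — is connected; equivalently $\diam\bolda\le n(r_0+L)$, so for $L$ large, $\diam\bolda\le 2nL$ (using $r_0<\infty$ fixed and $L$ large enough that $r_0\le L$, giving $n(r_0+L)\le 2nL$). The same bound holds for $\boldb$.

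Next I would show that \eqref{FIsep} forces $\Pi\nboxLa\cap\Pi\nboxLb=\emptyset$. Suppose not: then some $\Lambda_L(a_i)$ meets some $\Lambda_L(b_j)$, so $\norm{a_i-b_j}<L$. For an arbitrary pair $x\in\cS_{\bolda}$, $y\in\cS_{\boldb}$, say $x=a_k$, $y=b_m$, the triangle inequality gives
\[
\norm{a_k-b_m}\le \norm{a_k-a_i}+\norm{a_i-b_j}+\norm{b_j-b_m}\le \diam\bolda + L + \diam\boldb \le 2nL+L+2nL < 3nL
\]
for $n\ge 1$ and $L$ large (indeed $4nL+L<5nL\le 3nL$ fails, so I need to be more careful with the diameter constant — using the connectedness bound $\diam\bolda\le n\cdot\tfrac{3}{2}L$ style estimates, or simply noting the path in the interaction graph has at most $n-1$ edges each of length $\le r_0+L\le\tfrac{3}{2}L$ for $L\ge 2r_0$, so $\diam\bolda\le\tfrac{3}{2}(n-1)L$; then the sum is $\le \tfrac32(n-1)L+L+\tfrac32(n-1)L=(3n-2)L<3nL$). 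This contradicts \eqref{FIsep}, so $\Pi\nboxLa$ and $\Pi\nboxLb$ are disjoint, which is exactly \eqref{fullsep}, i.e., the boxes are fully separated.

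Finally, for the "in particular" clause, I would check that $L$-distance implies \eqref{FIsep}. If $\nboxLa,\nboxLb$ are $L$-distant, then (say) $\dist(\boldb,\cS_{\bolda}^n)\ge 3nL$, meaning $\min_{i,j}\norm{b_j-a_i}\ge 3nL$; in particular every $\norm{x-y}$ with $x\in\cS_\bolda$, $y\in\cS_\boldb$ is $\ge 3nL$, so the max in \eqref{FIsep} is certainly $\ge 3nL$. Hence the lemma follows. The only delicate point — and the place I would be most careful about constants — is the diameter bound for FI boxes coming from graph-connectedness; everything else is the triangle inequality. Since the paper says "We omit the proof," presumably exactly this routine argument (with the threshold "$L$ sufficiently large" absorbing the $r_0$-versus-$L$ comparisons and the precise constant $3n$ versus $3n-2$) is what is intended.
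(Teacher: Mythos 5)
Your proof is correct, and it is exactly the standard argument the paper intends (the paper omits the proof): full interactivity forces the graph on $\set{1,\dots,n}$ with edges $\norm{a_i-a_j}\le r_0+L$ to be connected, hence $\diam\bolda\le (n-1)(r_0+L)\le\tfrac32(n-1)L$ for $L\ge 2r_0$, and then the triangle inequality gives $\norm{a_k-b_m}\le (3n-2)L<3nL$ for all $k,m$ whenever some $\Lambda_L(a_i)$ meets some $\Lambda_L(b_j)$, contradicting \eqref{FIsep}; your corrected constant is the right one. One small inaccuracy in the last paragraph: $\dist(\boldb,\cS_{\bolda}^n)$ unpacks (with the sup-norm on $\R^{nd}$) to $\max_j\min_i\norm{b_j-a_i}$, not $\min_{i,j}\norm{b_j-a_i}$; this does not affect your conclusion, since $\max_{x\in\cS_{\bolda},y\in\cS_{\boldb}}\norm{x-y}\ge\max_j\min_i\norm{b_j-a_i}\ge 3nL$ still yields \eqref{FIsep}.
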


\subsection{Resonant rectangles}
\begin{definition} Let $\boldlambda = \prod_{i = 1, \dots, n} \Lambda_{L_i} (a_i)$ with $L = \min_{i =1,.., n}\set{L_i}>0$ be an $n$-particle rectangle,  $E\in \R$, $s > 0$, and $\beta \in (0, \, 1)$. 
\begin{enumerate}
\item  $\boldlambda$ is called  $(E,s)$-suitably resonant provided $\dist \pa{\sigma \pa{H_{\boldlambda}^{(n)}} , E } < L^{-s}$.  Otherwise, $\boldlambda$ is said to be $(E,s)$-suitably nonresonant.

\item $\boldlambda$ is called $(E,\beta)$-resonant provided $\dist \Bl( \sigma \bigl( H_{\boldlambda}^{(n)} \bigr) , E \Br) < \tfrac{1}{2}  e^{-L^{\beta}} $. Otherwise, $\boldlambda$ is said to be $(E,\beta)$-nonresonant.
\end{enumerate}
\end{definition}

\section{The  initial  step for the bootstrap multiscale analysis}\label{secinitialMSA}
 
We now show that the hypotheses of Theorem~\ref{maintheorem} are verified for energies  at the bottom of the spectrum.  Recall $\Sigma\up{n}=[0,\infty)$.

\begin{theorem}\label{smallcl} Let  $\theta>0$ and $0<p_0<1 $, and fix  $\eps >0$. Then for all $n\in \N$ there exists  $\cL_n=\cL_n (d, u_-, \delta_\pm,\mu,\theta,p_0,\eps)$, such that for all $L \ge \cL_n$ and  $\x\in \R^{nd}$ we have
\beq\label{initialsuit2}
\P\set{\nboxLx \sqtx{is} (\theta,\,E)\text{-suitable}}\ge 1 -  p_0 \qtx{for all} E \le E_L\up{n},
\eeq
where
\beq\label{EL}
E_L\up{n}= \tfrac n 2 \pa{d\log (L+\delta_+ +2) -\log p_0 +\log n}^{-\frac {2+\eps}d}.
\eeq

\end{theorem}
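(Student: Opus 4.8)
The plan is to reduce the claim, via the Combes--Thomas estimate of Lemma~\ref{cthomas}, to a bound on the probability that the ground state energy $\inf\sigma\bigl(H_{\bom,\boldlambda}^{(n)}\bigr)$ of $\boldlambda=\nboxLx$ is small, then to reduce this one particle at a time using $U\ge0$ and the tensor structure, and finally to establish the one-particle version by a Dirichlet--Neumann bracketing / large-deviations argument of Lifshitz type.

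\smallskip\noindent\textbf{Step 1: Combes--Thomas reduction.} Fix $E\le E_L\up{n}$ and argue on the event $\bigl\{\inf\sigma(H_{\bom,\boldlambda}^{(n)})\ge 2E_L\up{n}\bigr\}$. Then $E<\inf\sigma(H_{\bom,\boldlambda}^{(n)})$, so $E\notin\sigma(H_{\bom,\boldlambda}^{(n)})$, and Lemma~\ref{cthomas} (using monotonicity of the bound in the spectral gap) gives, for $L$ large and $\norm{\bolda-\boldb}\ge L/100$,
\[
\norm{\Rlamab}\le \tfrac43\bigl(E_L\up{n}\bigr)^{-1}\exp\!\Bigl(-\tfrac{1}{300}\sqrt{E_L\up{n}}\;L\Bigr).
\]
Since $E_L\up{n}$ decays only like a negative power of $\log L$, the quantity $\sqrt{E_L\up{n}}\,L$ beats $\theta\log L-\log E_L\up{n}$ for all large $L$, so the right-hand side is $\le L^{-\theta}$ and $\boldlambda$ is $(\theta,E)$-suitable on this event. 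Hence it suffices to prove $\P\bigl\{\inf\sigma(H_{\bom,\boldlambda}^{(n)})<2E_L\up{n}\bigr\}\le p_0$.

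\smallskip\noindent\textbf{Step 2: reduction to one particle.} Because $U\ge0$ and, by \eqref{specsumt}, $\sigma(H_{0,\bom,\boldlambda}^{(n)})$ is the Minkowski sum of the $\sigma\bigl(H_{\bom,\Lambda_L(x_i)}^{(1)}\bigr)$, we have $\inf\sigma(H_{\bom,\boldlambda}^{(n)})\ge\sum_{i=1}^n\inf\sigma\bigl(H_{0,\bom,\Lambda_L(x_i)}^{(1)}\bigr)$. Thus the event above is contained in $\bigcup_{i=1}^n\bigl\{\inf\sigma(H_{0,\bom,\Lambda_L(x_i)}^{(1)})<s_L\bigr\}$ with $s_L:=\tfrac2n E_L\up{n}=\bigl(d\log(L+\delta_++2)-\log p_0+\log n\bigr)^{-(2+\eps)/d}$, and a union bound reduces the problem to proving $\sup_{y\in\R^d}\P\bigl\{\inf\sigma(H_{0,\bom,\Lambda_L(y)}^{(1)})<s_L\bigr\}\le p_0/n$.

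\smallskip\noindent\textbf{Step 3: the one-particle Lifshitz-type bound.} Set $\ell=\ell_L:=\lfloor\sqrt{c_0/s_L}\rfloor$ for an appropriate $c_0=c_0(d,u_-,\delta_-)>0$, so that $\ell_L\sim(\log L)^{(2+\eps)/(2d)}$ and $c_0\ell_L^{-2}\ge s_L$. By Dirichlet--Neumann bracketing, decompose $\Lambda_L(y)$ into its interior $\ell$-cubes together with a boundary collar made of pieces each carrying a Dirichlet face, which therefore contribute at least $c\,\ell^{-2}$. On an interior $\ell$-cube, \eqref{u} (we may take $\delta_-\le1$) gives $V_{\bom,\Lambda_L(y)}^{(1)}\ge u_-\tau\sum_k\Chi_{\Lambda_{\delta_-}(k)}$ with $\tau=\kappa/\ell^2$, the sum running over lattice points $k$ with $\Lambda_{\delta_-}(k)$ inside that $\ell$-cube and $\omega_k\ge\tau$; a Poincaré inequality on unit cubes together with a Poincaré--Wirtinger inequality relative to a set of positive measure then yields $\inf\sigma\bigl(-\Delta^{\mathrm N}_{\Lambda_\ell}+V\bigr)\ge c_0'\,\ell^{-2}$ whenever at least $\beta\ell^d$ such ``good'' points are present, with $\beta=\beta(d)\in(0,1)$ fixed. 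Hence $\inf\sigma(H_{0,\bom,\Lambda_L(y)}^{(1)})\ge c_0\ell_L^{-2}\ge s_L$ on the event $\mathcal G$ that every interior $\ell_L$-cube has at least $\beta\ell_L^d$ good points. For one such cube, among its $\gtrsim 2^{-d}\ell_L^d$ interior lattice points the probability that more than half are bad is, by a binomial bound and $\mu([0,\tau))\le\norm{\rho}_\infty\tau$, at most $(4\norm{\rho}_\infty\kappa/\ell_L^2)^{c\,\ell_L^d}\le 2^{-c\,\ell_L^d}$ once $L$ is large; summing over the $\le L^d$ cubes,
\[
\P(\mathcal G^{\mathrm c})\le L^d\,2^{-c\,\ell_L^d}=\exp\!\bigl(d\log L-c\log2\cdot\ell_L^d\bigr).
\]
Since $\ell_L^d\sim s_L^{-d/2}=\bigl(d\log(L+\delta_++2)-\log p_0+\log n\bigr)^{(2+\eps)/2}$ and $(2+\eps)/2>1$, the exponent is $\le\log(p_0/n)$ for all $L\ge\cL_n$, with $\cL_n$ depending only on $d,u_-,\delta_\pm,\norm{\rho}_\infty,\theta,p_0,\eps$ and $n$; thus $\P(\mathcal G^{\mathrm c})\le p_0/n$, which completes the proof.

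\smallskip\noindent\textbf{Expected main obstacle.} The delicate ingredient is the lower bound $\inf\sigma(-\Delta^{\mathrm N}_{\Lambda_\ell}+V)\gtrsim\ell^{-2}$ in Step~3: one must convert ``a definite fraction of the sites in each $\ell$-cube is not too small'' into a genuine lower bound on the ground state energy, a (soft) uncertainty/unique-continuation statement to the effect that a function cannot avoid an equidistributed union of small balls without spending $\gtrsim\ell^{-2}$ of kinetic energy. The appearance of the exponent $2+\eps$ rather than $2$ in $E_L\up{n}$ is precisely what is needed so that the large-deviation gain $\ell_L^d\sim(\log L)^{1+\eps/2}$ dominates the $d\log L$ produced by the union bound over the $\sim L^d$ sub-cubes.
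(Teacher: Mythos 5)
Your proof is correct and follows essentially the same route as the paper's: both arguments reduce $(\theta,E)$-suitability for all $E\le E_L\up{n}$ to the event $\bigl\{\inf\sigma\bigl(H^{(n)}_{\bom,\nboxLx}\bigr)\ge 2E_L\up{n}\bigr\}$ via the Combes--Thomas bound of Lemma~\ref{cthomas}, and then use $U\ge 0$ together with \eqref{specsumt} and a union bound (equivalently, the intersection $\Omega_{L,\x}=\bigcap_i\Omega_{L,x_i}$ in the paper) to reduce to the one-particle statement $\P\bigl\{\inf\sigma\bigl(H^{(1)}_{\bom,\Lambda_L(y)}\bigr)<\tfrac2n E_L\up{n}\bigr\}\le p_0/n$. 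The one place you diverge is your Step~3: the paper simply quotes this one-particle Lifshitz-type estimate from \cite[Proof of Proposition~4.3]{GKber}, whereas you reprove it by Dirichlet--Neumann bracketing at scale $\ell_L\sim(\log L)^{(2+\eps)/(2d)}$, an uncertainty-principle lower bound $\inf\sigma\bigl(-\Delta^{\mathrm N}_{\Lambda_\ell}+V\bigr)\gtrsim\ell^{-2}$ on cubes carrying a positive density of sites with $\omega_k\gtrsim\ell^{-2}$, and a binomial large-deviation bound; this is in substance the argument behind the cited result, and your comparison of the large-deviation gain $\ell_L^{d}\sim(\log L)^{(2+\eps)/2}$ (with exponent $>1$) against the $d\log L$ coming from the union bound over the $\sim(L/\ell_L)^d$ subcubes is exactly what makes the choice \eqref{EL} work. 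Your version buys self-containedness at the price of carrying out the Poincar\'e/uncertainty step you correctly flag as the delicate point; modulo routine details there (non-integer $L/\ell_L$, mixed boundary conditions on the collar cells), the argument is sound and the constants depend only on the parameters allowed in the statement.
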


\begin{proof}

  We start with a well known result for the one-particle case. Fix $\theta >0$, $p_0>0$, and $n \in \N$,  $\eps >0$, and set $p_n =\frac {p_0} n$.  As shown in \cite[Proof of Proposition~4.3]{GKber}, there exists an energy $E_1=E_1(d, u_-, \delta_-,\mu,\eps) >0$ such that for energies $E\le E_1$, $x\in \R^d$, and scales $L \in 2\N$, we have
  \beq
  \P\set{\sigma\pa{H_{\bom, \boxLx}^{(1)}}\cap (-\infty,E]\ne \emptyset}\le \e^{-E^{-\frac d{2+\eps}}}L^d,
  \eeq
 and hence
 \beq
 \P\set{H_{\bom, \boxLx}^{(1)}\ge \min\set{ \pa{d\log L -\log p_n}^{-\frac {2+\eps}d},E_1}} \ge 1 -p_n.
 \eeq

  Proceeding as in \cite[Proof of Proposition~4.3]{GKber}, for
each  $x \in \R^d$ and scales $L\ge 1$, we consider  the event
\beq
\Omega_{L,x}= \set{H_{\bom, \boxLx}^{(1)} \geq  2 E^\pr_L}, \qtx{where} E^\pr_L=\tfrac 12  \pa{d\log (L+\delta_+ +2) -\log p_n}^{-\frac {2+\eps}d},
\eeq
and conclude that 
 for scales $L \ge \cL^\pr_n= \cL^\pr_n (d, u_-, \delta_-,\mu,p_0,\eps)$ we have
 \beq
\P\set{\Omega_{L,x}}\ge 1 -  p_n \qtx{for all} x \in \R^d.
\eeq

Now let    $\x\in \R^{nd}$, and consider the   $n$-particle box $\nboxLx$.   Given  $L\ge 1$,  we set
\beq
\Omega_{L,\x}= \bigcap_{i=1}^n \Omega_{L,x_i}, \qtx{so}  \P\set{\Omega_{L,\x}}\ge 1 -  n p_n=1-p_0\qtx{for}  L \ge \cL_n^\pr.
\eeq
In view of \eq{specsumt}
 we have
\beq
\inf \sigma \pa{H_{0,\nboxLx}  } \geq 2nE^\pr_L =2 E_L\up{n}  \qtx{for all } \bom \in \Omega_{L,\x},
\eeq 
which implies, using $U \geq 0$, that
\beq\label{nlowerb}
\inf \sigma \pa{ H_{\nboxLx}  } \geq 2E_L\up{n}  \qtx{for all} \bom \in\Omega_{L,\x}.
\eeq

We now fix  $\bom \in\Omega_{L,\x}$ and let  $E \le E_L\up{n}$ and   $\bolda, \boldb \in  \blam=\nboxLx$ with $\norm{\bolda - \boldb} \geq \tfrac{L}{100}$. It follows from \eq{nlowerb} and Lemma \ref{cthomas} that $E \notin \sigma \pa{ H_{\nboxLx}  }$ and (provided $L$ is sufficiently large)
\beq \label{beghyp}
\norm{\Rlamab} \leq \tfrac 4 3 \pa{E_L\up{n}}^{-1} e^{- \frac{L}{ 201} \sqrt{E_L\up{n}}}. 
\eeq

Thus, given $\theta>0$, there exists $\cL_n=\cL_n (d, u_-, \delta_\pm,\mu,\theta,p_0,\eps)$, such that for all $L \ge \cL_n$ and  $\x\in \R^{nd}$ we have \eq{initialsuit2}.
 \end{proof}

\begin{remark} 
The hypotheses of Theorem~\ref{maintheorem} can  be verified in a fixed interval at the bottom of the spectrum at high disorder.
To see that, consider $H^{(n)}_{\bom,\lambda}=  -\Delta^{(n)} + \lambda V_{\bom}^{(n)} + U $, where $V_{\bom}^{(n)}$ and $ U$ are as in Definition~\ref{defAndmodel} and $\lambda >0$ is the disorder parameter. $H^{(n)}_{\bom,\lambda}$ can be rewritten as $n$-particle Anderson Hamiltonian   in the form of Definition~\ref{defAndmodel} by replacing the probability distribution $\mu$ by the probability distribution $\mu_{\lambda}$, where
$\mu_{\lambda}$ is the probability distribution of the random variable $\lambda \omega_0$, that is, 
$
\mu_{\lambda}(B) = \mu\pa{\lambda^{-1}B}$  for all Borels sets $ B\subset \R$.   In particular, $\mu_{\lambda}$ has density $\rho_\lambda (\omega_0)=  \lambda^{-1}\rho\pa{\lambda^{-1}\omega_0 }$.

For simplicity we   assume 
the covering condition 
\beq \label{covercrit}
U_{-} \Chi_{\Lambda} \leq \sum_{k \in \dspace \cap \Lambda} u(x-k)            
\eeq 
for all one-particle boxes $\Lambda$, where   $U_{-} > 0$.  (The condition \eq{covercrit}    can be guaranteed by requiring   $  \delta_{-}\ge 2$.   If we restrict ourselves to boxes $\boxLx$ with $x\in \Z^d$ and $L$ an odd natural number it suffices to require $\delta_- \ge 1$.)  
In this case it is  well known how to proceed in the one-particle case
(see \cite{CH,GK4}):  Given $E_1>0$, it follows from \eq{covercrit} that
\beq
\P\set{H_{\bom,\lambda, \boxLx}^{(1)} \geq 2E_1} \ge 1 - L^d\mu_\lambda \set{[0,2E_1U_-^{-1}]}\ge 1- 2 E_1 U_-^{-1} \lambda^{-1}\norm{\rho}_{\infty} L^{d}.
\eeq
Proceeding as in the proof of Theorem~\ref{smallcl}, we obtain
\beq
\P\set{H_{\bom, \lambda,\nboxLx}^{(n)} \geq 2nE_1 } \ge  1- 2n E_1U_-^{-1} \lambda^{-1}\norm{\rho}_{\infty} L^{d}.
\eeq
Given $0<p(n)<1$ and $E>0$, we set
\beq
  \lambda(E,L,p(n)) = \frac {2 E  \norm{\rho}_{\infty}L^{d}} {p(n)U_-} ,
\eeq
obtaining for all $E>0$
\beq
\P\set{H_{\bom, \lambda,\nboxLx}^{(n)} \geq 2E} \ge  1- p(n) \qtx{for all} \lambda \ge  \lambda(E ,L,p(n)) .
\eeq
To use Lemma \ref{cthomas} as in \eq{beghyp}, we require
\beq
\tfrac 4 3 {E}^{-1} e^{- \frac{L}{ 201} \sqrt{E}} \le L^{-\theta}, \qtx{i.e.,} L \ge L(E,\theta).
\eeq
We conclude that, given $0<p(n)$, $E_0 >0$,  and $\theta >0$, for all   $L \ge L(E_0,\theta)$ and  $\lambda \ge  \lambda(E_0 ,L,p(n))$ we have 
\beq\label{initialsuit2bb}
\P\set{\nboxLx \sqtx{is} (\theta,\,E)\text{-suitable for}\;  H^{(n)}_{\bom, \lambda}}\ge 1 - p(n) \qtx{for all} E \le E_0.
\eeq

If we do not assume the covering condition \eq{covercrit}, we can still prove a large disorder result using \cite[Proposition~4.5]{GKber} for the one-particle case.

\end{remark}

\section{The multi-particle bootstrap multiscale analysis}\label{secMSAproof}

 Theorem~\ref{maintheorem}  is proven  by  induction on $N$,  the number of particles.  For $N=1$ the theorem      was proved   by Germinet and Klein \cite{GK1}.  Given  $N \ge 2$,  we assume the induction hypothesis: Theorem~\ref{maintheorem} holds for $n=1,2,\ldots, N-1$ particles, and prove the theorem  for $N$ particles.  As in \cite{GK1}, the proof will be done by a bootstrapping argument, making successive use  of four multiscale analyses.

 \begin{induction} Let  $N \in \N $, $N\ge 2$, and  
   $E\up{N}> 0$.
For every  $\tau \in (0,  1)$ there is a length scale $L_{\tau} $,  $\delta_{\tau}>0$,  and $0<m_\tau^{*}  \le  \frac{1}{6}  \sqrt{E^{(N) }}$, 
such that  for $n = 1, 2, ..., N-1$ the following holds for all $E \le \En:=2^{N-n}E\up{N}$:
\begin{enumerate}
\item  For all $L \geq L_{\tau}$ and $\bolda \in  \R^{nd}$ we have 
 \begin{align} \label{eq001}
\P \Bigl\{   \nboxLa \,\, \text{is  $ \left (m^{*}_\tau, \,E \right )$-nonregular} \Bigr\}    \leq e^{-L^{\tau}}.
\end{align}

\item Fix $E < \En$ and let $I(E)=[E-\delta_{\tau}, E+\delta_{\tau}] \cap (-\infty, \En]$.  For all $L \geq L_{\tau}$  and all pairs  of $n$-particle boxes $\nboxLa$ and $\nboxLb$ with $d_{H} \pa{\bolda, \boldb} \geq L$, we get 
\begin{align} \notag
&\P \Bigl\{ \exists \, E^\pr \in I(E) \; \text{so both} \;  \nboxLa \text{ and}\:\nboxLb \; \text{are}\; \left ({m_\tau^*}, \,E^\pr \right )\text{-nonregular} \Bigr\} \\
& \hspace{50pt} \leq e^{-L^{\tau}} .  \label{eq002}
\end{align}
\end{enumerate}
 \end{induction}
 
Lemma \ref{PIsuit} (ii) will play an important role in the proof of   Theorem~\ref{maintheorem}.  To satisfy its hypotheses, the induction hypothesis specifies   $m_\tau^{*}  \le  \frac{1}{6}  \sqrt{E^{(N) }}$ for every  $\tau \in (0,  1)$, without loss of generality, and sets $\En:=2^{N-n}E\up{N}$. 

\smallskip
\textbf{In this section  we fix $N \in \N $, $N\ge 2$, and  an energy
   $E\up{N}> 0$, and  assume that   the induction hypothesis holds for this     $N$ and $E\up{N}$.} 

\smallskip

  For partially interactive $N$-particle boxes  we immediately get  probability estimates from the induction hypothesis.   
  
\begin{lemma} \label{PINSl}
Let $\Nboxlu = \boldlambda_{\ell}(\bold{u}_{\cJ}) \times  \boldlambda_{\ell}(\bold{u}_{\cJ^{c}} )$ be a PI $N$-particle box and $\tau \in (0,1)$. Then for $\ell$ large and all $E \leq \EN$  we have 
\begin{align} \notag
&\P  \Bigl\{ \Nboxlu \;\text{is} \; \left( m^*_{\tau}(\ell),E \right)\text{-nonregular} \Bigr\} \leq \ell^{Nd+1} e^{-\ell^{\tau}}\qtx {with}\\ 
& \qquad \qquad  m^*_{\tau}(\ell)= m_{\tau}^{*} - \tfrac{100 \pa{nd+1} \log \pa{2\ell}}{\ell}     ,\\ \notag
&\P  \Bigl\{ \Nboxlu \,\, \text{is} \,\, \left( \theta,\,E \right)\text{-nonsuitable} \Bigr\} \leq \ell^{Nd+1} e^{-\ell^{\tau}} \text{for} \; \; \theta < \tfrac \ell {\log \ell} \tfrac{ m^*_{\tau}}{100}, \\
&\P  \Bigl\{ \Nboxlu \,\, \text{is} \,\, \left(\tau,\,E \right)\text{-nonSES} \Bigr\} \leq \ell^{Nd+1} e^{-\ell^{\tau}}.  \notag
\end{align} 
\end{lemma}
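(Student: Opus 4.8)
The plan is to reduce everything to the induction hypothesis via Lemma~\ref{PIpro} and Lemma~\ref{PIsuit}. Since $\Nboxlu$ is PI, write $\Nboxlu = \boldlambda_{\ell}^{\cJ}(\boldu_{\cJ}) \times \boldlambda_{\ell}^{\cJ^c}(\boldu_{\cJ^c})$ with $\emptyset \ne \cJ \subsetneq \setN$, so that the factor boxes have $|\cJ| = n_1$ and $|\cJ^c| = n_2$ particles with $n_1, n_2 \in \{1,\ldots,N-1\}$ and $n_1 + n_2 = N$. The induction hypothesis applies to each factor since each has strictly fewer than $N$ particles. The key point, from Lemma~\ref{PIpro}(i), is that events based on $\boldlambda_{\ell}^{\cJ}(\boldu_{\cJ})$ and $\boldlambda_{\ell}^{\cJ^c}(\boldu_{\cJ^c})$ are independent.

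For the regularity estimate: fix $E \le \EN = 2^{N-N}E\up{N} = E\up{N}$, hence $E \le \EN \le 2\En$ for $n = n_1$ or $n_2$ (indeed $\EN = E\up N = 2^{-(N-n)}\En \le \En \le 2\En$). By Lemma~\ref{PIsuit}(ii), if $\boldlambda_{\ell}^{\cJ}(\boldu_{\cJ})$ is $(m_\tau^*, E-\mu)$-regular for every $\mu \in \sigma_{\cJ^c} \cap (-\infty, 2\En)$ and $\boldlambda_{\ell}^{\cJ^c}(\boldu_{\cJ^c})$ is $(m_\tau^*, E-\lambda)$-regular for every $\lambda \in \sigma_{\cJ} \cap (-\infty, 2\En)$, then $\Nboxlu$ is $\pa{m^*_\tau - \tfrac{100(nd+1)\log(2\ell)}{\ell}, E}$-regular (here $m_\tau^* \le \tfrac16\sqrt{E\up N}$ by the induction hypothesis, so the hypothesis $0 < m \le \tfrac16\sqrt{\En}$ of Lemma~\ref{PIsuit}(ii) holds). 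Hence the complementary (nonregular) event is contained in the union over $\lambda \in \sigma_{\cJ}\cap(-\infty,2\En)$ of $\{\boldlambda_{\ell}^{\cJ^c}(\boldu_{\cJ^c}) \text{ is } (m_\tau^*, E-\lambda)\text{-nonregular}\}$ together with the union over $\mu \in \sigma_{\cJ^c}\cap(-\infty,2\En)$ of $\{\boldlambda_{\ell}^{\cJ}(\boldu_{\cJ}) \text{ is } (m_\tau^*, E-\mu)\text{-nonregular}\}$. Condition on the randomness of $\boldlambda_{\ell}^{\cJ}(\boldu_{\cJ})$ (which determines $\sigma_{\cJ}$); by independence and the induction hypothesis \eqref{eq001}, each term $\P\{\boldlambda_{\ell}^{\cJ^c}(\boldu_{\cJ^c}) \text{ is }(m_\tau^*, E-\lambda)\text{-nonregular}\} \le e^{-\ell^\tau}$, and the number of $\lambda$'s is at most $\tr\Chi_{(-\infty,2\En]}(H_{\boldlambda_{\ell}^{\cJ}(\boldu_{\cJ})}) \le C_{n_1 d}(2\En)^{n_1 d/2}\ell^{n_1 d} \le \tfrac12\ell^{Nd}$ for $\ell$ large by Lemma~\ref{NE}; symmetrically for the $\mu$'s. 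Summing, $\P\{\Nboxlu \text{ is }(m_\tau^*(\ell),E)\text{-nonregular}\} \le \ell^{Nd+1}e^{-\ell^\tau}$. The suitability and SES estimates follow the same scheme, invoking Lemma~\ref{PIsuit}(i) and (iii) respectively together with Remark~\ref{goodbox} (to translate the induction hypothesis, which is stated in terms of regularity, into suitability/SES statements at the cost of adjusting constants, absorbing the change into $\theta < \tfrac{\ell}{\log\ell}\tfrac{m_\tau^*}{100}$ and into the passage from $\zeta'$ to $\tau$).

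The main obstacle I anticipate is bookkeeping rather than conceptual: one must check that the energy shifts $E - \lambda$ and $E - \mu$ stay in the range $(-\infty, \En]$ on which the induction hypothesis is valid — this is exactly why Lemma~\ref{PIsuit} and the induction hypothesis are phrased with the factor $2$ in $2\En$ and why $\En = 2^{N-n}E\up N$ doubles each time a particle is removed; since $E \le \EN$ and $\lambda,\mu \ge 0$, we have $E - \lambda \le \EN < \En$ automatically for the contributing eigenvalues, and eigenvalues $\lambda, \mu \ge 2\En$ are harmless because then $E - \lambda < -\En < \inf\sigma$ of the other factor and the Combes–Thomas bound of Lemma~\ref{cthomas} already gives regularity (this is the content built into Lemma~\ref{PIsuit}). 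The second delicate point is ensuring the polynomial prefactor $\ell^{Nd}$ from the eigenvalue count, multiplied by $e^{-\ell^\tau}$, is absorbed into $\ell^{Nd+1}e^{-\ell^\tau}$, which is immediate, and that "$\ell$ large" can be chosen uniformly in the placement $\boldu$ and in $E \le \EN$, which holds because all the thresholds depend only on $N$, $d$, $E\up N$, $\tau$, and the constants already fixed.
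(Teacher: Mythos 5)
Your proposal is correct and follows essentially the same route as the paper: the contrapositive of Lemma~\ref{PIsuit}(ii) plus a union bound over the eigenvalues of the two factors (counted by Lemma~\ref{NE}), with each term controlled by the induction hypothesis \eqref{eq001}; your explicit conditioning on the $\cJ$-factor to handle the randomness of the eigenvalues $\lambda,\mu$ is a detail the paper leaves implicit, and your use of the threshold $2E^{(n_i)}$ in place of the paper's $2E^{(N)}$ is harmless since it only enlarges the union. The paper obtains the suitability and SES bounds directly from the regularity bound via Remark~\ref{goodbox} rather than rerunning the scheme with Lemma~\ref{PIsuit}(i) and (iii), but both derivations work.
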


\begin{proof} Let $E \leq \EN$.  It follows from Lemma \ref{PIsuit} (ii) and the induction hypothesis, using also  Lemma~\ref{NE},  that for large $\ell$,
\begin{align} 
&\P  \Bigl\{ \Nboxlu \; \text{is} \; \left( m^*_\tau(\ell),\,E \right)\text{-nonregular} \Bigr\}  \\ \notag
&\hskip30pt \leq  \sum_{\mu \in \sigma_{\cJ^c}  \cap (-\infty, \, 2\,\EN]    } \P  \Bigl\{  \boldlambda_{\ell}(\bold{u_{\cJ}})\text{ is }(m^*_\tau, E - \mu)\text{-nonregular}    \Bigr\} \\ 
& \hskip60pt  + \sum_{\lambda \in \sigma_{\cJ}  \cap (-\infty, \, 2\,\EN]  } \P  \Bigl\{  \boldlambda_{\ell}(\bold{u_{\cJ^{c}}}) \text{ is }(m^*_\tau, E - \lambda)\text{-nonregular}   \Bigr\} \notag \\
& \hskip30pt \le  C_{N, \, d} \, \pa{\EN }^{\tfrac{Nd}{2}}\ell^{Nd}e^{-\ell^{\tau}} \leq \ell^{Nd+1}e^{-\ell^{\tau}}. \notag
\end{align}
The other estimates now follow from Remark~\ref{goodbox}.
\end{proof}

In what follows, we  fix $\zeta, \, \tau, \beta, \, \zeta_0,\, \zeta_1,\, \zeta_2,\, \gamma$ such that 
\begin{gather} \label{constant}
0 < \zeta < \tau < 1,\quad  \zeta \, \gamma^{2} < \zeta_2,\\
0 < \zeta < \zeta_{2} < \gamma \zeta_2 <  \zeta_{1} < \gamma \zeta_1 < \beta < \zeta_0 < r < \tau < 1 \quad \text{with } \zeta \, \gamma^{2} < \zeta_2. \notag
\end{gather}
We set $m^*=  m_{\tau}^{*} $, where  $m_\tau^{*}  \le  \frac{1}{6}  \sqrt{E^{(N) }}$  is given   in the induction hypothesis.  

We will use  the Wegner estimates of   Theorem~\ref{Wegner0} and Corollary~\ref{Wegner2} for $n=1,2,\ldots,N$ particles, which  apply to an interval $I\subseteq [0,E_+)$   with $\abs{I}\le 2\gamma_{n,E_+}$.    In the multiscale analysis we will need   $E_+ = \En:=2^{N-n}E\up{N}$ for the $n$-particles Wegner estimates.  
For convenience, we take
$E_+= E\up{1}=2^{N-1}E\up{N}\ge \En$ for $n=1,2,\ldots,N$. Note that  the constants in these Wegner estimates (including $\gamma_{n,E_+}$) are increasing in $n$ and on $E_+$, so we will always take the constants for $n=N$ and  $E_+= E\up{1}$  (e.g., $\gamma_{N,E_+}$). To ensure that the condition $\abs{I}\le 2\gamma_{N,E_1}$ is always satisfied,
we will always  take sufficiently large scales $L$, i.e., $L \geq L(\gamma_{N, E_{1}})$, such that $L^{-s} \leq \gamma_{N, E_{1}}$ and $e^{-L^{\beta}} \leq \gamma_{N, E_{1}}$.  Moreover, in the following lemmas the conclusions are always assumed to hold for  $L$ sufficiently large.

 The proof of the induction step  proceeds as in  \cite{GK1,Kl}, with four multi-scale analyses, as in \cite{KlN}, using the toolkit  for the multiscale analysis in the continuum given in  Section~\ref{sectoolkilt}.  We  state all the steps, but refer to \cite{KlN} for the proofs when they are similar.

\subsection{The first multiscale analysis}
\begin{proposition} \label{part1mainthm} 
 Let $\theta > 8Nd$  and $E \leq \EN $.  Take $0 < p < p + Nd < s < s + 2Nd -2< \theta,$  $Y \ge 4000 N^{N+1} $, and $p_0=p_{0}(N) < \tfrac 12\pa{2Y}^{-Nd}$. Then there exists a length scale $Z_{0}^{*}$ such that if for some $L_0 \geq Z_{0}^{*}$ we have
\beq
\sup_{\boldx \in \R^{Nd}} \P \Bigl\{ \mathbf{\Lambda}_{L_0}^{(N)}(\boldx) \, \,  \text{is}\,\,(\theta, \,E)  \text{-nonsuitable} \Bigr\} \leq p_0, 
\eeq
then, setting $L_{k+1} = Y\,L_k,$ for $k = 0, 1, 2, ...,$ there exists $K_0 \in \N$ such that for every $k \geq K_0$ we have 
\beq
\sup_{\boldx \in \R^{Nd}} \P \Bigl\{ \mathbf{\Lambda}_{L_k}^{(N)} (\boldx)    \,\,\text{is}\,\,(\theta, \,E)  \text{-nonsuitable} \Bigr\} \leq L_{k}^{-p}.
\eeq
\end{proposition}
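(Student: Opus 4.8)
The plan is to run a standard single-energy multiscale analysis step, following the pattern of \cite[Proof of Theorem~3.4]{GK1} (adapted to the multi-particle setting as in \cite{KlN}), where the key new ingredient is the trichotomy of $N$-particle boxes into partially interactive (PI) and fully interactive (FI) types. I would fix the scale ratio $L_{k+1} = YL_k$ and prove the implication ``$(\theta,E)$-nonsuitability probability $\le L_k^{-p}$ at scale $L_k$'' $\Rightarrow$ ``same at scale $L_{k+1}$'' by induction on $k$, with the base case provided by the hypothesis at $L_0 \ge Z_0^\ast$ together with an auxiliary deterministic length threshold chosen so that all the toolkit lemmas (Lemma~\ref{res-ine}, Lemma~\ref{NE}, Lemma~\ref{cthomas}, and the covering Lemma~\ref{lemcovering}) apply.

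The core of the inductive step: in the box $\boldlambda_{L_{k+1}}^{(N)}(\boldx)$, take a suitable $L_k$-covering $\cG^{(L_k)}$. For the box to be $(\theta,E)$-nonsuitable we want to show, using the geometric resolvent identity \eq{resolventine}, that there must be a ``chain'' of $(\theta,E)$-nonsuitable sub-boxes of the cover reaching from near a point $\bolda$ to a point $\boldb$ with $\norm{\bolda-\boldb}\ge L_{k+1}/100$ — or else a Wegner-type resonance at scale $L_{k+1}$. The probabilistic input has two parts. First, Wegner: the probability that $\boldlambda_{L_{k+1}}^{(N)}(\boldx)$ is $(E,s)$-suitably resonant is bounded by Theorem~\ref{Wegner0} (applied with $n=N$, $E_+ = E^{(1)}$) times $L_{k+1}^{Nd}$, which is $\le L_{k+1}^{-p}/\text{(const)}$ once $s > p + Nd$. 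Second, the ``many nonsuitable boxes'' event: here I must separately treat PI and FI sub-boxes. For PI sub-boxes the probability of nonsuitability is controlled directly by the induction hypothesis on $n \le N-1$ particles via Lemma~\ref{PINSl} (a bound like $\ell^{Nd+1}e^{-\ell^\tau}$, which is far smaller than any power of $L_k$). For FI sub-boxes, Lemma~\ref{part1prop2} tells us that two $L_k$-distant FI boxes are fully separated, hence the events ``box is $(\theta,E)$-nonsuitable'' based on them are independent; one then runs the classical argument counting the number of ways to pick two ``sufficiently separated'' nonsuitable boxes inside the cover, getting a bound of the shape $\binom{(\#\cG)^2}{1}\cdot p_0^2 \le (\text{poly in } L_{k+1}/L_k)\cdot p_0^2$, and the choice $p_0 < \tfrac12(2Y)^{-Nd}$ together with $Y \ge 4000 N^{N+1}$ makes this smaller than $\tfrac12 L_{k+1}^{-p}$. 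Summing the resonant contribution and the (PI + FI) nonsuitable contribution gives $\le L_{k+1}^{-p}$, completing the step.

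The point where the multi-particle structure genuinely complicates the classical argument is the handling of chains of FI boxes that are \emph{not} pairwise $L_k$-distant: two FI sub-boxes can be ``close'' in Hausdorff distance while not being separated, so independence fails and one cannot naively multiply probabilities. The standard device (from \cite{CS2,CS3,KlN}) is to use Remark~\ref{remcovering}: absorb clusters of mutually non-distant FI boxes into a single larger box $\boldlambda_{(2k_m\alpha+1)L_k}(\boldr)$ at a controlled intermediate scale (still $\le L_{k+1}$, because the number of particles $N$ caps how large such a cluster can be — this is exactly why $k_1$ is chosen with $2k_1\alpha + 1 > 3N$), and then argue that within $\boldlambda_{L_{k+1}}^{(N)}(\boldx)$ one still finds two such enlarged boxes that \emph{are} mutually distant and hence fully separated by Lemma~\ref{part1prop2}. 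I expect this geometric/combinatorial bookkeeping — keeping all intermediate scales below $L_{k+1}$, tracking how the number of enlarged boxes grows, and verifying the independence of the final pair — to be the main obstacle; the Wegner and PI inputs are essentially off-the-shelf given the earlier results. Once the inductive implication is established, $K_0$ is just the first index with $L_{K_0} \ge \max\{Z_0^\ast, (\text{deterministic thresholds})\}$, and the conclusion follows.
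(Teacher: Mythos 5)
Your overall route is the same as the paper's: the deterministic chain lemma (Lemma~\ref{part1prop1a}) built from the geometric resolvent inequality and the suitable cover, the Wegner estimate of Theorem~\ref{Wegner0} for the resonance events at the top scale and at the intermediate scales $(2k_j\alpha+1)\ell$, the induction hypothesis on the number of particles (via Lemma~\ref{PINSl}) for PI sub-boxes, and full separation of $\ell$-distant FI boxes (Lemma~\ref{part1prop2}) to obtain independence and the quadratic gain. You also correctly identify the cluster-absorption device of Remark~\ref{remcovering} as the genuinely multi-particle complication.

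The gap is in the probabilistic iteration. You propose to prove ``$P_k \le L_k^{-p} \Rightarrow P_{k+1}\le L_{k+1}^{-p}$'' by induction on $k$, ``with the base case provided by the hypothesis at $L_0\ge Z_0^*$''. But the hypothesis only gives $P_0\le p_0$, where $p_0$ is a constant depending on $N$, $d$, $Y$ but not on the scale, while $L_0\ge Z_0^*$ may be arbitrarily large; so $P_0\le L_0^{-p}$ is simply not available and your base case fails. Relatedly, your claimed estimate that $\pa{\text{poly in }Y}\cdot p_0^2$ is smaller than $\tfrac12 L_{k+1}^{-p}$ is false: with $p_0$ of order $\pa{2Y}^{-Nd}$ the left-hand side is a fixed constant, not a quantity decaying in $L$. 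The correct structure (as in \cite{GK1} and \cite[Proposition~3.2]{KlN}) is a two-phase argument: one iterates the inequality $P_{k+1}\le C_Y P_k^2 + \pa{\text{Wegner}+\text{PI terms}}$ starting from $P_0\le p_0$; the smallness condition on $p_0$ is what makes this quadratic map contract, so that $P_k$ decays roughly like $\pa{\text{const}}^{2^k}$, i.e., doubly exponentially in $k$. Since $L_k^{-p}=\pa{Y^kL_0}^{-p}$ decays only exponentially in $k$, there is a $K_0$ after which $P_k\le L_k^{-p}$, and from then on the polynomial bound self-propagates. This is precisely why the conclusion is stated for $k\ge K_0$ rather than for all $k\ge 0$; under your scheme $K_0$ would be superfluous, which is a sign the iteration has been set up incorrectly.
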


The proof of the  proposition uses the following deterministic lemma.  

\begin{lemma} \label{part1prop1a} 
Let $\theta > 8Nd$  and $E \leq \EN $.  Take $ Nd < s < s + 2Nd < \theta$. Let  $J \in \N$,   $Y \ge 4000J N^{N+1} $,  $L =  Y\ell$, and $\boldx \in \R^{Nd}$.
 Suppose  we have the following:
\begin{enumerate}
\item $\NboxLx$ is $E$-suitably nonresonant.

\item There are at most $J $ pairwise $\ell$-distant,  $(E, \, \theta)$-nonsuitable boxes in the $\ell$-suitable cover.

\item  Every box  $\mathbf{\Lambda}_{t}^{(N)}(\boldu)\subseteq \NboxLx$ with $t \in \set{ \pa{2  k_j \alpha  +1}\ell; j = 1 , \cdots, JN^N}$ and $\boldu \in \x + \alpha \ell \Z^{Nd}$, where $k_j$ is
 given  in Remark \ref{remcovering},  is $E$-suitably nonresonant.
\end{enumerate}
Then    the $N$-particle  box  $\NboxLx$ is $\left(E, \,\theta \right)$-suitable for $L$ sufficiently large.
\end{lemma}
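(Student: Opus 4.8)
The plan is to deduce suitability of the large box $\NboxLx$ from the information on the $\ell$-suitable cover by combining the resolvent identity of Lemma~\ref{res-ine} with a chain/iteration argument, exactly in the spirit of the one-particle energy-interval multiscale analysis (e.g.\ \cite{GKber,Kl}), keeping careful track of the multi-particle bookkeeping in Remark~\ref{remcovering}. First I would fix $\bolda,\boldb\in\NboxLx$ with $\norm{\bolda-\boldb}\ge \tfrac{L}{100}$ and aim to bound $\norm{\Chia R_{\NboxLx}(E)\Chib}$ by $L^{-\theta}$. I would start at $\bolda$: by \eqref{bdrycover} there is a box $\blam_\ell(\boldr)$ in the suitable cover containing $\blam_{\ell/5}(\bolda)\cap\NboxLx$. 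If that box is $(\theta,E)$-suitable, apply Lemma~\ref{res-ine} (with $\widetilde{\blam}=\NboxLx$ and the inner box the cover box, enlarged as needed so that $\blam_{3+\delta_+}(\bolda)\cap\NboxLx\subseteq\blam_\ell(\boldr)$) to pick up a factor $\ell^{Nd}\cdot\ell^{-\theta}$ and move from $\bolda$ to a point $\bolda'\in\ups$ a distance $\gtrsim \tfrac{\ell}{5}$ closer to the boundary (or closer to $\boldb$); then repeat. Each good step contributes $\ell^{Nd-\theta}\le \ell^{-2Nd}$ (using $\theta>8Nd\ge 3Nd$), which is summably small, so a chain of $\sim L/\ell = Y$ good steps yields a bound like $(\ell^{-2Nd})^{\,c Y}$, and since $L=Y\ell$ one checks $\ell^{-2Nd\,cY}\le L^{-\theta}$ for $L$ large because $\log L/\log\ell \to 1$ while $Y$ is a fixed large constant $\ge 4000JN^{N+1}$.

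The complication is that the chain may run into one of the (at most $J$) nonsuitable boxes, or the cover boxes may fail to be enlargeable because a bad box sits in the way. This is where hypotheses (ii) and (iii) and the machinery of Remark~\ref{remcovering} come in. Around each cluster of bad boxes I would form, following Remark~\ref{remcovering}, an enlarged box $\blam_{(2k_j\alpha+1)\ell}(\boldu)\subseteq\NboxLx$ of a controlled size $t\in\{(2k_j\alpha+1)\ell\}$ that contains all bad boxes of that cluster together with a collar, where $k_j$ is chosen minimal; because there are at most $J$ pairwise $\ell$-distant bad boxes, there are at most $JN^N$ such enlarged boxes (the $N^N$ accounting for how clusters can merge under the $3N\ell$-neighborhood rule), each of side $\le$ some fixed multiple of $J\ell$, hence still $\le L/6$ for $Y$ large. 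By hypothesis (iii) each such enlarged box is $E$-suitably nonresonant, so $\dist(\sigma(H_{\blam_t}),E)\ge t^{-s}$, and therefore $\norm{R_{\blam_t}(E)}\le t^{s}$. When the chain reaches such a region I use Lemma~\ref{res-ine} once to enter the enlarged box, pay the bulk bound $\norm{R_{\blam_t}(E)}\le t^s\le (CJ\ell)^s$, and then use Lemma~\ref{res-ine} again to exit on the far side into a good cover box; the cost of traversing a bad region is thus at most a polynomial factor $(CJ\ell)^{s}\cdot \ell^{2(Nd-1)}\cdot(\text{two transfer factors})$, which is dominated by the gain $\ell^{-2Nd}$ from just one or two neighboring good steps since $s+2Nd-2<\theta$ and the good steps outnumber the bad traversals. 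Finally, hypothesis (i), that $\NboxLx$ itself is $E$-suitably nonresonant, guarantees $E\notin\sigma(H_{\NboxLx})$ and gives an a priori bound $\norm{\Chia R_{\NboxLx}(E)\Chib}\le \norm{R_{\NboxLx}(E)}\le L^{s}$ to absorb the last few steps and close the estimate.

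Concretely the key steps, in order, are: (1) reduce to bounding $\norm{\Chia R_{\NboxLx}(E)\Chib}$ for $\norm{\bolda-\boldb}\ge L/100$; (2) set up the telescoping geometry — starting cover box at $\bolda$, enlarged "bad" boxes from Remark~\ref{remcovering}, and a notion of progress toward $\partial\NboxLx$ or toward $\boldb$ that decreases by $\gtrsim \ell/5$ (resp.\ by a fixed multiple of $J\ell$ across a bad region) at each step; (3) run the resolvent iteration via Lemma~\ref{res-ine}, tallying a factor $\ell^{Nd-\theta}$ per good step and a polynomial-in-$(J\ell)$ factor per bad traversal, with the number of bad traversals bounded by $JN^N$ and the total number of steps $\gtrsim Y$; (4) combine with the a priori bound $\norm{R_{\NboxLx}(E)}\le L^{s}$ from (i) and verify the bookkeeping inequality $(\ell^{Nd-\theta})^{\,\#\text{good}}\cdot (CJ\ell)^{s\cdot\#\text{bad}}\cdot L^{s}\le L^{-\theta}$, which holds for $L$ large once $Y\ge 4000JN^{N+1}$, $\theta>8Nd$, and $s+2Nd-2<\theta$; (5) note the threshold "$L$ sufficiently large'' also absorbs the $E\le\EN$ requirement needed to pass from \eqref{resolventin} to \eqref{resolventine} in Lemma~\ref{res-ine} and the large-$\ell$ requirements there. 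I expect the main obstacle to be the geometric/combinatorial step (2)–(3): ensuring that every point of $\NboxLx$ on the chain can be surrounded either by a suitable cover box (enlargeable as Lemma~\ref{res-ine} demands) or by one of the controlled enlarged nonresonant boxes, that these enlarged boxes genuinely stay inside $\NboxLx$ (using $Y$ large and $\ell\le L/6$ so all the nested covers of Lemma~\ref{lemcovering} are available), and that the count of bad traversals is honestly $\le JN^N$ — the analytic estimates themselves are routine given Lemmas~\ref{res-ine}, \ref{NE}, and \ref{cthomas}.
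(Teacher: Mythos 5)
Your proposal is correct and follows essentially the same route as the paper, which proves this lemma by the standard geometric-resolvent iteration of \cite[Lemma~3.3]{KlN}: chain through $(\theta,E)$-suitable cover boxes gaining $\ell^{Nd-\theta}$ per step, traverse the at most $JN^N$ enlarged nonresonant boxes of Remark~\ref{remcovering} at polynomial cost $t^{s}$, and close with the a priori bound $\norm{R_{\NboxLx}(E)}\le L^{s}$ from hypothesis (i), the conditions $Y\ge 4000JN^{N+1}$ and $Nd<s<s+2Nd<\theta$ making the bookkeeping work. The obstacles you flag (containment of the enlarged boxes, the progress measure across bad regions, and the $JN^N$ count) are exactly the points the cited argument handles via Lemma~\ref{lemcovering} and Remark~\ref{remcovering}.
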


Lemma~\ref{part1prop1a} has the same proof as  \cite[Lemma~3.3]{KlN}.  Prop~\ref{part1mainthm} is proved using Lemma~\ref{part1prop1a} as \cite[Proposition~ 3.2]{KlN} is proved using \cite[Lemma~3.3]{KlN}.

\subsection{The second multiscale analysis}
\begin{proposition} \label{part2mainthm}
 Let $E \le \EN$,  $p>0$, $\theta>0$, $1< \gamma <1 + \frac p{p+2Nd}$. Then  there exists a length scale $Z_{1}^{*}$ such that if for some $L_0 \geq Z_{1}^{*}$ we can verify
\beq
\sup_{\boldx \in \R^{Nd}} \P \Bigl\{ \mathbf{\boldlambda}_{L_0}^{(N)}(\boldx) \, \,  \text{is}\,\,(m_0, \,E) \text{-nonregular} \Bigr\} \leq L_{0}^{-p},
\eeq
where $ \theta \tfrac {\log {L_0}}{L_0}\le m_0<m^*$, 
then, setting $L_{k+1} = L_k^{\gamma},$ for $k = 1, 2, ...,$ we get
\beq
\sup_{\boldx \in \R^{Nd}} \P \Bigl\{ \mathbf{\boldlambda}_{L_k}^{(N)} (\boldx)    \,\,\text{is}\,\,\left(\tfrac{m_0}{2}, \,E \right) \text{-nonregular} \Bigr\} \leq L_{k}^{-p}\qtx{for all} k = 0, 1, 2, ...
\eeq
\end{proposition}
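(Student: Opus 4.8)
The plan is to prove the proposition by induction on $k$, establishing at step $k$ the slightly stronger statement that, writing $L=L_k$,
\[
\sup_{\boldx \in \R^{Nd}} \P \bigl\{ \NboxLx \text{ is } (m_k,\,E)\text{-nonregular} \bigr\} \le L^{-p},
\]
with a decreasing sequence of masses $m_k = m_0 - \sum_{j=0}^{k-1}\Delta_j$, $\Delta_j = \tfrac{100(Nd+1)\log(2L_j)}{L_j}$. Since $L_{k+1}=L_k^{\gamma}$ forces $L_k = L_0^{\gamma^{k}}$, the corrections $\Delta_j$ are summable with $\sum_{j\ge 0}\Delta_j \to 0$ as $L_0 \to \infty$; hence, enlarging $Z_1^{*}$ if necessary, $m_k \ge \tfrac{m_0}{2}$ for all $k$, and the stated conclusion follows from the stronger one. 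The base case $k=0$ is precisely the hypothesis.

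For the inductive step fix $\ell=L_k$ and $L=L_{k+1}=\ell^{\gamma}$, and combine a deterministic geometric lemma with a probabilistic estimate. The deterministic lemma is the $(m,E)$-regular analogue of Lemma~\ref{part1prop1a}, proved exactly as in the one-particle case \cite{GK1} (cf.\ \cite{KlN}) from the resolvent identity of Lemma~\ref{res-ine}, the eigenvalue count of Lemma~\ref{NE}, and the properties of suitable covers in Lemma~\ref{lemcovering}: for $L$ large, if $\NboxLx$ is $(E,\beta)$-nonresonant and its suitable $\ell$-covering $\cG_{\NboxLx}^{(\ell)}$ contains neither a PI box that is $(m_k,E)$-nonregular nor a pair of $\ell$-distant FI boxes that are both $(m_k,E)$-nonregular, then $\NboxLx$ is $(m_{k+1},E)$-regular. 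Indeed, marching the resolvent identity of Lemma~\ref{res-ine} from $\y$ towards $\x$ through the cover one picks up a factor $\ell^{Nd}$ and a gain $e^{-m_k\cdot(\text{step length})}$ at each of the $O(L/\ell)$ steps, routing around the single admissible cluster of nonregular sub-boxes, whose contribution over one jump is controlled by the $(E,\beta)$-nonresonance; dividing the resulting $O\!\bigl(\tfrac{\log\ell}{\ell}L\bigr)$ loss in the exponent by $\norm{\x-\y}\ge\tfrac{L}{100}$ yields the mass $m_{k+1}=m_k-\Delta_k$.

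For the probabilistic bound, the deterministic lemma gives
\begin{align*}
\P\bigl\{\NboxLx \text{ is }(m_{k+1},E)\text{-nonregular}\bigr\}
&\le \P\bigl\{\NboxLx \text{ is }(E,\beta)\text{-resonant}\bigr\}\\
&\quad + \#\cG_{\NboxLx}^{(\ell)}\,\sup_{\boldu}\P\bigl\{\text{a PI }\Nboxlu\text{ is }(m_k,E)\text{-nonregular}\bigr\}\\
&\quad + \bigl(\#\cG_{\NboxLx}^{(\ell)}\bigr)^{2}\sup\,\P\bigl\{\text{two }\ell\text{-distant FI }\ell\text{-boxes are }(m_k,E)\text{-nonregular}\bigr\}.
\end{align*}
The first term is bounded by the Wegner estimate \eqref{weggamma} with $\eps=\tfrac12 e^{-L^{\beta}}$ (admissible since $\beta$ obeys \eqref{constant} and $L$ is large), giving $\le C_{N}\norm{\rho}_{\infty}e^{-L^{\beta}}L^{Nd}$, which is $\ll\tfrac14 L^{-p}$. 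The second term is $\le(2L/\ell)^{Nd}\ell^{Nd+1}e^{-\ell^{\tau}}$ by Lemma~\ref{PINSl} (using $m_k\le m_0<m^{*}=m^{*}_{\tau}$, so $m_k\le m^{*}_{\tau}(\ell)$ once $\ell$ is large), again super-polynomially small. For the third term, a pair of $\ell$-distant FI $N$-particle boxes is fully separated by Lemma~\ref{part1prop2}, so the two events are independent; by the inductive hypothesis at scale $\ell=L_k$ each factor is $\le\ell^{-p}$, whence this term is $\le(2L/\ell)^{2Nd}\ell^{-2p}=2^{2Nd}L_k^{2Nd(\gamma-1)-2p}$. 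Since $1<\gamma<1+\tfrac{p}{p+2Nd}$, a short computation gives $2Nd(\gamma-1)-2p<-\gamma p$, so this is $\le\tfrac14 L_{k+1}^{-p}$ once $L_0$ is large enough. Adding the three bounds yields $\P\{\NboxLx\text{ is }(m_{k+1},E)\text{-nonregular}\}\le L_{k+1}^{-p}$, closing the induction.

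The step I expect to be the main obstacle is the multi-particle deterministic lemma: one must show that the failure of $(m_{k+1},E)$-regularity of $\NboxLx$ forces an obstruction of one of the two independence-friendly types above---a single PI nonregular sub-box, handled by the induction hypothesis on the number of particles via Lemma~\ref{PINSl}, or a pair of fully separated FI nonregular sub-boxes---while keeping the per-step mass loss $\Delta_k$ summable. This is exactly where the PI/FI dichotomy (Definition~\ref{PIFIbox}, Lemmas~\ref{part1prop2} and \ref{PINSl}) and the Hausdorff-distance comparison \eqref{dHdist} must replace the plain ``two disjoint boxes'' argument available for one particle.
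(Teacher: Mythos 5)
Your overall architecture is the same as the paper's (which proves this via Lemma~\ref{part2prop1a} in the manner of \cite[Lemma~3.5 and Proposition~3.4]{KlN}): induct on the scale, use a deterministic lemma to reduce nonregularity at scale $L=\ell^{\gamma}$ to resonance of the big box, a PI nonregular sub-box, or a pair of $\ell$-distant FI nonregular sub-boxes, then bound these by Wegner, Lemma~\ref{PINSl}, and independence via Lemma~\ref{part1prop2} respectively; your verification that $2Nd(\gamma-1)-2p<-\gamma p$ and your bookkeeping of the summable mass loss are both correct.

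There is, however, one genuine gap: your deterministic lemma assumes only that $\NboxLx$ itself is $(E,\beta)$-nonresonant, and you claim that the jump across the single cluster of nonregular sub-boxes "is controlled by the $(E,\beta)$-nonresonance." It is not. When you iterate the geometric resolvent inequality of Lemma~\ref{res-ine} and must cross the bad cluster, the factor you need to control is $\norm{\Chi_{\bolda}R_{\blam}(E)\Chi_{\x}}$ for the \emph{intermediate} box $\blam=\blam_{(2k_j\alpha+1)\ell}(\boldu)$ enclosing the cluster, and this requires an a priori bound $\norm{R_{\blam}(E)}\le e^{((2k_j\alpha+1)\ell)^{\beta}}$, i.e.\ nonresonance of $\blam$ — the nonresonance of the ambient box $\NboxLx$ gives no information about $\sigma(H_{\blam})$. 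This is exactly hypothesis (iii) of the paper's Lemma~\ref{part2prop1a} (set up via Remark~\ref{remcovering}), and it is why the probabilistic decomposition must carry an additional term, namely a union bound over the $O\bigl((L/\ell)^{Nd}\bigr)$ boxes $\blam_{(2k_j\alpha+1)\ell}(\boldu)\subseteq\NboxLx$ of the probability that each is $E$-resonant; each such probability is again controlled by the Wegner estimate \eqref{weggamma} with $\eps=\tfrac12 e^{-(\ell)^{\beta}}$, so the extra term is super-polynomially small and the induction still closes. The fix is routine, but as written your deterministic lemma is false and your probability estimate omits a term, so the argument does not compile into a proof without this repair.
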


To prove the proposition we use the following deterministic lemma.

\begin{lemma} \label{part2prop1a}  
Let $E \leq \EN $, $L = \ell^{\gamma}$,  $J \in \N$, $m_0>0$, and 
\beq\label{mellcond}
m_\ell \in  [\tfrac{1}{\ell^{\kappa}},m_0], 
\qtx{where} 0< \kappa <\min \set { \gamma -1, \gamma (1-\beta), 1} . 
\eeq 
Suppose that we have the following:

\begin{enumerate}
\item $\NboxLx$ is $E$-nonresonant.

\item There are at most $J $ pairwise $\ell$-distant,  $(E, \, m_{\ell})$-nonregular boxes in the suitable cover.  

\item  Every box  $\mathbf{\Lambda}_{t}^{(N)}(\boldu)\subseteq \NboxLx$ with $t \in \set{ \pa{2  k_j \alpha  +1}\ell; j = 1 , \cdots, JN^N}$ and $\boldu \in \x + \alpha \ell \Z^{Nd}$, where $k_j$ is
 given  in Remark \ref{remcovering},  is $E$-nonresonant.

\end{enumerate}
Then $\NboxLx$ is $\left(E, \,m_L \right)$-regular for $L$ large, where 
\beq \label{mlmL}
 m_{\ell} \geq m_L \geq m_{\ell} -  \tfrac{1}{2\ell^{\kappa}}\ge  \tfrac{1}{L^{\kappa}}.
\eeq
\end{lemma}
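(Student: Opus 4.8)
The plan is to prove Lemma~\ref{part2prop1a} by the standard deterministic ``scale-up'' argument for regularity: iterate the geometric resolvent inequality of Lemma~\ref{res-ine} along a chain of boxes joining the two points inside $\NboxLx$, treating the good $\ell$-boxes of a suitable cover (where one gains exponential decay at rate $m_\ell$) differently from the ``bad'' region, which hypothesis~(ii) confines. This follows the one-particle scheme of \cite{GKber} and its multi-particle discrete analogue in \cite{KlN}, now using the continuum toolkit of Section~\ref{sectoolkilt} (Lemmas~\ref{res-ine}, \ref{NE}, \ref{lemcovering} and Remark~\ref{remcovering}) in place of the discrete one. First I would fix a suitable $\ell$-covering $\cG_{\NboxLx}^{(\ell)}$ of $\NboxLx$ (Definition~\ref{defsuitcov}), which is legitimate since $L=\ell^{\gamma}$ with $\gamma>1$ gives $\ell\le L/6$ for $L$ large. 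Hypothesis~(ii) says that at most $J$ of the $(E,m_\ell)$-nonregular boxes of this cover are pairwise $\ell$-distant, so a greedy selection (repeatedly picking a nonregular box $\ell$-distant from all previously chosen ones, then discarding the nonregular boxes that are not $\ell$-distant from it) partitions all nonregular boxes into at most $J$ clusters; since ``not $\ell$-distant'' (Definition~\ref{L distant}) forces each particle coordinate of one box to lie within $3N\ell$ of a coordinate of the other, each cluster lies in at most $N^N$ product boxes of side $\le c_N\ell$. Merging the connected ones via the construction of Remark~\ref{remcovering}, the union of all nonregular boxes is covered by at most $JN^N$ boxes $\blam_{t_j}^{(N)}(\boldu_j)\subseteq\NboxLx$ with $\boldu_j\in\x+\alpha\ell\Z^{Nd}$ and $t_j=(2k_j\alpha+1)\ell\le C_{J,N}\ell$; by hypothesis~(iii) each of these is $E$-nonresonant, so $\norm{R_{\blam_{t_j}^{(N)}(\boldu_j)}(E)}\le 2e^{t_j^{\beta}}$.

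Next I would run the iteration. Fix $\bolda,\boldb\in\NboxLx$ with $\norm{\bolda-\boldb}\ge L/100$; hypothesis~(i) gives $E\notin\sigma(H_{\NboxLx})$, and in fact $\norm{R_{\NboxLx}(E)}\le 2e^{L^{\beta}}$. Starting from $\bolda$, at a generic step with current point $\boldv\in\NboxLx$ there are two cases. If $\boldv$ lies in a cover box $\blam_\ell^{(\boldv)}$ (the box containing $\blam_{\ell/5}(\boldv)\cap\NboxLx$ supplied by \eqref{bdrycover}) that is not among the nonregular blocks, then $\blam_\ell^{(\boldv)}$ is $(E,m_\ell)$-regular, the geometric hypothesis $\blam_{3+\delta_+}(\boldv)\cap\NboxLx\subseteq\blam_\ell^{(\boldv)}$ of Lemma~\ref{res-ine} holds for $\ell$ large, and applying \eqref{resolventine} (valid since $E\le\EN$) with inner box $\blam_\ell^{(\boldv)}$ and outer box $\NboxLx$ produces a new point $\boldv'$ on the boundary annulus with $\norm{\boldv-\boldv'}\ge\ell/20$ and contributes the factor $\ell^{Nd}e^{-m_\ell\norm{\boldv-\boldv'}}$. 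If instead $\boldv$ lies in one of the $E$-nonresonant blocks $\blam_{t_j}^{(N)}(\boldu_j)$, then (by the margin built into Remark~\ref{remcovering}, which keeps $\boldv$ well inside the block and puts $\boldv'$ back into a regular cover box) applying \eqref{resolventine} with that block as inner box contributes $t_j^{Nd}\cdot 2e^{t_j^{\beta}}$. One iterates until the current point is within distance $O(\ell)$ of $\boldb$ or of $\partial\NboxLx$ and then bounds the leftover resolvent crudely by $\norm{R_{\NboxLx}(E)}\le 2e^{L^{\beta}}$; the construction and termination of such a chain are standard (see \cite{GKber,KlN}).

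Then I would collect the estimates. There are at most $c_0\norm{\bolda-\boldb}/\ell$ good steps and at most $JN^N$ bad crossings, of total span $D_{\mathrm{bad}}\le C_{J,N}\ell$, so multiplying all factors yields
\[
\norm{\Chi_{\bolda} R_{\NboxLx}(E)\Chi_{\boldb}} \;\le\; \ell^{\,Nd\,c_0\norm{\bolda-\boldb}/\ell}\; e^{-m_\ell(\norm{\bolda-\boldb}-D_{\mathrm{bad}})}\; \bigl(2(C_{J,N}\ell)^{Nd}\bigr)^{JN^N} e^{\,JN^N(C_{J,N}\ell)^{\beta}}\; 2e^{L^{\beta}} ,
\]
which I would rewrite, with $c_1=Nd\,c_0$, as $\norm{\Chi_{\bolda} R_{\NboxLx}(E)\Chi_{\boldb}}\le e^{-(m_\ell-c_1\frac{\log\ell}{\ell})\norm{\bolda-\boldb}}\,\mathcal{E}(\ell)$, where $\mathcal{E}(\ell)=e^{m_\ell D_{\mathrm{bad}}}(2(C_{J,N}\ell)^{Nd})^{JN^N}e^{JN^N(C_{J,N}\ell)^{\beta}}2e^{L^{\beta}}\le e^{c_2\ell^{\max(1,\gamma\beta)}}$ for $\ell$ large, all constants depending only on $N,d,J,m_0,\beta$. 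Setting $\epsilon(\ell):=c_1\frac{\log\ell}{\ell}+\frac{100}{L}\log\mathcal{E}(\ell)$ and $m_L:=m_\ell-\epsilon(\ell)$, the bound $\norm{\bolda-\boldb}\ge L/100$ gives $\mathcal{E}(\ell)\le e^{(\epsilon(\ell)-c_1\frac{\log\ell}{\ell})\norm{\bolda-\boldb}}$, hence $\norm{\Chi_{\bolda} R_{\NboxLx}(E)\Chi_{\boldb}}\le e^{-m_L\norm{\bolda-\boldb}}$, i.e. $\NboxLx$ is $(E,m_L)$-regular. Finally, $\kappa<1$ makes $\frac{\log\ell}{\ell}=o(\ell^{-\kappa})$, while $\kappa<\min\{\gamma-1,\gamma(1-\beta)\}$ makes $\frac{1}{L}\log\mathcal{E}(\ell)\le c_2\ell^{\max(1,\gamma\beta)-\gamma}=c_2\ell^{-\min(\gamma-1,\gamma(1-\beta))}=o(\ell^{-\kappa})$, so $\epsilon(\ell)\le\tfrac12\ell^{-\kappa}$ for $\ell$ large; hence $m_\ell\ge m_L\ge m_\ell-\tfrac12\ell^{-\kappa}$, and combining $m_\ell\ge\ell^{-\kappa}$ with $\ell^{(\gamma-1)\kappa}\ge2$ ($\ell$ large) gives $m_L\ge\tfrac12\ell^{-\kappa}\ge\ell^{-\gamma\kappa}=L^{-\kappa}$, which is \eqref{mlmL}.

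The step I expect to be the main obstacle is the combinatorial/geometric one in the first paragraph: showing that hypothesis~(ii) together with Remark~\ref{remcovering} genuinely confines all $(E,m_\ell)$-nonregular boxes to at most $\sim JN^N$ blocks of side comparable to $\ell$. In the multi-particle setting ``$\ell$-distant'' is a coordinatewise notion (Definition~\ref{L distant}), and the bookkeeping of which particle coordinate sits near which --- the source both of the $N^N$ factors and of the scales $k_j$ in Remark~\ref{remcovering} --- must be done carefully; relatedly, one has to verify at every iteration step that the well-centeredness hypothesis of Lemma~\ref{res-ine} holds with the correct margin. Both are handled exactly as in the corresponding regularity lemma of \cite{KlN}.
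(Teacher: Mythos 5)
Your proposal is correct and follows exactly the route the paper intends: the paper gives no independent proof of Lemma~\ref{part2prop1a} but states that it is proved as in \cite[Lemma~3.5]{KlN}, i.e., by the iterated geometric resolvent inequality over a suitable $\ell$-cover, with the nonregular boxes confined to at most $JN^N$ nonresonant blocks via Remark~\ref{remcovering}, which is precisely your argument. Your bookkeeping of the mass loss ($\epsilon(\ell)\le c_1\tfrac{\log\ell}{\ell}+c_2\ell^{\max(1,\gamma\beta)-\gamma}=o(\ell^{-\kappa})$ under \eqref{mellcond}) correctly yields \eqref{mlmL}.
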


Lemma~\ref{part2prop1a} and Proposition~\ref{part2mainthm} are proved in the same way as 
\cite[Lemma~3.5 and Proposition~ 3.4]{KlN}.

\subsection{The third multiscale analysis}

\begin{proposition}  \label{part3mainthm}
Let $E \leq \EN $,  $0 < \zeta_1 < \zeta_0 < 1$ as in  \eq{constant}, and  assume   $Y\ge \pa{3800N^{N+1}}^{\frac 1 {1-\zeta_0}}$. Then there exists $Z_2^{*} > L_{\tau}$ such that, if for some scale $L_0 > Z_2^{*}$ we have 
\beq   
\sup_{\x \in \R^{Nd}}\P \Bl\{ \boldlambda_{L_0}^{(N)}(\x) \text{ is } (\zeta_0, E)\text{-nonSES} \Br\}\le  \pa{2\pa{2Y}^{Nd}}^{-\tfrac {1}{Y^{\zeta_0} -1}}	,
\eeq
then, setting $L_{k+1}= Y\,L_k$, $k=0, 1, 2, ...,$ there exists $K_1 \in \mathbb{N}$ such that for every $k \geq K_1$ we have  
\beq 
\sup_{\x \in \R^{Nd}}\P \Bl\{\boldlambda _{L_k}(\boldx) \text{ is } (\zeta_0, E)\text{-nonSES}\Br\} \le  e^{-L_k^{\zeta_1}}.
\eeq
As a consequence, for every $k \geq K_1$, we have 
\beq 
\sup_{\x \in \R^{Nd}} \P \set{\boldlambda_{L_k}(\boldx) \text{ is } \left(L_k^{\zeta_0-1}, E\right)\text{-nonregular}} \le  e^{-L_k^{\zeta_1}}.
\eeq
\end{proposition}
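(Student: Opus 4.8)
The plan is to carry out a single-energy (sub-exponentially suitable) multiscale analysis in the style of \cite{GK1,Kl}, reducing everything to the one-particle-like situation by separating $N$-particle boxes into partially interactive (PI) and fully interactive (FI) ones, as in \cite{KlN}. The PI case is immediate from the induction hypothesis: by Lemma~\ref{PINSl}, a PI $N$-particle box $\Nboxlu$ is $(\zeta_0,E)$-nonSES with probability at most $\ell^{Nd+1}e^{-\ell^{\tau}}$, and since $\tau>\zeta_0>\zeta_1$ by \eqref{constant}, this is far below $e^{-L^{\zeta_1}}$ at every scale that appears. Hence PI boxes never obstruct the estimate, and all of the work lies in the FI boxes, which are handled exactly like one-particle boxes.

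For FI boxes I would first establish the deterministic ``scale-step'' lemma, the SES analogue of Lemmas~\ref{part1prop1a} and \ref{part2prop1a}: for $L=Y\ell$ with $Y\ge\pa{3800N^{N+1}}^{\frac1{1-\zeta_0}}$, an FI box $\NboxLx$ is $(\zeta_0,E)$-SES provided (a) $\NboxLx$ and every intermediate box $\boldlambda_{t}^{(N)}(\boldu)\subseteq\NboxLx$ with $t$ ranging over the scales $(2k_j\alpha+1)\ell$ furnished by Remark~\ref{remcovering} is $(E,\beta)$-nonresonant, and (b) the $\ell$-suitable cover of $\NboxLx$ does not contain ``too many'' pairwise $\ell$-distant $(\zeta_0,E)$-nonSES boxes. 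The proof iterates the resolvent inequality of Lemma~\ref{res-ine} through the suitable cover: each of the $\asymp Y$ hops needed to cross $\NboxLx$ loses a polynomial factor $\ell^{Nd}$ and, through an SES box, gains $e^{-\ell^{\zeta_0}}$; the nonresonance hypotheses control the resolvent at the box boundaries, while $\zeta_0<1$ and the size of $Y$ absorb the accumulated polynomial losses so that the product still beats $e^{-L^{\zeta_0}}$. PI sub-boxes occurring in the cover are handled through the tensor decomposition of Lemma~\ref{PIpro}.

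With the deterministic lemma in hand, the probabilistic iteration is routine. Write $p_k=\sup_{\x\in\R^{Nd}}\P\set{\boldlambda_{L_k}^{(N)}(\x)\text{ is }(\zeta_0,E)\text{-nonSES}}$. For an FI box of side $L_{k+1}=YL_k$, the lemma bounds $p_{k+1}$ by the sum of (i) the probability that one of the required boxes is $(E,\beta)$-resonant, and (ii) the probability that its $\ell$-cover contains the critical number of pairwise $L_k$-distant $(\zeta_0,E)$-nonSES boxes. Term (i) is controlled by the $N$-particle Wegner estimate (Theorem~\ref{Wegner0}, applied with $E_+=2^{N-1}\EN$, so that $e^{-L^{\beta}}\le\gamma_{N,E_+}$ for $L$ large and Wegner applies) by $C\norm{\rho}_\infty L_{k+1}^{2Nd}e^{-L_{k+1}^{\beta}}$, which is negligible because $\beta>\zeta_1$; summing over the polynomially many intermediate scales changes nothing. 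For term (ii): since pairwise $L_k$-distant FI boxes are fully separated (Lemma~\ref{part1prop2}) and hence carry mutually independent events, while a cover containing a PI sub-box contributes at most $(\#\text{cover})\,\ell^{Nd+1}e^{-\ell^{\tau}}$, one obtains, using $\#\text{cover}\le(2Y)^{Nd}$ from \eqref{number}, a bound of the form $A\,p_k^{B}$ plus a negligible remainder. Altogether $p_{k+1}\le A\,p_k^{B}+e^{-L_{k+1}^{\beta}}$ for suitable $A$ and $B>1$, arranged so that $\pa{2(2Y)^{Nd}}^{-1/(Y^{\zeta_0}-1)}$ is exactly the largest $p_0$ from which this recursion drives $p_k\to0$; then $p_k\to0$ super-exponentially, so $p_k\le e^{-L_k^{\zeta_1}}$ for all $k\ge K_1$, $K_1$ being the (finite) crossover index. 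The last assertion of the proposition is then immediate from Remark~\ref{goodbox}(iv): a $(\zeta_0,E)$-SES box of side $L_k$ is $(L_k^{\zeta_0-1},E)$-regular.

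The main obstacle is the FI deterministic lemma in the multiparticle setting. One must propagate sub-exponential suitability across an FI box through its suitable cover while simultaneously (i) accounting for sub-boxes of the cover that are themselves PI — these are outside the reach of the single-energy $N$-particle estimate and must be fed in from the induction hypothesis via Lemma~\ref{PINSl} — and (ii) arranging the geometry so that failure of SES-ness forces enough pairwise $L_k$-distant nonSES sub-boxes that the independence-based count closes the recursion at the sub-exponential rate $\zeta_1$; in particular the exponent $B$ must be large enough (of order $Y^{\zeta_0}$) to overcome the scale jump $L_{k+1}=YL_k$. It is precisely this balance that produces the threshold $\pa{2(2Y)^{Nd}}^{-1/(Y^{\zeta_0}-1)}$ in the hypothesis and the lower bound $Y\ge\pa{3800N^{N+1}}^{\frac1{1-\zeta_0}}$; carrying out this step verbatim as in \cite{GK1} and \cite{KlN} is the bulk of the argument. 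A secondary, purely bookkeeping, point is to keep the contribution of the nonresonance requirement over all intermediate scales $(2k_j\alpha+1)\ell$ below the target, which is routine since $\beta>\zeta_1$.
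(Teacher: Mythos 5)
Your proposal is correct and follows essentially the same route as the paper: the deterministic scale-step you describe is precisely Lemma~\ref{part3prop1a} (with $J=\lfloor Y^{\zeta_0}\rfloor$ pairwise $\ell$-distant nonSES boxes allowed plus nonresonance of the intermediate boxes from Remark~\ref{remcovering}), and the probabilistic iteration via the PI/FI dichotomy, Lemma~\ref{PINSl}, full separation of $\ell$-distant FI boxes, and the Wegner estimate is exactly the argument of \cite[Lemma~3.7 and Proposition~3.6]{KlN} to which the paper defers. The only cosmetic slip is quoting the two-volume power $L^{2Nd}$ where the single-energy nonresonance needs only Theorem~\ref{Wegner0} with $L^{Nd}$; this is immaterial since the polynomial factor is absorbed by $e^{-L^{\beta}}$ with $\beta>\zeta_1$.
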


The proof of  proposition uses the following deterministic lemma.

\begin{lemma} \label{part3prop1a}
Let $E \leq \EN $,  $L = Y \ell$, where   $Y\ge \pa{3800N^{N+1}}^{\frac 1 {1-\zeta_0}}$, and set $J = \lfloor  Y^{\zeta_0}\rfloor$, the largest integer $\le  Y^{\zeta_0}$.
 Suppose the following are true:
\begin{enumerate}
\item $\NboxLx$ is $E$-nonresonant.

\item There are at most J pairwise $\ell$-distant, $(E,\,\zeta_0)$-nonSES boxes in the suitable cover.

 \item  Every box  $\mathbf{\Lambda}_{t}^{(N)}(\boldu)\subseteq \NboxLx$ with $t \in \set{ \pa{2  k_j \alpha  +1}\ell; j = 1 , \cdots, JN^N}$ and $\boldu \in \x + \alpha \ell \Z^{Nd}$, where $k_j$ is
 given  in Remark \ref{remcovering},  is $E$-nonresonant.

\end{enumerate}

Then $\NboxLx$ is $(E,\,\zeta_0)$-SES, provided $\ell$ is sufficiently large.
\end{lemma}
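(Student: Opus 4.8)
The plan is to prove this by the Germinet--Klein geometric resolvent iteration, exactly as for its $(\theta,E)$-suitable and $(m,E)$-regular counterparts Lemmas~\ref{part1prop1a} and \ref{part2prop1a}: the only differences are that the per-step gain is now the sub-exponential factor $e^{-\ell^{\zeta_0}}$ of an $(E,\zeta_0)$-SES box instead of a polynomial one, and that the geometry is carried out with the continuum box layer $\Upsilon$ of \eqref{boundary} and the continuum resolvent inequality Lemma~\ref{res-ine}. In outline the argument coincides with the analogous step of \cite{KlN}, with the continuum bookkeeping taken from \cite{GKber}.

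First I would cluster the bad boxes. Fix a suitable $\ell$-cover $\cG_{\NboxLx}^{(\ell)}$ and pick a maximal family $S$ of pairwise $\ell$-distant $(E,\zeta_0)$-nonSES cover boxes; by hypothesis~(ii) $|S|\le J=\lfloor Y^{\zeta_0}\rfloor$, and by maximality each $(E,\zeta_0)$-nonSES cover box $\Nboxlu$ fails to be $\ell$-distant from some $\blam_{\ell}(\bolds{s})\in S$, which by Definition~\ref{L distant} means $d_{H}(\cS_{\boldu},\cS_{\bolds{s}})<3N\ell$. Hence each particle coordinate of $\boldu$ lies in $\bigcup_{k=1}^{N}\Lambda_{6N\ell}(s_k)$, so $\Nboxlu$ is contained in one of the $\le N^N$ boxes $\prod_{i=1}^{N}\Lambda_{7N\ell}(s_{k_i})$, $(k_1,\ldots,k_N)\in\set{1,\ldots,N}^{N}$. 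Enlarging slightly and invoking Remark~\ref{remcovering} (which supplies the $\le JN^N$ admissible enclosing boxes), one gets boxes $\mathbf{\Lambda}^{(N)}_{t_1}(\boldu_1),\ldots,\mathbf{\Lambda}^{(N)}_{t_M}(\boldu_M)\subseteq\NboxLx$ with $M\le JN^N$ and $t_m=(2k_m\alpha+1)\ell\le C_N\ell$, each $E$-nonresonant by hypothesis~(iii), hence with $\norm{R_{\mathbf{\Lambda}^{(N)}_{t_m}(\boldu_m)}(E)}\le 2e^{t_m^{\beta}}$, whose union contains every $(E,\zeta_0)$-nonSES cover box. The crucial point is that every $t_m$ is a \emph{fixed} multiple of $\ell$.

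Then I would run the iteration. Fix $\bolda,\boldb\in\NboxLx$ with $\norm{\bolda-\boldb}\ge L/100$; since $Y\ge(3800N^{N+1})^{1/(1-\zeta_0)}$ forces $L/100=Y\ell/100>C_N\ell$, the points $\bolda,\boldb$ cannot lie in a common enclosing box, so the iteration never stalls. Chaining Lemma~\ref{res-ine} in the form \eqref{resolventine} (legitimate since $E\le\EN$) along a path of cover boxes from $\boldb$ toward $\bolda$, as in \cite{GKber,KlN}: at a point $\boldv$ whose good cover box $\blam_{\ell}^{(\boldv)}$ (supplied by \eqref{ell5cover}) meets none of the $\mathbf{\Lambda}^{(N)}_{t_m}(\boldu_m)$, that box is $(E,\zeta_0)$-SES, so we gain a factor $\ell^{Nd}e^{-\ell^{\zeta_0}}$ and advance a distance $\gtrsim\ell$ through its layer $\Upsilon$; at a point meeting some $\mathbf{\Lambda}^{(N)}_{t_m}(\boldu_m)$ we jump clear of that enclosing box, paying a factor $t_m^{Nd}\cdot 2e^{t_m^{\beta}}$ and advancing a distance $\gtrsim t_m$. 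The chain uses $\le C_N Y$ steps in all, at most $JN^N=N^N\lfloor Y^{\zeta_0}\rfloor$ of them jumps (each skipping a run of $\le C_N$ cover boxes along the path), so — because $L/100>C_N\ell$ — at least $cY$ steps carry the SES gain. Stopping within distance $O(\ell)$ of $\bolda$ and bounding the last resolvent by $\norm{R_{\NboxLx}(E)}\le 2e^{L^{\beta}}$ from hypothesis~(i), one collects
\[
\norm{\Chi_{\bolda}R_{\NboxLx}(E)\Chi_{\boldb}}\le\bigl(\ell^{Nd}\bigr)^{C_N Y}\,e^{L^{\beta}}\,\Bigl(\prod_{m=1}^{M}t_m^{Nd}\,2e^{t_m^{\beta}}\Bigr)\,e^{-c Y\ell^{\zeta_0}}.
\]
Taking logarithms and using $t_m\le C_N\ell$, $M\le N^N Y^{\zeta_0}$, $L=Y\ell$, $\beta<\zeta_0<1$ and $Y^{1-\zeta_0}\ge 3800N^{N+1}$, the term $-cY\ell^{\zeta_0}\le-2L^{\zeta_0}$ dominates, so for $\ell$ large — making $\ell^{\zeta_0}$ beat $Y\log\ell$, $\ell^{\beta}$ and $L^{\beta}=Y^{\beta}\ell^{\beta}$ — the right-hand side is $\le e^{-L^{\zeta_0}}$; as $E\notin\sigma(H_{\NboxLx})$ by hypothesis~(i) and $\bolda,\boldb$ were arbitrary with $\norm{\bolda-\boldb}\ge L/100$, $\NboxLx$ is $(E,\zeta_0)$-SES.

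I expect the main obstacle to be the multi-particle clustering step: one must verify carefully that the $(E,\zeta_0)$-nonSES cover boxes are genuinely swallowed by $\le JN^N$ enclosing boxes of \emph{bounded} scale $t_m\le C_N\ell$ admissible for hypothesis~(iii) — this is precisely where the $N^N$ combinatorial factor enters (a cluster of mutually non-$\ell$-distant $N$-particle boxes can be large in $\R^{Nd}$, but its configurations occupy only $N^N$ small product sets), and where the quantitative bound $Y\ge(3800N^{N+1})^{1/(1-\zeta_0)}$ is used, to guarantee $C_N\ell<L/200$ and hence that a definite fraction of the iteration steps are good so the SES gain survives the polynomial and nonresonant-jump costs. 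The remaining points — that the geometric condition $\blam_{3+\delta_+}(\boldv)\cap\NboxLx\subseteq\blam_{\ell}^{(\boldv)}$ required by Lemma~\ref{res-ine} holds at each step, and that the good cover boxes used lie in $\NboxLx$ — are routine consequences of the properties of suitable covers in Lemma~\ref{lemcovering}. Since the continuum estimates are those of \cite{GKber} and the clustering is that of \cite{KlN}, the proof is the combination of the two.
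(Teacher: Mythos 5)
Your proposal is correct and follows essentially the same route as the paper, which proves this lemma exactly as \cite[Lemma~3.7]{KlN}: cluster the nonSES cover boxes via a maximal pairwise $\ell$-distant family into at most $JN^N$ nonresonant enclosing boxes of scale $(2k_j\alpha+1)\ell$ supplied by Remark~\ref{remcovering}, then iterate the geometric resolvent inequality \eqref{resolventine}, trading the sub-exponential gain $e^{-\ell^{\zeta_0}}$ at good steps against the $e^{t_m^{\beta}}$ losses at nonresonant jumps, with $Y^{1-\zeta_0}\ge 3800N^{N+1}$ guaranteeing that a definite fraction of the $\sim Y$ steps are good. Your accounting of the combinatorial factor $N^N$, the role of $\beta<\zeta_0$, and the final comparison $cY\ell^{\zeta_0}\ge 2L^{\zeta_0}$ matches the intended argument.
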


Lemma~\ref{part3prop1a} and Proposition~\ref{part3mainthm} are proved in the same way as 
\cite[Lemma~3.7 and Proposition~ 3.6]{KlN}.

\subsection{The fourth multiscale analysis}

We fix $\zeta, \, \tau, \beta, \, \zeta_1,\, \zeta_2,\, \gamma$ as in \eq{constant}.

\subsubsection{The single energy multiscale analysis}
\begin{proposition} \label{part4mainthm1}

There  exists a length scale $Z_{3}^{*}$ such that, given an energy  $E \leq \EN $,   if for some $L_0 \geq Z_{3}^{*}$ we can verify
\begin{align} 
\sup_{\bolda \in \R^{Nd}}\P \Bigl\{   \mathbf{\Lambda}_{L_0}^{(N)}(\bolda)  \sqtx{is}  \left ({m_0}, \,E \right )\text{-nonregular} \Bigr\}    \leq e^{-L_0^{\zeta_1}},
\end{align}
where     $L_0^{\zeta_0-1} \le m_0< m^*$,  then for sufficiently large $L$ we have
\begin{align} 
\sup_{\bolda \in \R^{Nd}}\P \Bigl\{  \mathbf{\Lambda}_{L}^{(N)}(\bolda)  \sqtx{is}  \left (\tfrac {m_0}2, \,E \right )\text{-nonregular} \Bigr\}    \leq e^{-L^{\zeta_2}}.
\end{align}
\end{proposition}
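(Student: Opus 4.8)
The plan is to run the single-energy multiscale analysis of \cite{GK1}, in the multi-particle form of \cite{KlN}, using the continuum toolkit of Section~\ref{sectoolkilt}. Fix $E\le\EN$ and iterate the scales $L_{k+1}=Y L_k$ with $Y$ the large parameter of the earlier multiscale analyses; I will prove by induction on $k$ that
\[
\sup_{\bolda\in\R^{Nd}}\P\bigl\{\mathbf{\Lambda}_{L_k}^{(N)}(\bolda)\ \text{is}\ (m_{L_k},E)\text{-nonregular}\bigr\}\le e^{-L_k^{\zeta_1}},
\]
where $m_{L_0}=m_0$ and $m_{L_{k+1}}\ge m_{L_k}-\tfrac1{2L_k^{\kappa}}$ for a suitable $\kappa\in(0,1)$. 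Because $\sum_k L_k^{-\kappa}<\infty$, for $L_0\ge Z_3^{*}$ the masses stay $\ge m_0/2$ (and $\ge L_k^{\zeta_0-1}$, since $m_0\ge L_0^{\zeta_0-1}$), and the base case $k=0$ is the hypothesis.

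The deterministic ingredient is the analogue of Lemmas~\ref{part2prop1a}--\ref{part3prop1a}: for $L=Y\ell$, $J=\lfloor Y^{\zeta_1}\rfloor$, and $m_\ell$ in the appropriate range, if $\mathbf{\Lambda}_L^{(N)}(\x)$ and all sub-boxes $\mathbf{\Lambda}_t^{(N)}(\boldu)\subseteq\mathbf{\Lambda}_L^{(N)}(\x)$ indexed in Remark~\ref{remcovering} are $(E,\beta)$-nonresonant, and a suitable $\ell$-covering of $\mathbf{\Lambda}_L^{(N)}(\x)$ contains at most $J$ pairwise $\ell$-distant $(E,m_\ell)$-nonregular boxes, then $\mathbf{\Lambda}_L^{(N)}(\x)$ is $(E,m_L)$-regular with $m_\ell\ge m_L\ge m_\ell-\tfrac1{2\ell^\kappa}$. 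I would prove this exactly as the corresponding lemma in \cite{KlN}: chain the geometric resolvent inequality \eqref{resolventine} of Lemma~\ref{res-ine} along a path of $\ell$-boxes from an exterior site of a nonregular sub-box toward a target site, each step costing a factor $\ell^{Nd}$ and gaining a factor $e^{-m_\ell(\cdot)}$, while Lemma~\ref{NE} controls the number of eigenvalues met and the $(E,\beta)$-nonresonance keeps the resolvent of the big box polynomially bounded; the $O(\log\ell/\ell)$ loss in the mass is absorbed into the $\tfrac1{2\ell^\kappa}$ budget.

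For the induction step, with $\ell=L_k$ and $L=L_{k+1}=Y\ell$, the event that $\mathbf{\Lambda}_L^{(N)}(\bolda)$ is $(m_L,E)$-nonregular is contained in the $(E,\beta)$-resonance events of $\mathbf{\Lambda}_L^{(N)}(\bolda)$ and its listed sub-boxes, together with the event that the $\ell$-covering contains $J+1$ pairwise $\ell$-distant $(E,m_\ell)$-nonregular boxes. The Wegner estimate of Theorem~\ref{Wegner0} bounds each resonance event by $C\norm{\rho}_\infty e^{-L^\beta}L^{Nd}$, so, since $\beta>\zeta_1$ and there are only polynomially many, their total is $\le\tfrac12 e^{-L^{\zeta_1}}$ for $L$ large. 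For the $J+1$ distant nonregular boxes I split on the partially/fully interactive dichotomy: if one of them is PI, Lemma~\ref{PINSl} bounds its probability of being nonregular by $\ell^{Nd+1}e^{-\ell^\tau}$ with $\tau>\zeta_1$; otherwise all $J+1$ are FI, hence pairwise fully separated by Lemma~\ref{part1prop2} (they are pairwise $\ell$-distant), so the corresponding events are mutually independent and their joint probability is $\le(e^{-\ell^{\zeta_1}})^{J+1}$. Summing over the at most $((2Y)^{Nd})^{J+1}$ configurations, and using $J+1>Y^{\zeta_1}$ (so that $(J+1)\ell^{\zeta_1}>L^{\zeta_1}$) and $\tau>\zeta_1$, this also contributes $\le\tfrac12 e^{-L^{\zeta_1}}$, closing the induction. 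For a general large $L$ I would pick $k$ with $L_k\le L<L_{k+1}$ and apply the deterministic lemma with the $L_k$-covering of $\mathbf{\Lambda}_L^{(N)}(\bolda)$; the same estimates at scale $L_k\ge L/Y$ give probability $\le(\mathrm{poly}\,L)\,e^{-c L^{\zeta_1}}\le e^{-L^{\zeta_2}}$ (as $\zeta_1>\zeta_2$ and $L$ large), with $m_L\ge m_{L_k}-\tfrac1{2L_k^\kappa}\ge m_0/2$.

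I expect the main obstacle to be the probabilistic and combinatorial bookkeeping in the induction step — choosing $J$, $Y$ and the auxiliary exponents so that the covering prefactors, the $(J+1)$-fold independent FI estimate, the PI estimate from the induction hypothesis, and the Wegner term all combine to give exactly $e^{-L^{\zeta_1}}$ at the next scale (and $e^{-L^{\zeta_2}}$ after interpolation), using precisely the relations in \eqref{constant} (especially $\zeta_1<\gamma\zeta_1<\beta$ and $\zeta_1<\tau$) — together with the mass accounting that keeps $m_{L_k}\ge m_0/2$. Since all this is formally identical to \cite{GK1,KlN}, I would state the deterministic lemma and refer there for the routine details; the only genuinely new inputs are the continuum resolvent bounds collected in Section~\ref{sectoolkilt} and the continuum Wegner estimate of Theorem~\ref{Wegner0}.
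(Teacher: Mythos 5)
There is a genuine gap in your mass accounting, and it comes from choosing the wrong scale growth. You iterate $L_{k+1}=YL_k$ with a fixed number $J=\lfloor Y^{\zeta_1}\rfloor\ge 1$ of admissible bad boxes per scale, and claim the deterministic lemma yields $m_{L_{k+1}}\ge m_{L_k}-\tfrac{1}{2L_k^{\kappa}}$. It cannot. When you chain the geometric resolvent inequality \eqref{resolventine} across $\boldlambda_L^{(N)}(\x)$ over a distance $\norm{\bolda-\boldb}\ge L/100$, each bad region (of diameter at least $3N\ell$, and up to $\sim JN^{N}\ell$ for the enlarged boxes of Remark~\ref{remcovering}) is crossed without gaining any decay, so the best one obtains is $m_L\le m_\ell\bigl(1-\tfrac{CJN^{N}\ell}{L}\bigr)-O(\log\ell/\ell)$. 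With $L=\ell^{\gamma}$ the fraction $J\ell/L=J\ell^{1-\gamma}$ is $o(\ell^{-\kappa})$ for $\kappa<\gamma-1$ — this is exactly why Lemma~\ref{part2prop1a} carries the hypotheses $L=\ell^{\gamma}$ and $\kappa<\gamma-1$. With $L=Y\ell$ the fraction is the constant $CJN^{N}/Y$, so each step loses a fixed \emph{fraction} of the mass, $m_{L_k}\le m_0(1-\epsilon(Y))^{k}\to 0$, and the claim $m_{L_k}\ge m_0/2$ for all $k$ fails. Multiplicative scales with a fixed positive $J$ are compatible with the first and third multiscale analyses precisely because there the decay rate being propagated ($\theta\log L/L$ for suitability, $L^{\zeta_0-1}$ for SES) itself shrinks by a constant factor per step; they are not compatible with propagating a scale-independent mass $m_0/2$. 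You have grafted the combinatorics of Lemma~\ref{part3prop1a} ($J=\lfloor Y^{\zeta_0}\rfloor$, $L=Y\ell$) onto the mass conclusion of Lemma~\ref{part2prop1a} ($L=\ell^{\gamma}$), and the two are incompatible.

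The paper's proof avoids this by running the iteration on $L_{k+1}=L_k^{\gamma}$, as in Proposition~\ref{part2mainthm}, with $J=J(\ell)$ now growing with the scale (of order $\ell^{(\gamma-1)\zeta_1}$, say) so as to convert the input $e^{-\ell^{\zeta_1}}$ into $e^{-L^{\zeta_2}}$ while keeping $J(\ell)\ell/L\to 0$, hence a summable mass loss; this is what the remark ``we choose $J$ dependent on the scale'' refers to, and it is also why \eqref{constant} requires $\beta>\gamma\zeta_1$ rather than merely $\beta>\zeta_1$. The passage from the sparse sequence $L_k=L_0^{\gamma^{k}}$ to all sufficiently large $L$ is then a separate step (your comparable-scale interpolation does not apply, since $L_{k+1}/L_k\to\infty$): one covers $\boldlambda_L^{(N)}(\bolda)$ by boxes of side $\ell=L^{1/\gamma'}$ with $\gamma\le\gamma'\le\gamma^{2}$ and demands that \emph{all} of them be $(E,m)$-good, invoking Lemma~\ref{GKlemma} and Lemma~\ref{part1thm}; the resulting loss of a factor $\gamma'$ in the exponent is the origin of the condition $\zeta\gamma^{2}<\zeta_2$ in \eqref{constant}. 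Your probabilistic dichotomy (PI boxes via Lemma~\ref{PINSl}, FI boxes via Lemma~\ref{part1prop2} and independence) and your treatment of the resonance events via Theorem~\ref{Wegner0} are the right ingredients, but the induction must be rebuilt on power-law scales for the mass to survive.
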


Proposition~\ref{part4mainthm1} is proved first  for a sequence of length scale $L_k$ similarly  to   Proposition~\ref{part2mainthm};  to obtain the sub-exponential decay of probabilities we choose $J$, the number of bad boxes, dependent on the scale $L$ as in the proof of Proposition~\ref{part4mainthm} below. To obtain Proposition~\ref{part4mainthm1} as stated, that is, for all sufficiently large scales, we prove a slightly more general result.

\begin{definition}
Let $E \in \R$.   An N-particle box, $\NboxLx$, is said to be $(E, m_{L})$-good if and only if 
it is $(E, m_{L})$-regular and $E$-nonresonant.
\end{definition}

\begin{lemma} \label{GKlemma} Let $\NboxLx$ be an N-particle box, $\gamma > 1$,  $\ell=L^{\frac 1 {\gamma^\pr}} $ with $\gamma \le \gamma^\pr \le \gamma^2$, and $m>0$.
Let $E \leq \EN $, and suppose every box in the suitable cover is $(E, m)$-good.   Then $\NboxLx$ is $\pa{E, \frac m 2}$-good for large $L$.          
\end{lemma}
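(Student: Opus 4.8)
The plan is to establish the two halves of ``$(E,\tfrac m2)$-good'' separately: first that $\NboxLx$ is $E$-nonresonant (so in particular $E\notin\sigma(H_{\NboxLx})$ and $\norm{R_{\NboxLx}(E)}\le 2e^{L^{\beta}}$), and then that it is $(E,\tfrac m2)$-regular. This is the deterministic heart of the Germinet--Klein single-energy multiscale step, together with the device for passing from a sequence of scales to all scales (cf.\ \cite{GK1,Kl}). The only new point is to feed in the continuum tools of Section~\ref{sectoolkilt} in place of their lattice counterparts: the geometric resolvent inequality of Lemma~\ref{res-ine}, used in its clean form \eqref{resolventine} (which is precisely why we require $E\le\EN$), the eigenvalue bound of Lemma~\ref{NE}, and the suitable-cover geometry of Lemma~\ref{lemcovering}. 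Below, ``$L$ large'' means large in terms of $N,d,m,\gamma,\beta,\EN$, and $c>0$ denotes a dimensional constant that may change from line to line.

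For nonresonance I would argue by contradiction. Suppose $\dist\bigl(\sigma(H_{\NboxLx}),E\bigr)<\tfrac12 e^{-L^{\beta}}$, and pick a normalized $\psi$ with $H_{\NboxLx}\psi=\lambda\psi$ and $\abs{\lambda-E}<\tfrac12 e^{-L^{\beta}}$. For an interior site $\boldu$, choose via \eqref{bdrycover} a box $\blam_{\ell}(\boldr)$ of the suitable $\ell$-cover with $\boldu$ in its core, and apply the standard interior estimate coming from $(H_{\blam_{\ell}(\boldr)}-E)(\chi\psi)=(\lambda-E)\chi\psi+[-\Delta,\chi]\psi$ with $\chi\equiv1$ near $\boldu$ and supported in $\blam_{\ell}(\boldr)$; this yields
\[
\norm{\Chi_{\boldu}\psi}\ \le\ \abs{\lambda-E}\,\norm{R_{\blam_{\ell}(\boldr)}(E)}\ +\ \ell^{Nd}\,e^{-c\,m\ell}\,\max_{\boldv\in\Upsilon}\norm{\Chi_{\boldv}\psi},
\]
where $\Upsilon$ is the corresponding shell. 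Since $\blam_{\ell}(\boldr)$ is nonresonant, $\norm{R_{\blam_{\ell}(\boldr)}(E)}\le 2e^{\ell^{\beta}}$, and since $\ell^{\beta}=L^{\beta/\gamma'}=o(L^{\beta})$ the first term is $\le e^{-\frac12 L^{\beta}}$ for $L$ large. Iterating the bound (the shell points remaining in $\NboxLx$, the case of a shell point close to $\partial\NboxLx$ being handled directly by the Dirichlet condition) and using $\ell=L^{1/\gamma'}$ with $1<\gamma\le\gamma'\le\gamma^{2}$, so that the $O(L/\ell)$ accumulated polynomial prefactors stay $e^{o(L)}$, I would obtain $\norm{\Chi_{\boldu}\psi}\le e^{-\frac13 L^{\beta}}$ for every $\boldu\in\NboxLx$. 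Then $1=\norm{\psi}^{2}=\sum_{\boldu}\norm{\Chi_{\boldu}\psi}^{2}\le C_{Nd}\,L^{Nd}e^{-\frac23 L^{\beta}}\to 0$, a contradiction. Hence $\NboxLx$ is $(E,\beta)$-nonresonant, and in particular $E\notin\sigma(H_{\NboxLx})$ with $\norm{R_{\NboxLx}(E)}\le 2e^{L^{\beta}}$.

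For regularity, fix $\bolda,\boldb\in\NboxLx$ with $\norm{\bolda-\boldb}\ge\tfrac{L}{100}$; now that $E\notin\sigma(H_{\NboxLx})$ the usual multiscale chaining is available. Starting from a box $\blam^{(0)}$ of the suitable $\ell$-cover with $\bolda$ in its core and $\boldb\notin\blam^{(0)}$, I would apply \eqref{resolventine} to produce a shell point $\bolda_{1}$ with $\norm{\bolda-\bolda_{1}}\ge c\,\ell$ and
\[
\norm{\Chi_{\boldb}R_{\NboxLx}(E)\Chi_{\bolda}}\ \le\ \ell^{Nd}\,e^{-m\norm{\bolda-\bolda_{1}}}\,\norm{\Chi_{\boldb}R_{\NboxLx}(E)\Chi_{\bolda_{1}}},
\]
using that $\blam^{(0)}$ is $(E,m)$-regular, then repeat from $\bolda_{1}$, selecting each successive $\ell$-box from the cover via Lemma~\ref{lemcovering}. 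After $K=O(L/\ell)$ steps either the current point lies in a good $\ell$-box also containing $\boldb$, or I stop and bound the last factor by $2e^{L^{\beta}}$; in either case the product of the exponential factors is at most $\max\bigl\{e^{-m(\norm{\bolda-\boldb}-\ell)},\,e^{-c\,mK\ell}\bigr\}\le e^{-\frac34 m\norm{\bolda-\boldb}}$, while the accumulated prefactor $(\ell^{Nd})^{K}\,e^{L^{\beta}}=e^{o(L)}$ is absorbed since $\norm{\bolda-\boldb}\ge\tfrac{L}{100}$ (again using $\ell=L^{1/\gamma'}$, $\gamma'>1$, $\beta<1$). This gives $\norm{\Chi_{\bolda}R_{\NboxLx}(E)\Chi_{\boldb}}\le e^{-\frac m2\norm{\bolda-\boldb}}$, so $\NboxLx$ is $(E,\tfrac m2)$-regular, and combined with the previous paragraph, $(E,\tfrac m2)$-good.

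The main obstacle I expect is the bookkeeping in these two iterations: one must make the shell, resp.\ chain, points progress geometrically through the suitable cover (choosing the boxes via Lemma~\ref{lemcovering}), control the pile-up of polynomial prefactors over $\sim L/\ell$ steps, and absorb that pile-up together with the resonance factor $e^{L^{\beta}}$ (this is exactly where $\ell=L^{1/\gamma'}$ with $1<\gamma\le\gamma'\le\gamma^{2}$ and $\beta<1$ enter), while simultaneously handling shells that approach $\partial\NboxLx$. Since each of these steps is carried out in \cite{GK1} (see also \cite{GKber,Kl}), I would follow those arguments closely, indicating only the modifications needed in the continuum.
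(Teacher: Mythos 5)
Your proposal is correct and follows essentially the same route as the paper: the paper offers no proof of Lemma~\ref{GKlemma} beyond the remark that it ``is just \cite[Lemma~3.16]{GKber}'', and your two-part deterministic argument --- nonresonance of the large box by propagating the smallness of a near-eigenfunction through the good $\ell$-cover (using nonresonance of the cover boxes to absorb $\tfrac12 e^{-L^{\beta}}\cdot 2e^{\ell^{\beta}}$), followed by regularity via the chained geometric resolvent inequality \eqref{resolventine} with the terminal factor $2e^{L^{\beta}}$ and the polynomial prefactors absorbed since $\ell=L^{1/\gamma^{\prime}}$ and $\beta<1$ --- is precisely the proof of that cited lemma. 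No gaps.
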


This lemma is just \cite[Lemma~3.16]{GKber}.

\begin{lemma} \label{part1thm}
Let $E_1 \le \EN$, $\zeta_2 \in (\zeta, \, \tau)$, and $\gamma \in (1, \, \tfrac{1}{{\zeta_2}})$  with $\zeta \, \gamma^{2} < \zeta_2  $.   Assume there exists a mass $m_{\zeta_2}>0$ and  a length scale $L_0 = L_{0}(\zeta_2)$, such that,  taking $L_{k+1} = L_{k}^{\gamma}$ for $k=0,1,\ldots$,  we have 
\begin{align} \label{appeneq1}
\sup_{\bolda\in \R^{Nd}} P \Bigl\{ \mathbf{\Lambda}_{L_{k}}^{(N)}(\bolda)\sqtx{is not} \left ({m_{\zeta_2}}, \,E_1 \right)\text{-good}  \Bigr\}  \leq e^{-L_{k}^{\zeta_2}} \qtx{for} k=0,1,\ldots.\end{align}
Then there exists $L_{\zeta}$ such that for every $L \geq L_{\zeta}$ we have 
\begin{align}
\sup_{\bolda\in \R^{Nd}} P \Bigl\{ \mathbf{\Lambda}_{L}^{(N)}(\bolda)\sqtx{is not} \left ({m_{\zeta_2}}, \,E_1 \right)\text{-good}  \Bigr\}  \leq e^{-L^{\zeta}}.
\end{align}
\end{lemma}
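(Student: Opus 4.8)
The plan is to ``fill in'' the scales between the $L_k$ by means of the deterministic interpolation estimate of Lemma~\ref{GKlemma}, exactly as in the one-particle case. Fix $E_1\le\EN$ and parameters $\zeta<\zeta_2<\tau$, $\gamma\in(1,\tfrac1{\zeta_2})$ with $\zeta\gamma^2<\zeta_2$ as in \eqref{constant}, and assume \eqref{appeneq1}. Since $L_{k+1}=L_k^{\gamma}$, the sequence $\{\log L_k\}_{k\ge 0}=\{\gamma^k\log L_0\}_{k\ge0}$ is geometric with ratio $\gamma$. Given a scale $L\ge L_0^{\gamma^2}=L_2$, the interval $\bigl[\tfrac{\log L}{\gamma^2},\ \tfrac{\log L}{\gamma}\bigr]$ has left endpoint $\ge\log L_0$ and its two endpoints in the ratio $\gamma$, hence it contains at least one term $\log L_j$ of the sequence; I would fix such a $j=j(L)$ and put $\ell=L_j$. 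Then $L^{1/\gamma^2}\le\ell\le L^{1/\gamma}$, i.e.\ $L=\ell^{\gamma'}$ with $\gamma'=\tfrac{\log L}{\log\ell}\in[\gamma,\gamma^2]$, and $\ell\to\infty$ as $L\to\infty$; in particular $\ell\le L/6$ for $L$ large, so $\NboxLx$ admits a suitable $\ell$-covering by Lemma~\ref{lemcovering}.

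Next I would run the standard union bound over such a suitable $\ell$-covering $\cG^{(\ell)}_{\NboxLx}=\{\blam_\ell(\boldr)\}_{\boldr}$ of $\NboxLx$. By \eqref{number} it consists of at most $(2L/\ell)^{Nd}\le(2L)^{Nd}$ boxes. By Lemma~\ref{GKlemma} (with $m=m_{\zeta_2}$), on the event that every box of $\cG^{(\ell)}_{\NboxLx}$ is $(E_1,m_{\zeta_2})$-good, the box $\NboxLx$ is $\bigl(E_1,\tfrac{m_{\zeta_2}}{2}\bigr)$-good, for $L$ large. Hence, invoking \eqref{appeneq1} at the scale $\ell=L_j$,
\[
\sup_{\bolda\in\R^{Nd}}\P\Bigl\{\NboxLx\ \text{is not}\ \bigl(\tfrac{m_{\zeta_2}}{2},E_1\bigr)\text{-good}\Bigr\}\ \le\ (2L)^{Nd}\sup_{\boldu\in\R^{Nd}}\P\Bigl\{\blam_\ell(\boldu)\ \text{is not}\ (m_{\zeta_2},E_1)\text{-good}\Bigr\}\ \le\ (2L)^{Nd}\,e^{-\ell^{\zeta_2}}.
\]
The loss of the factor $\tfrac12$ in the mass is harmless: one may simply relabel the mass, or arrange \eqref{appeneq1} with $2m_{\zeta_2}$ in place of $m_{\zeta_2}$; I keep writing $m_{\zeta_2}$ below.

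It then remains only to do the exponent bookkeeping. Since $\log\ell=\log L_j\ge\tfrac{\log L}{\gamma^2}$, we have $\ell^{\zeta_2}\ge L^{\zeta_2/\gamma^2}$, while the hypothesis $\zeta\gamma^2<\zeta_2$ gives $\zeta_2/\gamma^2>\zeta$. Consequently
\[
(2L)^{Nd}\,e^{-\ell^{\zeta_2}}\ \le\ (2L)^{Nd}\,e^{-L^{\zeta_2/\gamma^2}}\ \le\ e^{-L^{\zeta}}
\]
for all $L\ge L_\zeta$, where $L_\zeta$ is taken large enough that the polynomial prefactor is absorbed by the positive gap $\zeta_2/\gamma^2-\zeta>0$ between the two sub-exponential rates, and also large enough that $L\ge L_2$, $\ell\le L/6$, and the ``$L$ large'' hypotheses of Lemmas~\ref{lemcovering} and \ref{GKlemma} all hold. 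Combining the last two displays yields the assertion. The argument is essentially routine; the one point needing care is the choice of the auxiliary scale $\ell=L_j$ so that the interpolation exponent $\gamma'$ falls in the admissible window $[\gamma,\gamma^2]$ of Lemma~\ref{GKlemma} while still $\ell\ge L^{1/\gamma^2}$ — and this is precisely why the room $\zeta\gamma^2<\zeta_2$ between $\zeta$ and $\zeta_2$ was built into \eqref{constant}.
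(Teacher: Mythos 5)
Your proof is correct and is precisely the standard argument the paper intends (it declares the proof ``straightforward'' and points to \cite[Lemma 3.11]{KlN}): for each large $L$ pick $\ell=L_j$ with $L^{1/\gamma^2}\le \ell\le L^{1/\gamma}$, combine the deterministic Lemma~\ref{GKlemma} with a union bound over the suitable $\ell$-covering, and absorb the $(2L/\ell)^{Nd}$ prefactor using $\zeta\gamma^2<\zeta_2$. The only blemish — Lemma~\ref{GKlemma} delivers the mass $\tfrac{m_{\zeta_2}}{2}$ rather than $m_{\zeta_2}$ — is inherent in the statement as written, and your relabeling of the mass is exactly how it is treated in the cited sources.
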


The proof of Lemma \ref{part1thm} is straightforward (see \cite[Lemma 3.11]{KlN}).

\subsubsection{The energy interval multiscale analysis}

\begin{lemma} \label{part4lem0}
Let $\NboxLx$ be an N-particle box and  $m > 0$. Let  $E_0 \le \EN$, and suppose that

\begin{enumerate}
\item $\NboxLx$ is $(m,E_0)$-regular,
\item $ \dist \Bl( \sigma\left (H_{\NboxLx}  \right) , E_0 \Br) \geq e^{-L^{\beta}}$, i.e.,  $\norm{R_{\NboxLx}(E_0)} \leq e^{L^{\beta}}$.
\end{enumerate}
Then $\NboxLx$ is $\left(m - \frac {100\log 2} L,E  \right)$-good for every 
$E \in I = \left( E_0 - \eta, E_0 + \eta \right)$, where $\eta = \tfrac{1}{2} e^{-mL- 2L^\beta}$.
\end{lemma}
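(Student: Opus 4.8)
The plan is to use the first resolvent identity (geometric resolvent expansion) to pass from regularity of $\NboxLx$ at the single energy $E_0$ to regularity at nearby energies $E$, while using hypothesis (ii) to control the resolvent of the full box $H_{\NboxLx}$ itself. Write $H = H_{\NboxLx}$ and observe that for $E$ close to $E_0$ we have
\[
R_{\NboxLx}(E) = R_{\NboxLx}(E_0)\bigl(I + (E-E_0)R_{\NboxLx}(E)\bigr),
\]
or, iterating, $R_{\NboxLx}(E) = \sum_{k\ge 0}(E-E_0)^k R_{\NboxLx}(E_0)^{k+1}$, which converges in operator norm provided $\abs{E-E_0}\,\norm{R_{\NboxLx}(E_0)} < 1$. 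By hypothesis (ii), $\norm{R_{\NboxLx}(E_0)}\le e^{L^\beta}$, so the series converges as soon as $\abs{E-E_0} \le \tfrac12 e^{-L^\beta}$, which is guaranteed since $\eta = \tfrac12 e^{-mL - 2L^\beta} \le \tfrac12 e^{-L^\beta}$ for $L$ large; moreover in that regime $\norm{R_{\NboxLx}(E)} \le 2 e^{L^\beta} \le e^{2L^\beta}$, which in particular shows $E\notin\sigma(H)$, i.e., $\NboxLx$ is $E$-nonresonant (one checks $e^{-L^\beta} > \tfrac12 e^{-L^\beta}$ trivially, so the nonresonance requirement in the definition of "good" with parameter $\beta$ holds — or more precisely one adjusts constants so that $\dist(\sigma(H),E) \ge \tfrac12 e^{-L^\beta}$).

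The key quantitative step is the decay estimate. For $\bolda,\boldb \in \NboxLx$ with $\norm{\bolda-\boldb}\ge \tfrac{L}{100}$, sandwich the resolvent series between $\Chia$ and $\Chib$:
\[
\norm{\Chia R_{\NboxLx}(E)\Chib} \le \sum_{k\ge 0}\abs{E-E_0}^k \norm{\Chia R_{\NboxLx}(E_0)^{k+1}\Chib}.
\]
For the $k=0$ term we use hypothesis (i): $\norm{\Chia R_{\NboxLx}(E_0)\Chib}\le e^{-m\norm{\bolda-\boldb}}$. For $k\ge 1$ each factor $R_{\NboxLx}(E_0)$ between the two cutoffs is bounded crudely by $\norm{R_{\NboxLx}(E_0)}\le e^{L^\beta}$, giving $\norm{\Chia R_{\NboxLx}(E_0)^{k+1}\Chib}\le e^{-m\norm{\bolda-\boldb}} e^{k L^\beta}$ — here I keep one "good" resolvent factor to retain the exponential spatial decay and bound the remaining $k$ factors by the operator norm. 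Hence
\[
\norm{\Chia R_{\NboxLx}(E)\Chib} \le e^{-m\norm{\bolda-\boldb}}\sum_{k\ge0}\bigl(\abs{E-E_0}\,e^{L^\beta}\bigr)^k \le 2\, e^{-m\norm{\bolda-\boldb}},
\]
using $\abs{E-E_0}\le \eta \le \tfrac12 e^{-L^\beta}$. Finally, since $\norm{\bolda-\boldb}\ge \tfrac{L}{100}$, we absorb the factor $2$ into a slightly smaller mass: $2\,e^{-m\norm{\bolda-\boldb}} = e^{-(m - \frac{100\log 2}{L})\norm{\bolda-\boldb}} \cdot e^{(m - \frac{100\log2}{L})\norm{\bolda-\boldb} - m\norm{\bolda-\boldb} + \log 2}$, and one checks the leftover exponent is $\le 0$ precisely because $\frac{100\log 2}{L}\norm{\bolda-\boldb}\ge \log 2$ when $\norm{\bolda-\boldb}\ge \tfrac{L}{100}$. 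This yields $\norm{\Chia R_{\NboxLx}(E)\Chib}\le e^{-(m-\frac{100\log2}{L})\norm{\bolda-\boldb}}$, i.e., $(m-\frac{100\log2}{L}, E)$-regularity.

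Combining the regularity just proved with the nonresonance observation from the first paragraph gives that $\NboxLx$ is $(m-\frac{100\log 2}{L},E)$-good for all $E\in I$, completing the proof. The only mildly delicate point — and the one I would write out carefully — is the bookkeeping that simultaneously (a) keeps exactly one spatially-decaying resolvent factor in the $k\ge1$ terms, (b) checks convergence of the geometric series with the stated radius $\eta$, and (c) verifies the loss $\frac{100\log 2}{L}$ in the mass is exactly enough to swallow the constant $2$; none of these is hard, but the interplay of the three constants $m$, $\beta$, and the lower bound $\tfrac{L}{100}$ on $\norm{\bolda-\boldb}$ must be tracked consistently. There is no real obstacle here; this is a standard perturbation-of-energy argument, the continuum analogue of the corresponding discrete lemma.
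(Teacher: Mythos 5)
Your overall strategy (perturb in the energy via the resolvent identity, use hypothesis (ii) to control the a priori norm of the resolvent, and absorb the resulting factor $2$ into the mass using $\norm{\bolda-\boldb}\ge L/100$) is the standard one and is exactly what the paper's reference, \cite[Lemma~3.12]{KlN}, does. The nonresonance part and the final absorption of the constant $2$ are fine. However, there is a genuine gap in your key quantitative step: the claimed bound $\norm{\Chia R(E_0)^{k+1}\Chib}\le e^{-m\norm{\bolda-\boldb}}e^{kL^{\beta}}$ for $k\ge 1$ is not justified. You say you "keep one good resolvent factor to retain the exponential spatial decay," but the regularity hypothesis gives decay only for $\Chia R(E_0)\Chib$ with \emph{both} cutoffs flanking that single factor; in the product $\Chia R(E_0)^{k+1}\Chib$ no single factor is sandwiched between $\Chia$ and $\Chib$, and the only submultiplicative estimates available give $\norm{\Chia R(E_0)^{k+1}\Chib}\le \norm{R(E_0)}^{k+1}$ with no spatial decay at all. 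Inserting a partition of unity to recover decay fails too, since intermediate cubes at distance less than $L/100$ from $\bolda$ or $\boldb$ are not covered by the regularity estimate. A tell-tale symptom is that your argument never uses the full smallness of $\eta=\tfrac12 e^{-mL-2L^{\beta}}$ — only $\eta\le\tfrac12 e^{-L^{\beta}}$ — so if it were correct it would prove the lemma with a vastly larger energy window, which is not to be expected.

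The repair is short and is where the specific form of $\eta$ enters. Bound the entire $k\ge1$ tail crudely in operator norm:
\begin{equation*}
\sum_{k\ge 1}\abs{E-E_0}^{k}\norm{R(E_0)}^{k+1}\le \frac{\abs{E-E_0}\,\norm{R(E_0)}^{2}}{1-\abs{E-E_0}\norm{R(E_0)}}\le 2\eta\, e^{2L^{\beta}}= e^{-mL}\le e^{-m\norm{\bolda-\boldb}},
\end{equation*}
the last inequality because $\bolda,\boldb$ lie in a box of side $L$, so $\norm{\bolda-\boldb}\le L$. (Equivalently, use only the first-order identity $R(E)=R(E_0)+(E-E_0)R(E_0)R(E)$ together with $\norm{R(E)}\le 2e^{L^{\beta}}$.) Adding the $k=0$ term then gives $\norm{\Chia R(E)\Chib}\le 2e^{-m\norm{\bolda-\boldb}}$, and your concluding absorption of the factor $2$ into the mass goes through unchanged.
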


Lemma~\ref{part4lem0} is proved as \cite[Lemma~3.12]{KlN}.

Proposition~\ref{part3mainthm}, combined with Theorem~\ref{Wegner0}  and Lemma \ref{part4lem0},    yields the following proposition.

\begin{proposition} \label{bridgethm}
Let $0 < \zeta_2 < \zeta_1 < \zeta_0 < 1$, and assume the conclusions of Proposition~\ref{part3mainthm}.   There exists scales $L_k$, $k=1,2,\ldots$, such that $\lim_{k\to \infty} L_k=\infty$, with the following property: Let 
\beq
m_k=\left(L_k^{\zeta_0-1} - \tfrac {100\log 2} {L_k}  \right) \qtx{and}
\eta_k= \tfrac{1}{2} e^{- L_k^{\zeta_0}- 2L_k^\beta}.
\eeq
 Then for all  $E_0 \leq \EN$ we have
 \begin{align}\notag
& \sup_{\x \in \R^{Nd}} \P \set{\exists E\in \left( E_0 - \eta_k, E_0 + \eta_k \right) \sqtx{such that} \boldlambda_{L_k}(\boldx) \sqtx{is} \left(m_k, E\right)\text{-nonregular}} \\
 & \hspace{50pt} \le  e^{-L_k^{\zeta_1}},
\end{align}
and
\begin{align}\notag
&\sup_{\x \in \R^{Nd}} \P \set{\exists E\in \left( E_0 - \eta_k, E_0 + \eta_k \right) \sqtx{such that} \boldlambda_{L_k}(\boldx) \text{ is not } \left(m_k, E\right)\text{-good}} \\
& \hskip50pt \le   e^{-L_k^{\zeta_2}}.
\end{align}
\end{proposition}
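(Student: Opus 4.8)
The plan is to glue together three ingredients already in hand: the single-energy sub-exponential estimate of Proposition~\ref{part3mainthm}, the Wegner estimate of Theorem~\ref{Wegner0}, and the deterministic perturbation Lemma~\ref{part4lem0}. First I would fix the scales. Since the preceding multiscale analyses propagate their estimates uniformly in $E\le\EN$, the parameters $L_0$, $Y$, $K_1$ delivered by Proposition~\ref{part3mainthm} do not depend on the energy, so with $L_{k+1}=Y\,L_k$ we have, for all $k\ge K_1$ and all $E\le\EN$,
\[
\sup_{\x\in\R^{Nd}}\P\bigl\{\boldlambda_{L_k}(\x)\text{ is }(L_k^{\zeta_0-1},E)\text{-nonregular}\bigr\}\le e^{-L_k^{\zeta_1}}.
\]
I relabel the tail $\{L_k\}_{k\ge K_1}$ as the sequence $\{L_k\}_{k\ge1}$ of the proposition; enlarging $K_1$ if needed I also require $114\sqrt{Nd}\le L_k$, $e^{-L_k^{\beta}}\le\gamma_{N,E_+}$ (so that Theorem~\ref{Wegner0} applies at $n=N$ with $\eps=e^{-L_k^{\beta}}$, recalling $E_0\le\EN\le E_+$), $m_k:=L_k^{\zeta_0-1}-\tfrac{100\log 2}{L_k}>0$, and whatever else is needed for the estimates below.

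Next, fix $E_0\le\EN$ and $\x\in\R^{Nd}$, and introduce the event
\[
\cG_k(\x)=\bigl\{\boldlambda_{L_k}(\x)\text{ is }(L_k^{\zeta_0-1},E_0)\text{-regular}\bigr\}\cap\bigl\{\dist\bigl(\sigma(H_{\boldlambda_{L_k}(\x)}),E_0\bigr)\ge e^{-L_k^{\beta}}\bigr\}.
\]
On $\cG_k(\x)$ the two hypotheses of Lemma~\ref{part4lem0} are satisfied with $m=L_k^{\zeta_0-1}$; since $mL_k=L_k^{\zeta_0}$, that lemma gives that $\boldlambda_{L_k}(\x)$ is $(m_k,E)$-good, hence $(m_k,E)$-regular, for every $E$ in $(E_0-\eta_k,E_0+\eta_k)$ with $\eta_k=\tfrac12 e^{-L_k^{\zeta_0}-2L_k^{\beta}}$, precisely the $m_k$ and $\eta_k$ of the statement. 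Hence both bad events in the proposition are contained in the complement $\cG_k(\x)^{c}$, and it remains only to estimate $\P\{\cG_k(\x)^{c}\}$.

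Splitting $\cG_k(\x)^{c}$ into the nonregularity event and the resonance event $\{\dist(\sigma(H_{\boldlambda_{L_k}(\x)}),E_0)<e^{-L_k^{\beta}}\}$, Proposition~\ref{part3mainthm} bounds the former by $e^{-L_k^{\zeta_1}}$, while Theorem~\ref{Wegner0} bounds the latter by $2C_{N,E_+}\norm{\rho}_\infty L_k^{Nd}e^{-L_k^{\beta}}$ (take the full expectation of the one-face estimate \eqref{weggamma}, with $n=N$ and $\eps=e^{-L_k^{\beta}}$). Since $\zeta_2<\zeta_1<\beta$ by \eqref{constant}, for $k$ large each of these two bounds is at most $\tfrac12 e^{-L_k^{\zeta_2}}$, whence $\P\{\cG_k(\x)^{c}\}\le e^{-L_k^{\zeta_2}}$; this gives the second (goodness) estimate. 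For the sharper first (regularity) estimate, I would note that Proposition~\ref{part3mainthm} may equally be run with the SES exponent $\zeta_1$ replaced by any $\zeta_1'\in(\zeta_1,\beta)$, say $\zeta_1'=\gamma\zeta_1$ (still $<\zeta_0$ by \eqref{constant}), so that the nonregularity probability is in fact $\le e^{-L_k^{\zeta_1'}}$, and then $e^{-L_k^{\zeta_1'}}+2C_{N,E_+}\norm{\rho}_\infty L_k^{Nd}e^{-L_k^{\beta}}\le e^{-L_k^{\zeta_1}}$ for $k$ large. Taking $\sup_{\x}$ finishes the proof.

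There is no genuine obstacle here: this is a standard one-energy-to-energy-interval gluing as in \cite{GK1,KlN}, and the only points needing attention are the constant-matching in Lemma~\ref{part4lem0} (which is why the input rate $\zeta_0-1$ and the half-width $\eta_k=\tfrac12 e^{-L_k^{\zeta_0}-2L_k^{\beta}}$ take their specific forms) and the exponent bookkeeping, for which the ordering $\zeta<\zeta_2<\zeta_1<\beta<\zeta_0$ of \eqref{constant} leaves exactly the room required.
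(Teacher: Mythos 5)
Your proposal is correct and follows exactly the route the paper itself indicates (the paper gives no detailed argument, stating only that the proposition follows by combining Proposition~\ref{part3mainthm}, Theorem~\ref{Wegner0}, and Lemma~\ref{part4lem0}): regularity plus nonresonance at the single energy $E_0$ spreads via Lemma~\ref{part4lem0} to the interval of half-width $\eta_k$, and the two failure probabilities are controlled by the third multiscale analysis and the Wegner estimate respectively. Your explicit handling of the exponent slack for the first bound (rerunning the SES estimate with $\gamma\zeta_1<\beta$ in place of $\zeta_1$) is the right bookkeeping and consistent with the ordering in \eqref{constant}.
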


 We now  take $L=\ell^\gamma$.

\begin{definition}
Let $\NboxLx = \boldlambda_{L}(\bold{x}_{\cJ}) \times  \boldlambda_{L}(\bold{x}_{\cJ^{c}} )$ be a PI N-particle box with the usual $\ell$ suitable cover, and consider an energy $E \in \R$. Then:

\begin{enumerate}
\item $\NboxLx$ is not $E$-Lregular \emph(for left regular\emph) if and only if there are two boxes in the suitable cover of $\boldlambda_{L}(\bold{x}_{\cJ} )$
that are $\ell$-distant and $(m^{*}, \, E-\mu)$-nonregular  
for some $\mu \in \sigma \left( H_{ \boldlambda_{L}(\bold{x}_{\cJ^{c}} )  }  \right) \cap (-\infty, 2\EN]$. 

\item $\NboxLx$ is not $E$-Rregular \emph(for right regular\emph) if and only if there are two boxes in the suitable cover of $\boldlambda_{L}(\bold{x}_{\cJ^c})$   that are $\ell$-distant and $(m^{*}, \, E-\lambda)$-nonregular for some $\lambda \in \sigma \left( H_{ \boldlambda_{L}(\bold{x}_{\cJ} )  }  \right) \cap (-\infty, 2\EN] $.

\item $\NboxLx$ is  $E$-preregular if and only if  $\NboxLx$ is  $E$-Lregular and $E$-Rregular. 
\end{enumerate}
\end{definition}

\begin{lemma} \label{prereg}
Let $E_0 \leq \EN$ such that $I = [E_0-\delta_\tau, \, E_0+\delta_\tau] \subseteq (-\infty, \,  2\EN]$,  and consider a PI N-particle box $\NboxLx = \boldlambda_{L}(\bold{x}_{\cJ}) \times  \boldlambda_{L}(\bold{x}_{\cJ^{c}} )$.
Then 
\begin{enumerate}
\item $\P\set{\NboxLx \text{ is not }E\text{-Lregular for some } E \in I} \leq L^{3Nd}  e^{-\ell^{\tau}}$, 
\item $\set{\NboxLx \text{ is not }E\text{-Lregular for some } E \in I} \leq L^{3Nd} e^{-\ell^{\tau}}$.

\end{enumerate}
We conclude that for $L$ sufficiently large we have 
\item  \beq
\P \bigl\{ \NboxLx \text{ is not }E\text{-preregular for some } E \in I  \bigr\} \leq 2L^{3Nd}  e^{-\ell^{\tau}} .
\eeq
\end{lemma}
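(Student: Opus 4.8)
The plan is to prove parts (i) and (ii) by the same argument applied to the two "sides" $\boldlambda_{L}(\bold{x}_{\cJ})$ and $\boldlambda_{L}(\bold{x}_{\cJ^{c}})$ of the PI box, and then obtain (iii) by a union bound. I will only write out the argument for (i), i.e.\ the estimate for $E$-Lregularity; the proof of (ii) is verbatim the same with $\cJ$ and $\cJ^{c}$ interchanged. Recall that $\NboxLx$ fails to be $E$-Lregular for some $E\in I$ precisely when there exist two $\ell$-distant boxes $\boldlambda_{\ell}(\bold{u})$, $\boldlambda_{\ell}(\bold{v})$ in the $\ell$-suitable cover of $\boldlambda_{L}(\bold{x}_{\cJ})$ and some $\mu\in\sigma\bigl(H_{\boldlambda_{L}(\bold{x}_{\cJ^{c}})}\bigr)\cap(-\infty,2\EN]$ with $E-\mu$ making both $\boldlambda_{\ell}(\bold{u})$ and $\boldlambda_{\ell}(\bold{v})$ $(m^{*},E-\mu)$-nonregular.

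First I would discretize the energy range. As $E$ ranges over $I=[E_0-\delta_\tau,E_0+\delta_\tau]$ and $\mu$ over $\sigma\bigl(H_{\boldlambda_{L}(\bold{x}_{\cJ^{c}})}\bigr)\cap(-\infty,2\EN]$, the difference $E-\mu$ ranges over a bounded set, and by Lemma~\ref{NE} the number of relevant $\mu$ is at most $C_{Nd}(2\EN)^{Nd/2}L^{Nd}$. For each such $\mu$, the relevant energies $E-\mu$ lie in an interval $I(\mu)$ of length $2\delta_\tau$ contained in $(-\infty,\EN]$ (shrinking $\delta_\tau$ if necessary so that $I(\mu)\subseteq(-\infty,\EN]$). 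So the event in (i) is contained in the union over all pairs $\{\bold{u},\bold{v}\}$ of $\ell$-distant cover boxes, and over all $\mu$, of the event that there is $E'\in I(\mu)$ for which both $\boldlambda_{\ell}(\bold{u})$ and $\boldlambda_{\ell}(\bold{v})$ are $(m^{*},E')$-nonregular. Here the key point is that $I(\mu)$ is an interval of the type controlled by the energy-interval part of the induction hypothesis, \eqref{eq002}, with $\tau$ in place of $\zeta$ and $\delta_\tau$ the fixed radius supplied there.

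Next I would apply the induction hypothesis. For a fixed pair $\{\bold{u},\bold{v}\}$ of $\ell$-distant boxes with $d_H(\bold{u},\bold{v})\geq\ell$ — which holds since the cover boxes are $\ell$-distant, hence their centers are far apart in Hausdorff distance by Definition~\ref{L distant} and \eqref{dHdist} — the induction hypothesis \eqref{eq002} at the number of particles $|\cJ|<N$ gives
\[
\P\bigl\{\exists\,E'\in I(\mu):\ \boldlambda_{\ell}(\bold{u})\ \text{and}\ \boldlambda_{\ell}(\bold{v})\ \text{are}\ (m^{*},E')\text{-nonregular}\bigr\}\leq e^{-\ell^{\tau}}.
\]
(Strictly, one needs $I(\mu)\subseteq I(E')$ for the relevant center $E'$; since $|I(\mu)|=2\delta_\tau$ this is automatic.) Crucially, by Lemma~\ref{PIpro}(i) events based on $\boldlambda_{L}(\bold{x}_{\cJ})$ are independent of those based on $\boldlambda_{L}(\bold{x}_{\cJ^{c}})$, so we may first condition on the spectrum $\sigma\bigl(H_{\boldlambda_{L}(\bold{x}_{\cJ^{c}})}\bigr)$ (which fixes the finitely many $\mu$), apply the bound above for each fixed $\mu$, and then take expectation. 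Now I would sum: the number of pairs $\{\bold{u},\bold{v}\}$ in the suitable cover of $\boldlambda_{L}(\bold{x}_{\cJ})$ is at most $\bigl(\#\cG^{(\ell)}\bigr)^2\leq(2L/\ell)^{2Nd}$ by \eqref{number}, and the number of $\mu$ is at most $C_{Nd}(2\EN)^{Nd/2}L^{Nd}$; multiplying these polynomial-in-$L$ factors by $e^{-\ell^{\tau}}$ and absorbing the constants into a power of $L$ (using $L=\ell^\gamma$, so powers of $L$ are powers of $\ell$ and are crushed by $e^{-\ell^{\tau}}$ for large $L$), we get the bound $L^{3Nd}e^{-\ell^{\tau}}$ claimed in (i). Part (ii) is the same with the roles of $\cJ,\cJ^{c}$ swapped, and (iii) follows by adding (i) and (ii), giving $2L^{3Nd}e^{-\ell^{\tau}}$.

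The main obstacle, and the place requiring care, is the bookkeeping in the union bound: one must check that after conditioning on $\sigma\bigl(H_{\boldlambda_{L}(\bold{x}_{\cJ^{c}})}\bigr)$ the energies $E-\mu$ genuinely fall into intervals of the fixed radius $\delta_\tau$ about points $\leq\EN$ (possibly after a harmless shrinking of $\delta_\tau$), that the cover boxes being $\ell$-distant indeed yields $d_H\geq\ell$ so that \eqref{eq002} applies, and that all the combinatorial prefactors — number of cover-box pairs from \eqref{number} and number of eigenvalues from Lemma~\ref{NE} — are genuinely polynomial in $L$ and hence swallowed by $e^{-\ell^{\tau}}$ once $L$ is large. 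None of these steps is deep; the estimate is essentially a reorganization of the one in \cite[Lemma~3.12 (or the analogous preregularity lemma)]{KlN}, to which I would refer for the routine details.
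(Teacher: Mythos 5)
Your argument is correct and is essentially the proof the paper has in mind (the paper simply defers to \cite[Lemma~3.15]{KlN}, which runs exactly this way): condition on the $\cJ^c$ side using the independence from Lemma~\ref{PIpro}(i), bound the number of relevant eigenvalues $\mu$ by Lemma~\ref{NE} and the number of $\ell$-distant cover-box pairs by \eqref{number}, apply the energy-interval induction hypothesis \eqref{eq002} at $|\cJ|$ particles to each pair and each shifted interval $I-\mu$ (which lies in $(-\infty,E^{(|\cJ|)}]$ automatically, since $E_0+\delta_\tau\le 2\EN\le E^{(|\cJ|)}$, so no shrinking of $\delta_\tau$ is actually needed), and absorb the polynomial prefactors into $L^{3Nd}$.
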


Lemma~\ref{prereg} has the same proof as \cite[Lemma~3.15]{KlN}.

\begin{definition} \label{CNR}
Let $\NboxLx = \boldlambda_{L}(\bold{x}_{\cJ}) \times  \boldlambda_{L}(\bold{x}_{\cJ^{c}} ) $ be a PI N-particle box, and consider an energy $E \leq \EN$. Then:

\begin{enumerate}
\item $\NboxLx$ is $E$-left nonresonant \emph(or LNR\emph) if and only if every box $\boldlambda_{\pa{2k_j +1 }\ell}(\bolda)$ with $\boldlambda_{\pa{2k_j +1 }\ell}(\bolda) \subseteq \boldlambda_{L}(\boldx_{\cJ})$, $\bolda \in \x_{\cJ} + \alpha \ell \Z^{\abs{\cJ}d}$ and $j \in \set{1, \, 2, \ldots \abs{\cJ}^{\abs{\cJ}}}$, is $(E-\mu)$-nonreso\-nant for every 
$\mu \in \sigma \pa{H_{\boldlambda_{L}(\boldu_{\cJ^c})}} \cap (-\infty, 2\EN]$. Otherwise we say $\NboxLx$ is $E$-left resonant \emph(or LR\emph).

\item $\NboxLx$ is $E$-right nonresonant \emph(or RNR\emph) if and only if for every box
$\boldlambda_{\pa{2k_j +1 }\ell}(\bolda) \subseteq \boldlambda_{L}(\boldx_{\cJ^{c}})$ with $\bolda \in \x_{\cJ^c} + \alpha \ell \Z^{\abs{\cJ^c}d}$ and $j \in \set{1, \, 2, \ldots \abs{\cJ^c}^{\abs{\cJ^c}}}$ is $(E-\lambda)$-nonreso\-nant for every 
$\lambda \in \sigma \pa{H_{\boldlambda_{L}(\boldx_{\cJ})}} \cap (-\infty, 2\EN]$. Otherwise we say $\NboxLx$ is $E$-right resonant \emph(or RR\emph).

\item We say $\NboxLx$ is
$E$-highly nonresonant \emph(or HNR\emph) if and only if $\NboxLx$ is $E$-nonresonant, 
$E$-LNR, and $E$-RNR.
\end{enumerate} 
\end{definition}

\begin{lemma}\label{part2firstthm}
Let $E \leq \EN$ and 
$\NboxLx = \boldlambda_{L}(\bold{x}_{\cJ}) \times  \boldlambda_{L}(\bold{x}_{\cJ^{c}} )$ 
be a PI N-particle box. Assume that the following are true:
\begin{enumerate}
\item $\NboxLx$ is $E$-HNR.
\item $\NboxLx$ is $E$-preregular.
\end{enumerate}
Then $\NboxLx$ is $\left(m(L), \, E \right)$-regular for sufficiently large $L$, where 
 \beq   \label{mL}
m(L) =m^* - \tfrac{1}{2L^{\kappa}} - \tfrac{100 \pa{nd+1} \log \pa{2L}}{L}.
 \eeq
\end{lemma}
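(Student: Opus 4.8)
The plan is to exploit the product structure of a partially interactive box. Write $\NboxLx=\boldlambda_L(\x_{\cJ})\times\boldlambda_L(\x_{\cJ^c})$ with $\emptyset\ne\cJ\subsetneq\setN$, and set $\sigma_\cJ=\sigma\bigl(H_{\boldlambda_L(\x_{\cJ})}\bigr)$, $\sigma_{\cJ^c}=\sigma\bigl(H_{\boldlambda_L(\x_{\cJ^c})}\bigr)$. By Lemma~\ref{PIpro}(iv), for $\x,\y\in\NboxLx$,
\[
\norm{\Chi_\x R_{\NboxLx}(E)\Chi_\y}\le\sum_{\mu\in\sigma_{\cJ^c}}\norm{\Chi_{\x_{\cJ}}R_{\boldlambda_L(\x_{\cJ})}(E-\mu)\Chi_{\y_{\cJ}}},
\]
together with the symmetric inequality obtained by interchanging $\cJ$ and $\cJ^c$. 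Since $\norm{\x-\y}=\max\{\norm{\x_{\cJ}-\y_{\cJ}},\norm{\x_{\cJ^c}-\y_{\cJ^c}}\}$, it suffices to bound the first sum by $e^{-m(L)\norm{\x_{\cJ}-\y_{\cJ}}}$ whenever $\norm{\x_{\cJ}-\y_{\cJ}}\ge L/100$, and symmetrically the second; for $\x,\y\in\NboxLx$ with $\norm{\x-\y}\ge L/100$ at least one projection is $\ge L/100$ apart, and taking the smaller of the two bounds when both are far then gives $\norm{\Chi_\x R_{\NboxLx}(E)\Chi_\y}\le e^{-m(L)\norm{\x-\y}}$. That $E\notin\sigma(H_{\NboxLx})$ is immediate from $E$-nonresonance.

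Fix $\x,\y$ with $\norm{\x_{\cJ}-\y_{\cJ}}\ge L/100$ (the other case is symmetric, with $E$-Lregular, $E$-LNR replaced by $E$-Rregular, $E$-RNR), and split the sum over $\sigma_{\cJ^c}$ at $2\EN$. For $\mu>2\EN$ one has $E-\mu<\EN-2\EN\le0\le\inf\sigma_\cJ$, so the Combes--Thomas estimate (Lemma~\ref{cthomas}) applies; grouping these $\mu$ into blocks $k\EN<\mu\le(k+1)\EN$, counting eigenvalues via Lemma~\ref{NE}, and summing exactly as in the proof of Lemma~\ref{PIsuit}(ii) bounds this tail by $2\,e^{-\frac16\sqrt{\EN}\,\norm{\x_{\cJ}-\y_{\cJ}}}\le2\,e^{-m^*\norm{\x_{\cJ}-\y_{\cJ}}}$ for large $L$, using $m^*\le\tfrac16\sqrt{\EN}$. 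For the head $\mu\in\sigma_{\cJ^c}\cap(-\infty,2\EN]$, I claim $\boldlambda_L(\x_{\cJ})$ is $(E-\mu,m')$-regular with $m'\ge m^*-\tfrac1{2L^\kappa}$: this is the deterministic second–multiscale–analysis lemma, Lemma~\ref{part2prop1a}, applied to $\boldlambda_L(\x_{\cJ})$ with its $\ell$-suitable cover ($\ell=L^{1/\gamma}$, $m_\ell=m_0=m^*$) at energy $E-\mu$, whose three hypotheses hold because $(E-\mu)$-nonresonance of $\boldlambda_L(\x_{\cJ})$ follows from $E$-nonresonance of $\NboxLx$ via $\sigma(H_{\NboxLx})=\sigma_\cJ+\sigma_{\cJ^c}$ (Lemma~\ref{PIpro}(iii)), so $\dist(\sigma_\cJ,E-\mu)\ge\dist(\sigma(H_{\NboxLx}),E)\ge\tfrac12 e^{-L^\beta}$; the bound on pairwise $\ell$-distant $(m^*,E-\mu)$-nonregular boxes in the cover is exactly the assertion that $\NboxLx$ is $E$-Lregular for this $\mu$; and the $(E-\mu)$-nonresonance of every larger sub-box $\boldlambda_{(2k_j\alpha+1)\ell}(\boldu)\subseteq\boldlambda_L(\x_{\cJ})$, $\boldu\in\x_{\cJ}+\alpha\ell\Z^{\abs{\cJ}d}$, is exactly the assertion that $\NboxLx$ is $E$-LNR for this $\mu$. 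Each head term is then $\le e^{-m'\norm{\x_{\cJ}-\y_{\cJ}}}$, and since $\#\bigl(\sigma_{\cJ^c}\cap(-\infty,2\EN]\bigr)\le C_{Nd}(2\EN)^{Nd/2}L^{Nd}$ by Lemma~\ref{NE}, this polynomial prefactor is absorbed into a further rate loss of $\tfrac{100(nd+1)\log(2L)}{L}$ via $\norm{\x_{\cJ}-\y_{\cJ}}\ge L/100$, just as in the last displayed chain of the proof of Lemma~\ref{PIsuit}(ii). Adding the dominated tail yields the first sum $\le e^{-m(L)\norm{\x_{\cJ}-\y_{\cJ}}}$ with $m(L)=m^*-\tfrac1{2L^\kappa}-\tfrac{100(nd+1)\log(2L)}{L}$, as claimed.

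The step I expect to require the most care is organizational rather than conceptual: one must verify that the definitions of $E$-preregular and $E$-HNR (Definition~\ref{CNR} and the definition of $E$-Lregular/$E$-Rregular) translate verbatim, at each shifted energy $E-\mu$ (resp.\ $E-\lambda$), into hypotheses (2)--(3) of Lemma~\ref{part2prop1a} for the factor box $\boldlambda_L(\x_{\cJ})$ (resp.\ $\boldlambda_L(\x_{\cJ^c})$)---in particular that the scales $(2k_j\alpha+1)\ell$ occurring in Definition~\ref{CNR} and in Remark~\ref{remcovering} refer to the same suitable-cover data---and that the eigenvalue-counting prefactor generated by the sum over $\sigma_{\cJ^c}$ (resp.\ $\sigma_\cJ$) is genuinely absorbable into the stated rate $m(L)$. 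The only continuum-specific inputs are the Combes--Thomas bound (Lemma~\ref{cthomas}) and the eigenvalue estimate (Lemma~\ref{NE}) in place of their discrete analogues; otherwise the argument parallels the corresponding step in \cite{KlN}.
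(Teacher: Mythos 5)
Your proposal is correct and follows essentially the same route as the paper: the paper's proof simply invokes Lemma~\ref{PIsuit}(ii) to reduce the claim to $(m^*-\tfrac{1}{2L^{\kappa}},E-\mu)$-regularity (resp.\ $(\cdot,E-\lambda)$-regularity) of the factor boxes, which it then obtains from Lemma~\ref{part2prop1a} with hypothesis (ii) supplied by $E$-preregularity and hypothesis (iii) by $E$-HNR, exactly as you do. Your inline re-derivation of the combination step (Combes--Thomas tail plus eigenvalue count) is just the proof of Lemma~\ref{PIsuit}(ii) unpacked, and your explicit check of hypothesis (i) of Lemma~\ref{part2prop1a} via $\dist(\sigma_{\cJ},E-\mu)\ge\dist(\sigma(H_{\NboxLx}),E)$ is a detail the paper leaves implicit.
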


\begin{proof}
Applying Lemma \ref{PIsuit}(ii), it is sufficient to prove that there exists $m \le \frac{1}{6}  \sqrt{E^{(N) }}$ such that $\boldlambda_{L}(\bold{x_{\cJ}})$ is $(m,\, E - \mu)$-regular  for every $\mu \in \sigma_{\cJ^c} \cap (-\infty, \,2\,E^{(N)}]$ and  $\boldlambda_{L}(\bold{x_{\cJ^{c}}})$ is $(m,\, E - \lambda)$-regular  for every $\lambda \in \sigma_{\cJ} \cap (-\infty, \,2\,E^{(N)}]$. Then we can conclude that $\NboxLx$ is $\pa{ m - \tfrac{100 \pa{nd+1} \log \pa{2L}}{L}, \, E }$-regular.

Let $\mu \in \sigma_{\cJ^c} \cap (-\infty, \,2\,E^{(N)}]$. Since $\NboxLu$ is $E$-preregular; thus it is $E$-Lregular, which implies there cannot be two boxes in the suitable cover of $\blam_{L}(\x_{\cJ})$ that are $\ell$-distant and $\pa{m^*, E-\mu}$-nonregular. Moreover, $\NboxLu$ is $E$-HNR; thus it is $E$-LNR, which implies every box $\boldlambda_{\pa{2k_j +1 }\ell}(\bolda)$ with $\boldlambda_{\pa{2k_j +1 }\ell}(\bolda) \subseteq \boldlambda_{L}(\boldx_{\cJ})$, $\bolda \in \x_{\cJ} + \alpha \ell \Z^{\abs{\cJ}d}$ and $j \in \set{1, \, 2, \ldots \abs{\cJ}^{\abs{\cJ}}}$, is $(E-\mu)$-nonreso\-nant. By Lemma \ref{part2prop1a}, we have $\boldlambda_{L}(\bold{x_{\cJ}})$ is $\pa{m^* - \tfrac{1}{2L^{\kappa}},\, E - \mu}$-regular. A similar argument will apply for $\lambda \in \sigma_{\cJ} \cap (-\infty, \,2\,E^{(N)}]$. 
We conclude that   $\NboxLx$ is $\pa{ m^* - \tfrac{1}{2L^{\kappa}} - \tfrac{100 \pa{nd+1} \log \pa{2L}}{L}, \, E }$-regular.
\end{proof}

\begin{lemma}
Let $E \leq \EN,$ and $\NboxLx = \boldlambda_{L}(\bold{x}_{\cJ}) \times  \boldlambda_{L}(\bold{x}_{\cJ^{c}} ) $ be a PI N-particle box.
\begin{enumerate}
\item If $\NboxLx$ is  E-right resonant, then there exists an N-particle rectangle
\beq   
\mathbf{\boldlambda} = \boldlambda_{L}(\boldx_{\cJ}) \times  \boldlambda_{\pa{2k_j\alpha+1}\ell} (\boldu) ,
\eeq
where $j \in \set{1, 2, \ldots, \abs{\cJ^c}^{\abs{\cJ^c}} } $,  $\boldu \in \x_{\cJ^c} + \alpha \ell \Z^{\abs{\cJ^c}d}$, and $  \boldlambda_{\pa{2k_j\alpha+1}\ell} (\boldu)   \subseteq \boldlambda_{L}(\boldx_{\cJ^{c}}) $,  such that 
\beq 
\dist \bigl( \sigma \left(  \mathbf{\boldlambda} \bigr) , E    \right) < \tfrac{1}{2} e^{-\pa{\pa{2k_j\alpha+1}\ell}^{\beta}} \leq \tfrac{1}{2} e^{-\ell^{\beta}}.
\eeq

\item If $\NboxLx$ is E-left resonant, then there exists an N-particle rectangle
\beq   
\mathbf{\boldlambda} = \boldlambda_{\pa{2k_j\alpha+1}\ell} (\boldu)   \times \boldlambda_{L}(\boldx_{\cJ^{c}}) ,
\eeq
where  $j \in \set{1, 2, \ldots, \abs{\cJ}^{\abs{\cJ}} }$,  $\boldu \in \x_{\cJ} + \alpha \ell \Z^{\abs{\cJ}d}$, and $  \boldlambda_{\pa{2k_j\alpha+1}\ell} (\boldu)    \subseteq \boldlambda_{L}(\boldu_{\cJ}) $,   such that 
\beq 
\dist \bigl( \sigma \left(  \mathbf{\boldlambda} \bigr) , E    \right) < \tfrac{1}{2} e^{-\pa{\pa{2k_j\alpha+1}\ell}^{\beta}} \leq \tfrac{1}{2} e^{-\ell^{\beta}}.
\eeq
\end{enumerate}
\end{lemma}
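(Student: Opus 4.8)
The plan is to unwind the definitions and then invoke the tensor-product structure of partially interactive boxes from Lemma~\ref{PIpro}. Consider case~(i): assume $\NboxLx = \boldlambda_{L}(\boldx_{\cJ}) \times \boldlambda_{L}(\boldx_{\cJ^{c}})$ is $E$-right resonant. By Definition~\ref{CNR}(ii) this is exactly the negation of being $E$-RNR, so there exist $j \in \set{1,\ldots,\abs{\cJ^c}^{\abs{\cJ^c}}}$, a point $\boldu \in \x_{\cJ^c} + \alpha\ell\Z^{\abs{\cJ^c}d}$ with $\boldlambda_{\pa{2k_j\alpha+1}\ell}(\boldu) \subseteq \boldlambda_{L}(\boldx_{\cJ^{c}})$, and $\lambda \in \sigma(H_{\boldlambda_{L}(\boldx_{\cJ})}) \cap (-\infty, 2\EN]$ such that $\boldlambda_{\pa{2k_j\alpha+1}\ell}(\boldu)$ is $(E-\lambda,\beta)$-resonant, i.e.\ $\dist\bigl(\sigma(H_{\boldlambda_{\pa{2k_j\alpha+1}\ell}(\boldu)}), E-\lambda\bigr) < \tfrac12 e^{-\pa{\pa{2k_j\alpha+1}\ell}^{\beta}}$. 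The rectangle to exhibit is then $\mathbf{\boldlambda} = \boldlambda_{L}(\boldx_{\cJ}) \times \boldlambda_{\pa{2k_j\alpha+1}\ell}(\boldu)$, which by construction has the stated shape and, since $\boldlambda_{\pa{2k_j\alpha+1}\ell}(\boldu) \subseteq \boldlambda_{L}(\boldx_{\cJ^{c}})$, satisfies $\mathbf{\boldlambda} \subseteq \NboxLx$.

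Next I would check that $\mathbf{\boldlambda}$ inherits the PI splitting and then read off the estimate. Since $\NboxLx \subseteq \cE_{\cJ}$ and $\mathbf{\boldlambda} \subseteq \NboxLx$, we have $\mathbf{\boldlambda} \subseteq \cE_{\cJ}$, so the interaction between $\cJ$- and $\cJ^c$-particles vanishes on $\mathbf{\boldlambda}$ and the same computation as in Lemma~\ref{PIpro}(ii)--(iii) gives $H_{\mathbf{\boldlambda}} = H_{\boldlambda_{L}(\boldx_{\cJ})} \otimes I + I \otimes H_{\boldlambda_{\pa{2k_j\alpha+1}\ell}(\boldu)}$ and hence $\sigma(H_{\mathbf{\boldlambda}}) = \sigma(H_{\boldlambda_{L}(\boldx_{\cJ})}) + \sigma(H_{\boldlambda_{\pa{2k_j\alpha+1}\ell}(\boldu)})$. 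Choosing $\nu \in \sigma(H_{\boldlambda_{\pa{2k_j\alpha+1}\ell}(\boldu)})$ with $\abs{\nu-(E-\lambda)} < \tfrac12 e^{-\pa{\pa{2k_j\alpha+1}\ell}^{\beta}}$, we get $\lambda+\nu \in \sigma(H_{\mathbf{\boldlambda}})$ and $\abs{(\lambda+\nu)-E} = \abs{\nu-(E-\lambda)}$, so $\dist(\sigma(H_{\mathbf{\boldlambda}}), E) < \tfrac12 e^{-\pa{\pa{2k_j\alpha+1}\ell}^{\beta}} \le \tfrac12 e^{-\ell^{\beta}}$, the last step because $2k_j\alpha+1 \ge 1$ forces $\pa{2k_j\alpha+1}\ell \ge \ell$. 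This is the asserted bound; part~(ii) is proved identically after interchanging the roles of $\cJ$ and $\cJ^c$ and starting from the negation of $E$-LNR in Definition~\ref{CNR}(i).

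I expect no real obstacle here: the argument is bookkeeping with the definitions of $(E,\beta)$-resonance and of PI boxes. The one point deserving a sentence of justification is that the tensor-product and spectral-sum identities of Lemma~\ref{PIpro}, stated there for PI boxes with equal side lengths, extend verbatim to PI rectangles of the form $\boldlambda^{\cJ} \times \boldlambda^{\cJ^c} \subseteq \cE_{\cJ}$, since the proof of that lemma uses only the containment in $\cE_{\cJ}$; one should also carry the harmless constraint $\lambda \le 2\EN$ (resp.\ $\mu \le 2\EN$) along when unwinding the definition, even though it plays no role in the final estimate.
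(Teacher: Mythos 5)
Your proposal is correct and follows essentially the same route as the paper: negate the definition of $E$-RNR (resp.\ $E$-LNR), take the resulting resonant sub-box, form the rectangle $\boldlambda_{L}(\boldx_{\cJ}) \times \boldlambda_{\pa{2k_j\alpha+1}\ell}(\boldu)$, and use the PI tensor-product spectral sum to transfer the resonance. Your closing remark that the spectral-sum identity of Lemma~\ref{PIpro} must be extended from equal-side-length boxes to rectangles contained in $\cE_{\cJ}$ is a point the paper passes over silently, and it is a worthwhile sentence to include.
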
 

\begin{proof}
Let $E \leq \EN$ and $\NboxLx = \boldlambda_{L}(\bold{x}_{\cJ}) \times  \boldlambda_{L}(\bold{x}_{\cJ^{c}} ) $ be a PI N-particle box.  Suppose $\NboxLx$ is $E$-right resonant. (The same argument applies if  $\NboxLx$ is $E$-left resonant.) Then we can find $\lambda \in \sigma \left( H_{\boldlambda_{L}(\boldx_{\cJ})}  \right) \cap (-\infty, 2\EN]$ and an $ \bigl ( N- \abs{\cJ} \bigr)$-particle box, $\boldlambda_{\pa{2k_j \alpha+1}\ell} (\boldu) \subseteq \boldlambda_{L}(\boldx_{\cJ^{c}})$, with $\boldu \in \x_{\cJ^c} + \alpha \ell \Z^{Nd}$ and  $j \in \set{1, 2, \ldots, \abs{\cJ^c}^{\abs{\cJ^c}} }$, such that $\boldlambda_{\pa{2k_j \alpha+1}\ell} (\boldu)$ is $(E-\lambda)$-resonant, so 
there exists $\eta \in \sigma \bigl( H_{ \boldlambda_{\pa{2k_j \alpha+1}\ell} (\boldx) } \bigr)$ such that 
$ 
\abs{ E - \lambda - \eta }< \tfrac{1}{2} e^{-\pa{\pa{2k_j \alpha+1}\ell}^{\beta}}.
$
Moreover, 
$ \boldlambda_{L}(\bold{x}_{\cJ}) \times  \boldlambda_{L}(\bold{x}_{\cJ^{c}} )$ is PI and  $  \boldlambda_{\pa{2k_j \alpha+1}\ell} (\boldu) \subseteq \boldlambda_{L}(\boldx_{\cJ^{c}})$, so if we take $\mathbf{\boldlambda} = \boldlambda_{L}(\boldx_{\cJ}) \times  \boldlambda_{\pa{2k_j \alpha+1}\ell} (\boldu)$ we get
\beq 
\sigma \left( H_{ \mathbf{\boldlambda} } \right)  = \sigma \Bl( H_{\boldlambda_{L}(\boldx_{\cJ}) }  \Br) + \sigma \left( H_{  \boldlambda_{\pa{2k_j \alpha+1}\ell} (\boldu)} \right).
\eeq
Hence, if a PI N-particle box $\NboxLx = \boldlambda_{L}(\bold{x}_{\cJ}) \times  \boldlambda_{L}(\bold{x}_{\cJ^{c}} ) $ is $E$-right resonant, then there exists an N-particle box
$\mathbf{\boldlambda} = \boldlambda_{L}(\boldx_{\cJ}) \times  \boldlambda_{\pa{2k_j \alpha+1}\ell} (\boldu),$ 
where $  \boldlambda_{\pa{2k_j \alpha+1}\ell} (\boldu) \subseteq \boldlambda_{L}(\boldx_{\cJ^{c}})$,       such that 
\begin{align*}
\dist \bigl( \sigma \left( H_{ \mathbf{\boldlambda} } \bigr) , E    \right) < \tfrac{1}{2}e^{-\pa{\pa{2k_j \alpha+1}\ell}^{\beta}}.
\end{align*}
\end{proof}

We now state the energy interval multiscale analysis.
Given $m > 0$,  $L \in \N$, $\x, \, \y \in \Ndspace$, and an interval $I$, we define the event
\begin{align}
&R \left( m, \, I,\, \x, \, \y, \, L, \, N \right) =   \notag\\
& \qquad \left\{ \exists \, E \in I \sqtx{such that}\boldlambda _{L}^{(N)}(\boldx) \text{ and } \boldlambda _{L}^{(N)}(\boldy)  \text{ are not } \left(m, E\right)\text{-regular}  \right\} . 
\end{align}

\begin{proposition} \label{part4mainthm}.
 Let $\zeta, \, \tau, \beta, \, \zeta_1,\, \zeta_2,\, \gamma$ as in \eq{constant} and   $0<m_0 <m^*$. 
 There exists a length scale $Z_{3}^{*}$ such that, given a closed  interval    $I \subseteq (-\infty,E\up{N}]$,  if for some $L_0 \geq Z_{3}^{*}$ we can verify
\beq
\P \Bigl\{R \left( m_0, \, I,\, \x, \, \y, \, L_0, \, N \right)\Bigr\}  \leq e^{-L_0^{\zeta_2}},
\eeq
for every pair of partially separated $N$-particle boxes $\boldlambda_{L_0}^{N}(\x)$ and $\boldlambda_{L_0}^{(N)}(\y)$, then, setting   $L_{k+1} = L_k^{\gamma} = L_0^{\gamma^k}$  for  $ k=0,1,2,\ldots$, 
 for    every pair of partially separated $N$-particle  boxes $\boldlambda_{L_k}^{(N)}(\boldx)$ and $\boldlambda_{L_k}^{(N)}(\boldy)$ we have 
\begin{align}
&\P \Bigl\{R \left( \tfrac{m_0}{2}, \, I,\, \x, \, \y, \, L_k, \, N \right) \Bigr\}  \\ \notag
&\qquad  \le  \P \Bigl\{ \exists \, E \in I \,\,so \,\, \mathbf{\Lambda}_{L_k}^{(N)}(\boldx) \,\, and \,\, \mathbf{\Lambda}_{L_k}^{(N)}(\boldy)  \,\,are\,\, not \,\, \left ( \tfrac{m_0}{2}, \,E \right )\text{-good} \Bigr\}  \leq e^{-L_k^{\zeta_2}}.
\end{align}
\end{proposition}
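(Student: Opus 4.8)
The plan is to argue by induction on the scale index $k$, following the structure of the corresponding argument in \cite{KlN} (itself modeled on \cite{GK1}), replacing the discrete ingredients there by the continuum toolkit of Section~\ref{sectoolkilt}: Lemma~\ref{res-ine} for the geometric resolvent inequality, Lemma~\ref{NE} for the eigenvalue count, Theorem~\ref{Wegner0} and Corollary~\ref{Wegner2} for the one- and two-volume Wegner estimates, and Lemmas~\ref{prereg} and \ref{part2firstthm} for partially interactive boxes. Along the induction I would carry the slightly stronger statement in which ``$(m,E)$-regular'' is replaced by ``$(m,E)$-good'', with a decreasing sequence of masses $m_k$ whose total loss $\sum_k O(L_k^{-\kappa}+L_k^{-1}\log L_k)$ is finite along $L_{k+1}=L_k^{\gamma}$, so that $m_k\ge m_0/2$ for every $k$ once $L_0=Z_{3}^{*}$ is large enough; the base case follows from the hypothesis together with Corollary~\ref{Wegner2}, which (using $\zeta_2<\beta$) also rules out the two finite-volume spectra lying within $e^{-L_0^{\beta}}$ of each other.

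For the inductive step $k\to k+1$ I set $\ell=L_k$, $L=L_{k+1}=\ell^{\gamma}$, fix a pair of partially separated $N$-particle boxes $\boldlambda_L(\x),\boldlambda_L(\y)$, equip each with an $\ell$-suitable cover, and split into two cases. If one of the two, say $\boldlambda_L(\x)=\boldlambda_L(\x_{\cJ})\times\boldlambda_L(\x_{\cJ^c})$, is PI, then it suffices to show that $\boldlambda_L(\x)$ by itself is $(m(L),E)$-regular for all $E\in I$ outside an event of probability $\le\tfrac12 e^{-L^{\zeta_2}}$, with $m(L)$ as in \eqref{mL}: by Lemma~\ref{part2firstthm} this holds whenever $\boldlambda_L(\x)$ is $E$-preregular and $E$-HNR; the failure of preregularity on $I$ is bounded, via Lemma~\ref{prereg} and the induction hypothesis on $N-1$ particles, by $2L^{3Nd}e^{-\ell^{\tau}}\le\tfrac14 e^{-L^{\zeta_2}}$ since $\gamma\zeta_2<\tau$; and the failure of the HNR conditions on $I$ is bounded by applying Theorem~\ref{Wegner0} to the lower-dimensional constituents $\boldlambda_L(\x_{\cJ})$, $\boldlambda_L(\x_{\cJ^c})$ and their intermediate subboxes (using $\zeta_2<\beta$ and Lemma~\ref{NE}). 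The remaining possibility that $\boldlambda_L(\x)$ is itself resonant at some $E\in I$ I would eliminate, after bringing in $\boldlambda_L(\y)$, using the resonance lemma stated just above to confine such an $E$ to a neighborhood of the spectrum of an auxiliary rectangle and then applying Corollary~\ref{Wegner2}.

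If both $\boldlambda_L(\x)$ and $\boldlambda_L(\y)$ are FI, I would use that a pair of $\ell$-distant FI $N$-particle boxes is fully separated (Lemma~\ref{part1prop2}), so that events based on them, and on $\ell$-distant subboxes inside a single one, are independent. By the deterministic Lemma~\ref{part2prop1a} (through Lemma~\ref{res-ine} and Lemma~\ref{NE}), if $\boldlambda_L(\x)$ is $E$-nonresonant, all its intermediate boxes are $E$-nonresonant, and it carries at most $J$ pairwise $\ell$-distant $(m_\ell,E)$-nonregular subboxes, then $\boldlambda_L(\x)$ is $(m_{k+1},E)$-regular. Hence $R(m_{k+1},I,\x,\y,L,N)$ is contained, up to resonance events, in the event that for some common $E\in I$ both big boxes carry $\ge J+1$ pairwise $\ell$-distant $(m_\ell,E)$-nonregular subboxes; extracting one such subbox from each big box, the two are partially separated, so the $\ell$-scale event is of the form $R(m_\ell,I,\cdot,\cdot,\ell,N)$ and is controlled by the induction hypothesis, while conditioning on that first subbox to freeze the finitely many candidate energies lets the independence of the remaining $\ell$-distant singular subboxes inside $\boldlambda_L(\x)$ improve the estimate; with $J=J(L)$ chosen to grow with the scale as in \cite{KlN}, the resulting sum over configurations is $\le\tfrac12 e^{-L^{\zeta_2}}$. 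The two outstanding resonance events, namely an intermediate box of $\boldlambda_L(\x)$ or $\boldlambda_L(\y)$ being resonant at some $E\in I$ and the two big boxes being simultaneously resonant at a common $E\in I$, are bounded by Theorem~\ref{Wegner0} (using $\zeta_2<\beta$) and by Corollary~\ref{Wegner2} respectively. Summing the three contributions gives $\P\{R(\tfrac{m_0}{2},I,\x,\y,L,N)\}\le e^{-L^{\zeta_2}}$, which closes the induction.

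The hard part will be the FI case: since the two big boxes must fail at the \emph{same} energy, one cannot simply multiply marginal probabilities, so the argument has to condition on part of the random configuration to reduce to finitely many candidate energies before using independence, and the bad-box threshold $J(L)$ must be matched to the scale growth $L=\ell^{\gamma}$ so that the exponent $\zeta_2$ is preserved; this is precisely where the chain of inequalities among $\zeta,\zeta_1,\zeta_2,\beta,\zeta_0,r,\tau,\gamma$ in \eqref{constant} is used. Everything else is bookkeeping parallel to \cite{KlN}, the only new input being that the continuum estimates (Combes--Thomas in Lemma~\ref{cthomas}, the geometric resolvent inequality, eigenvalue counting, and Wegner on arbitrary rectangles) are already available from Section~\ref{sectoolkilt}, Theorem~\ref{Wegner0}, and Corollary~\ref{Wegner2}.
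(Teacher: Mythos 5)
Your overall architecture — induction on the scale with the PI/FI dichotomy, Lemmas~\ref{prereg} and \ref{part2firstthm} for PI boxes, Lemma~\ref{part2prop1a} plus full separation (Lemma~\ref{part1prop2}) and a scale-dependent number $J(L)$ of bad subboxes for FI boxes, and Corollary~\ref{Wegner2} for simultaneous resonances — is the same as the paper's, which defers the details to \cite[Propositions~3.19 and 3.21]{KlN}. But there is one genuine gap in how you dispose of the resonance-type conditions. The interval $I$ here is a \emph{fixed} closed interval (in the application it is $[E_1-\delta_\zeta,E_1+\delta_\zeta]\cap(-\infty,E^{(N)}]$ with $\delta_\zeta>0$ independent of the scale), so for large $L$ the event $\{\exists\, E\in I:\ \boldlambda \text{ is } E\text{-resonant}\}$ for a single rectangle $\boldlambda$ is essentially the event that $\sigma(H_{\boldlambda})$ meets an $e^{-L^\beta}$-neighborhood of $I$, which has probability close to one — Theorem~\ref{Wegner0} only helps when the energy is fixed or the interval shrinks with the scale. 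Consequently, your proposal to bound ``the failure of the HNR conditions on $I$'' for a PI box, and ``an intermediate box of $\boldlambda_L(\x)$ or $\boldlambda_L(\y)$ being resonant at some $E\in I$'' in the FI case, directly by Theorem~\ref{Wegner0}, does not work: those events are not small.

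The correct treatment never estimates a resonance condition in isolation. Every such condition must be intersected with the badness of the \emph{other} big box at the \emph{same} energy $E$, and then either (a) both sides reduce to resonances of partially separated rectangles (the big boxes, their intermediate boxes, or the auxiliary rectangles $\boldlambda_{L}(\x_{\cJ})\times\boldlambda_{(2k_j\alpha+1)\ell}(\boldu)$ produced by the lemma following Lemma~\ref{part2firstthm}), whose spectra are then within $e^{-\ell^{\beta}}$ of each other, so Corollary~\ref{Wegner2} applies and $\gamma\zeta_2<\beta$ closes the estimate; or (b) the other box is nonresonant at $E$, in which case its nonregularity forces many pairwise $\ell$-distant $(m_\ell,E)$-nonregular subboxes at that common $E$, and one discards the resonance condition entirely and runs the subbox counting/independence argument on that box alone. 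This is exactly the distinction between the single-energy Proposition~\ref{part4mainthm1}, where Theorem~\ref{Wegner0} suffices, and the present energy-interval statement, where only the two-volume Corollary~\ref{Wegner2} (or absorption into the $\ell$-scale induction hypothesis for pairs) can be used. With that correction the rest of your plan goes through as in \cite{KlN}.
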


Proposition~\ref{part4mainthm} is proved in the same way as \cite[Propositions~3.19 and 3.21]{KlN}. The dependence of  the length scale  $Z_{3}^{*}$ on $E\up{N}$ (not present in \cite{KlN}) comes from the use of Theorem~\ref{Wegner0} and  Corollary~\ref{Wegner2}.

\subsection{Completing the proof of the bootstrap multiscale analysis}
Proceeding as in \cite[Section~6]{GK1}, Theorem~\ref{maintheorem} follows from Propositions~\ref{part1mainthm}, \ref{part2mainthm}, \ref{part3mainthm},  plus Proposition~\ref{part4mainthm1} for Part (i) (the single energy bootstrap multiscale analysis), and Propositions~\ref{bridgethm} and  \ref{part4mainthm} (the energy interval bootstrap multiscale analysis). 

\section{From the bootstrap multiscale analysis to localization}\label{secMSAloc}

Corollary \ref{localization} is proved from Theorem~\ref{maintheorem}  along the lines of  the proofs of the corresponding statements in  \cite{vDK, GK1, GKjsp,GKber}), similarly to the proof of \cite[Corollary~1.7]{KlN} from \cite[Theorem~1.5]{KlN}.

\appendix

\section{The almost-sure spectrum of the $n$-particle Anderson Hamiltonian}\label{apbottom}

\begin{proposition}
Let $\Sigma^{(n)}$ be the almost-sure spectrum of  the n-particle Anderson Hamiltonian   $H_{\bom}^{(n)}$  as in Definition~\ref{defAndmodel}.  Then $\Sigma^{(n)}=[0,\infty)$
\end{proposition}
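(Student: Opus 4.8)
Since $V_{\bom}^{(n)}\ge 0$ and $U\ge 0$ for every $\bom$, we have $H_{\bom}^{(n)}\ge 0$, and because $\Sigma^{(n)}=\sigma(H_{\bom}^{(n)})$ almost surely this already gives $\Sigma^{(n)}\subseteq[0,\infty)$. (Note $H_{\bom}^{(n)}$ is self-adjoint on $H^2(\R^{nd})$, being a bounded perturbation of $-\Delta^{(n)}$, so $C_c^\infty(\R^{nd})\subseteq D(H_{\bom}^{(n)})$.) The plan for the reverse inclusion is to produce, via the Weyl criterion, for a full-measure set of $\bom$ and \emph{every} $E\ge 0$, a sequence $\psi_m\in C_c^\infty(\R^{nd})$ with $\|\psi_m\|=1$ and $\|(H_{\bom}^{(n)}-E)\psi_m\|\to 0$; this forces $E\in\sigma(H_{\bom}^{(n)})=\Sigma^{(n)}$, hence $[0,\infty)\subseteq\Sigma^{(n)}$.

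The probabilistic input is extracted from $0\in\supp\mu$: for every $\eps>0$ we have $q_\eps:=\mu([0,\eps))>0$, so for each fixed $\ell\in\N$ the events $\{\omega_k<\eps\ \forall k\in\widehat{\Lambda_\ell(j\ell)}\}$, $j\in\Z^d$, are independent with common positive probability $q_\eps^{\,\#\widehat{\Lambda_\ell(0)}}$, whence a.s.\ infinitely many of them occur (second Borel--Cantelli). Intersecting over $\ell\in\N$ and $\eps\in\mathbb{Q}\cap(0,1)$, one obtains a full-measure event $\Omega_0$ with the property: for all $\ell\in\N$, $\eps>0$ and $R>0$ there exist $c_1,\dots,c_n\in\Z^d$ with $\|c_i-c_j\|_\infty>R$ for $i\ne j$ and $\omega_k<\eps$ for every $k\in\widehat{\Lambda_\ell(c_i)}$, $i=1,\dots,n$ --- the $n$ pairwise $R$-separated centers being obtained by greedily choosing $n$ of the (a.s.\ infinitely many) ``good'' boxes, discarding at each step the finitely many lying within $R$ of those already chosen.

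Next, fix $\bom\in\Omega_0$ and $E\ge 0$, pick $k\in\R^d$ with $|k|^2=E$, and fix a real $\chi\in C_c^\infty(\R^d)\setminus\{0\}$ with $\supp\chi\subseteq\Lambda_1(0)$. For each large $L$, apply the defining property of $\Omega_0$ with $\ell$ an integer exceeding $L+\delta_+$, threshold $1/L$, and separation $R=L+r_0$ to get centers $c_1,\dots,c_n$, and set
\[
\psi_L(\x)=c_L^{-1}\,\chi\!\big(\tfrac{x_1-c_1}{L}\big)e^{ik\cdot(x_1-c_1)}\prod_{i=2}^{n}\chi\!\big(\tfrac{x_i-c_i}{L}\big),
\]
normalized so that $\|\psi_L\|=1$ (the pre-normalization $L^2$-norm is $L^{nd/2}\|\chi\|_2^{\,n}$, independent of $k$ and of the $c_i$). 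On $\supp\psi_L\subseteq\prod_i\Lambda_L(c_i)$ it remains to check three estimates: (i) $\|(-\Delta^{(n)}-E)\psi_L\|=O(L^{-1})$, since $(-\Delta^{(n)}-E)\psi_L$ involves only first and second derivatives of the cutoffs, each producing a factor $L^{-1}$ or $L^{-2}$; (ii) $\|V_{\bom}^{(n)}\psi_L\|=O(L^{-1})$, because for $x_i\in\Lambda_L(c_i)$ the only $j$ contributing to $V_{\bom}^{(1)}(x_i)=\sum_j\omega_j u(x_i-j)$ satisfy $j\in\widehat{\Lambda_\ell(c_i)}$, hence $\omega_j<1/L$, while $\sum_j u(x-j)$ is bounded uniformly in $x$; (iii) $U\psi_L=0$, because on $\supp\psi_L$ we have $\|x_i-x_j\|_\infty\ge\|c_i-c_j\|_\infty-L>r_0$ for $i\ne j$, so $\widetilde U(x_i-x_j)$ vanishes there. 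Adding these, $\|(H_{\bom}^{(n)}-E)\psi_L\|\to 0$ as $L\to\infty$, so $E\in\sigma(H_{\bom}^{(n)})$; as $\bom\in\Omega_0$ and $E\ge 0$ were arbitrary and $\mathbb P(\Omega_0)=1$, this yields $[0,\infty)\subseteq\Sigma^{(n)}$, and with the first paragraph $\Sigma^{(n)}=[0,\infty)$.

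The only genuinely nontrivial step is the construction of $\Omega_0$ --- that a.s., at every scale and every positive threshold, one can find $n$ one-particle boxes that are pairwise arbitrarily far apart and on which all the couplings $\omega_k$ are uniformly small; this is exactly where the hypothesis $0\in\supp\mu$ is used. The rest is routine ``approximate plane wave'' bookkeeping, with the finite-range interaction $U$ neutralized simply by placing the $n$ single-particle cutoffs far enough apart that $U\equiv 0$ on their product support.
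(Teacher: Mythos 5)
Your proof is correct, but the probabilistic part takes a genuinely different route from the paper's. The paper exploits the fact that $\sigma(H_{\bom}^{(n)})=\Sigma^{(n)}$ for $\P$-a.e.\ $\bom$: since $\Sigma^{(n)}$ is a non-random set, it suffices to show that for each fixed $\lambda\ge 0$ and $\eps>0$ the event $\set{\dist(\lambda,\sigma(H_{\bom}^{(n)}))\le\eps}$ has merely \emph{positive} probability, because that event must then intersect the full-measure set where the spectrum equals $\Sigma^{(n)}$, giving the deterministic conclusion $\dist(\lambda,\Sigma^{(n)})\le\eps$. Concretely, the paper takes a compactly supported approximate eigenfunction of $-\Delta^{(n)}$ living in an $n$-particle box whose single-particle factors are far enough apart that $U$ vanishes there, and notes that $\P\set{V_{\bom}^{(n)}\Chi_{\nboxLx}\le\eps/2}>0$ because $0\in\supp\mu$; no Borel--Cantelli argument is needed. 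You instead build, via the second Borel--Cantelli lemma and a greedy selection of $n$ pairwise distant good boxes at every scale and threshold, an explicit full-measure event $\Omega_0$ on which Weyl sequences exist simultaneously for all $E\ge 0$. Both arguments are sound; the paper's is shorter because it piggybacks on the ergodicity that makes $\Sigma^{(n)}$ well defined in the first place, while yours is more self-contained in that it exhibits directly the configurations for which $\sigma(H_{\bom}^{(n)})=[0,\infty)$, and your explicit separation of the centers by more than $L+r_0$ is exactly what justifies the paper's terser ``without loss of generality $U=0$ on $\nboxLx$.''
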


\begin{proof}  Clearly $\Sigma^{(n)}\subset [0,\infty)$. We need to prove $[0,\infty)\subset \Sigma^{(n)}$.  So let $\lambda \in [0,\infty)$ and $\eps >0$. Since $\sigma (- \Delta^{(n)})=[0,\infty)$,  there exists $ \psi \in C^{2}(\R^{nd})$ with $\norm{\psi}=1$ and a box  $\nboxLx$ such that $\supp \psi \subseteq \nboxLx$ and $\norm{(-\Delta^{(n)}-\lambda)\psi} \leq \tfrac{\epsilon}{2}$.   Without loss of generality we can assume that $U=0$ on $\nboxLx$,
so $H_{\bom}^{(n)}\psi= H_{0,\bom}^{(n)}\psi$.  If  we have  $V_{\bom}^{(n)} \Chi_{\nboxLx} \leq \tfrac{\epsilon}{2} $,  we conclude that $\norm{(H_{\bom}^{(n)}-\lambda)\psi} \leq \eps$, and hence  $\dist \pa{\lambda, \sigma \pa{H_{\bom}^{(n)}}}\le \eps$. Thus,
\beq\label{almosteps}
\P \set{\dist \pa{\lambda, \sigma \pa{H_{\bom}^{(n)}}}\le \eps}  \supseteq \P \set{ V_{\bom}^{(n)} \Chi_{\nboxLx} \leq \tfrac{\epsilon}{2}  } >0,
\eeq
where the strict positivity comes from Definition~\ref{defAndmodel}.

Since $\Sigma^{(n)}=\sigma \pa{H_{\bom}^{(n)}}$ for $\P$-a.e.\ $\bom$, we conclude from  \eq{almosteps} that we have   $\dist \pa{\lambda, \Sigma^{(n)}}\le \eps$ for all $\eps>0$, and hence $\lambda \in \Sigma^{(n)}$.
\end{proof}

\section{Unique continuation principle for spectral projections of Schr\" odinger operators on arbitrary rectangles}  \label{apUCP}

In this appendix we extend \cite[Theorems~1.1 and 2.2]{Kl2} to arbitrary rectangles.
Let $H= -\Delta + V$ be a  Schr\" odinger operator  on  $\mathrm{L}^2(\R^d)$.  Given a rectangle $\La \subset \R^d$,  let $H_\La= -\Delta_\La+ V_\La$ denote the  restriction of $H$   to    the  rectangle $\La$ with either Dirichlet or periodic boundary condition:    $\Delta_\La$ is the Laplacian  with either Dirichlet or periodic boundary condition and $V_\La$ is the restriction of $V$ to $\La$.  (We will abuse the notation and simply write $V$ for $V_\La$, i.e., $H_\La= -\Delta_\La+ V$ on $\L^2(\La)$.)
By  a unique continuation principle for  spectral projections  (UCPSP)   we mean an estimate
of the form
\beq\label{UCPSP}
\Chi_I (H_\La)  W \Chi_I (H_\La) \ge \kappa\,   \Chi_I (H_\La),
\eeq 
where $\Chi_I$ is the characteristic function of  an interval $I \subset \R $, $W\ge 0 $ is a  potential, and $\kappa >0$ is a constant.

In this appendix we use the Euclidean norm on  $\R^d$: 
\beq
\abs{x}=\abs{x}_2:=\pa{\sum_{j=1}^{d} \abs{x_{j}}^{2}}^{\frac 12 } \qtx{for} x=\pa{x_1,x_2,\ldots, x_d} \in \R^d.
\eeq
Distances between sets in $\R^{d}$ will be measured with respect to norm $\abs{x}$. The ball centered at $x\in \R^d$ with radius $\delta>0$ is given by
\beq
B(x,\delta):= \set{y \in \R^{d}; \, \abs{y-x}<\delta}. 
\eeq  
  We consider rectangles 
\beq\label{rect}
\La= \La_{\bL}(a)= a + \prod_{j=1}^d  \,  (-\tfrac {L_j} 2, \tfrac {L_j} 2)=  \prod_{j=1}^d  \,  (a_j -\tfrac {L_j} 2, a_j +\tfrac {L_j} 2), \eeq
where $a \in \R^d$ and $\bL=(L_1,\ldots,L_d)\in (0,\infty)^d $.  The  box $\La_L(x)= x + (-\tfrac L 2, \tfrac L 2)^d$  centered at $x\in \R^d$ with side of length  $L  $ is the special case  $L_1=\ldots=L_d=L$.  Given a rectangle $\La$ we set
\
\beq
\widehat{\La}=\La \cap \Z^d\qtx{and} \widehat{\widehat{\La}}=\set{k \in \widehat{\La}; \Lambda_1(k)\subset \La}.
\eeq
$H_\La$ will denote the restriction of $H$ to the rectangle $\La$ with either Dirichlet or periodic boundary condition.

 Given subsets $A$ and $B$ of $\R^{d}$, and  a  function $\vphi$ on  the set $B$,  we set $\vphi_{A}:=\vphi \Chi_{A\cap B}$.  In particular, given $x\in   \R^{d}$ and $\delta >0$ we write $\vphi_{x,\delta}: =\vphi_{B(x, \delta )}$.   We let $\N_{\mathrm{odd}}$  denote the set of odd natural numbers.
If $K$ is an operator on a Hilbert space, $\cD(K)$ will denote its domain.   By a  constant  we will always mean a finite constant.  We will use  $C_{a,b, \ldots}$, $C^{\pr}_{a,b, \ldots}$,  $C(a,b, \ldots)$, etc., to  denote a constant depending only on the parameters
$a,b, \ldots$. 

The following is an extension of \cite[Theorem~1.1]{Kl2} to rectangles with arbitrary centers and side lengths.

\begin{theorem}\label{thmUCPSP} Let  $H =   -\Delta + V$ be a  Schr\"odinger operator on 
$\mathrm{L}^2(\mathbb{R}^d)$,   where $V$ is a bounded potential.  Fix  $\delta \in (0,\frac 1 {2}]$, let  $\set{y_k}_{k \in \Z^d}$ be sites in $\R^d$ with  $B(y_k,\delta) \subset \La_1(k)$ for all $k\in \Z^d$.
Given $E_{0}>0$, set $K =K(V,E_0)= 2\norm{V}_{\infty}+ E_{0} $.  Consider  a rectangle $\La$ as in \eq{rect}, where $a\in \R^d$ and $L_j \ge 114 \sqrt{d}$ for $j=1,\ldots,d$, and set
\beq\label{defWthm}
W\up{\La}= \sum_{k \in \widehat{\widehat{\La}} } \Chi_{B(y_k,\delta) }.
\eeq
 There exists a constant $M_d>0$, such that, defining $\gamma =\gamma(d,
K,\delta) >0$ by
\beq\label{defgamma}
\gamma^2=\tfrac 1 2   \delta ^{M_d \pa{1 + K^{\frac 2 3}}} ,
\eeq
 then for any closed  interval $I \subset (-\infty, E_{0}]$ with $\abs{I} \le 2\gamma $  we have
\beq \label{chivchi}
\Chi_{I}(H_\La) W\up{\La} \Chi_{I}(H_\La) \ge  \gamma^2  \Chi_{I}(H_\La).
\eeq
\end{theorem}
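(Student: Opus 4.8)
The plan is to deduce \eqref{chivchi} from a quantitative unique continuation estimate for approximate eigenfunctions, and to obtain that estimate by transplanting the Carleman-estimate argument of \cite{Kl2} from lattice boxes to the rectangle \eqref{rect}, checking that its geometric hypotheses survive once every side length obeys $L_j \ge 114\sqrt d$. First I would reduce to a single-function statement: the operator inequality \eqref{chivchi} is equivalent to $\scal{\phi, W\up{\La}\phi} \ge \gamma^2 \norm{\phi}^2$ for every $\phi \in \Ran \Chi_I(H_\La)$. Fix such a $\phi$ with $\norm{\phi} = 1$ and let $E$ be the midpoint of $I$. By the spectral theorem $\norm{(H_\La - E)\phi} \le \tfrac 12\abs{I} \le \gamma$, so $\zeta := -\Delta_\La\phi + (V - E)\phi \in \L^2(\La)$ satisfies $\norm{\zeta} \le \gamma$; since $\phi \in \cD(\Delta_\La) \subseteq H^{2}(\La)$, the pointwise bound $\abs{\Delta\phi} \le K\abs{\phi} + \abs{\zeta}$ holds a.e.\ on $\La$, with $K = 2\norm{V}_\infty + E_0$ as in the statement. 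This is exactly the differential inequality that feeds the quantitative unique continuation machinery.

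Next I would invoke the local, scale-free quantitative unique continuation estimate underlying \cite{Kl2} (its Theorem~2.2 and the Carleman estimate behind it), which is insensitive to the position of the lattice. For each $k \in \widehat{\widehat{\La}}$ — chosen so that a ball of fixed, dimension-dependent radius around $\Lambda_1(k)$ sits inside $\La$, which is precisely what $L_j \ge 114\sqrt d$ guarantees — the estimate produces $\norm{\phi_{y_k,\delta}}^2 \ge \gamma_0^2\,\norm{\phi_{\Lambda_1(k)}}^2 - C_{d,K}\norm{\zeta}^2$, where $\gamma_0 = \delta^{M_d^\pr(1 + K^{2/3})}$ comes directly from the Carleman constant. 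Running the same three-balls / propagation-of-smallness mechanism in the opposite direction, for each collar site $k$ with $\Lambda_1(k) \not\subseteq \La$ one bounds $\norm{\phi_{\Lambda_1(k) \cap \La}}^2$ by $C_{d,K}$ times the sum of $\norm{\phi_{y_{k'},\delta}}^2$ over the finitely many $k' \in \widehat{\widehat{\La}}$ within bounded distance of $k$, plus $C_{d,K}\norm{\zeta}^2$; here one only needs the chains of balls used to carry the collar mass into the bulk to stay inside $\La$, which is again secured by the uniform fatness $L_j \ge 114\sqrt d$. Relative to \cite{Kl2}, the only genuinely new point to verify is that these purely geometric room conditions hold for a rectangle of arbitrary centre and side lengths $\ge 114\sqrt d$, which is elementary.

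Finally I would assemble the pieces. Since $\{\Lambda_1(k)\}_{k \in \Z^d}$ tiles $\R^d$, one has $\sum_k \norm{\phi_{\Lambda_1(k)\cap\La}}^2 = 1$, the sum running effectively over the $O(L^d)$ indices with $\Lambda_1(k) \cap \La \ne \emptyset$. Splitting it into $\widehat{\widehat{\La}}$ and the $O(L^{d-1})$ collar indices, applying the forward estimate on the former and the backward estimate on the latter, and using $\scal{\phi, W\up{\La}\phi} = \sum_{k \in \widehat{\widehat{\La}}} \norm{\phi_{y_k,\delta}}^2$ together with $\norm{\zeta}^2 \le \gamma^2$, one arrives at a linear inequality of the shape $(1 + C_{d,K}\gamma^2)\,\scal{\phi, W\up{\La}\phi} \ge \gamma_0^2\bigl(1 - C_{d,K}\gamma^2\bigr)$. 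Choosing $M_d$ in \eqref{defgamma} a fixed amount larger than the raw Carleman exponent $M_d^\pr$ (using $\delta \le \tfrac12$) makes $\gamma_0$ sufficiently larger than $\gamma$ and $C_{d,K}\gamma^2$ sufficiently small that the right side is $\ge \gamma^2$, exactly as in \cite{Kl2}; this gives $\scal{\phi, W\up{\La}\phi} \ge \gamma^2$ and hence \eqref{chivchi}.

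I expect the main obstacle to be the collar of unit cubes not contained in $\La$ — a feature genuinely absent from \cite{Kl2}, where the boxes have lattice centres and odd integer side lengths, so the integer-point unit cubes tile the box and $\widehat{\widehat{\La}} = \widehat{\La}$ up to a null set. For a general rectangle this collar has positive measure comparable to a fixed fraction of $\La$ when $L_j$ is close to $114\sqrt d$, so it cannot be discarded; its $\L^2$ mass must instead be propagated into the interior with a loss depending only on $d, K, \delta$, and making that propagation legitimate — the chains of balls in the Carleman argument must fit inside $\La$ — is exactly what forces the uniform lower bound $L_j \ge 114\sqrt d$ (the $\sqrt d$ encoding the passage from the $\ell^\infty$ geometry of the cubes to the Euclidean balls used in \cite{Kl2}).
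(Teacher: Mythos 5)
Your reduction to the single-function estimate $\scal{\phi,W\up{\La}\phi}\ge\gamma^2\norm{\phi}^2$ for $\phi\in\Ran\Chi_I(H_\La)$, via $\norm{(H_\La-E)\phi}\le\gamma\norm{\phi}$ with $E$ the midpoint of $I$, and the final absorption of the $\delta^2\norm{\zeta}^2$ term by taking $M_d$ larger than the raw Carleman exponent, both match the paper. The route in between, however, has a genuine gap. The local estimate you posit for each $k\in\widehat{\widehat{\La}}$, namely $\norm{\phi_{y_k,\delta}}^2\ge\gamma_0^2\,\norm{\phi_{\Lambda_1(k)}}^2-C_{d,K}\norm{\zeta}^2$ with $\gamma_0=\delta^{M_d'(1+K^{2/3})}$ \emph{uniform over $k$ and $\phi$}, is not what the quantitative unique continuation principle of Bourgain--Klein (Theorem~\ref{thmucp}) provides: the exponent there is $m_d(1+K^{2/3})\bigl(Q^{4/3}+\log\tfrac{\norm{\psi_\Omega}_2}{\norm{\psi_\Theta}_2}\bigr)$, so the constant degenerates on cubes that carry an atypically small fraction of the mass of their enlarged neighborhood, and no uniform linear lower bound of the kind you write holds. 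This is exactly why the paper's proof (via Theorem~\ref{lemUCPeig}) first selects the \emph{dominating} cells in the sense of Rojas-Molina and Veseli\'c --- those $\kappa$ with $\norm{\psi_{\La_{\btau}(\kappa)}}_2^2\ge\tfrac1{2(6Y)^d}\norm{\widetilde\psi_{\La_{Y\btau}(\kappa)}}_2^2$, which together carry at least half of $\norm{\psi_\La}_2^2$ --- and applies the UCP only to those; you never perform this selection, and without it the summation over all cubes in your final assembly is unjustified.

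A secondary structural difference concerns the collar, which you correctly identify as the new feature relative to \cite{Kl2} but propose to handle by a backward propagation-of-smallness that would itself need a quantitative proof and runs into the same mass-ratio issue. The paper avoids a collar altogether: it extends $\psi$ and $V$ across $\partial\La$ by reflection (Dirichlet) or periodicity, so the enlarged neighborhoods $\Omega=\La_{Y\btau}(\kappa)$ may protrude from $\La$, and it tiles $\overline{\La}$ \emph{exactly} by an adapted lattice of cells of side lengths $\tau_j\in[2,3)$ chosen so that $L_j\in\tau_j\N_{\mathrm{odd}}$, each cell containing a full unit cube centered at a point of $\widehat{\widehat{\La}}$ where the observation ball $B(y_k,\delta)$ lives. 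In particular the constraint $L_j\ge114\sqrt d$ does not come from fitting chains of balls inside $\La$, as you suggest, but from requiring $Y\le L_j/6$ for the odd integer $Y\le19\sqrt d$ that controls both the geometric condition $B(x_0,6Q+2)\subset\Omega$ and the overlap count in $\sum_\kappa\norm{\widetilde\psi_{\La_{Y\btau}(\kappa)}}_2^2\le(6Y)^d\norm{\psi_\La}_2^2$. To repair your argument you would need to either import the dominating-site selection and the reflection/exact-tiling device, or supply an independent proof of your uniform per-cube and collar estimates.
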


\begin{remark} It follows,  using Theorem~\ref{thmUCPSP} in the proofs,  that the optimal Wegner estimates for 
(one-particle) crooked Anderson Hamiltonians given in \cite[Theorems~1.4 and 1.5]{Kl2} hold for a rectangle $\La$ as in \eq{rect}, where $a\in \R^d$ and $L_j \ge 114 \sqrt{d}+\delta_+$ for $j=1,\ldots,d$. (In particular, they hold on arbitrary
boxes  $\La=\La_L(x_0)$, where $x_0 \in \R^d$ and $L \ge 114\sqrt{d}+ \delta_+$.)
\end{remark}

For convenience we recall the quantitative unique continuation principle \cite[Theorem~3.2]{BKl} as stated in \cite[Theorem~2.1]{Kl2}.

\begin{theorem}\label{thmucp} Let  ${\Omega}$ be an  open subset  of $\R^d$ and consider    a real measurable function $V$ on ${\Omega}$ with $\norm{V}_{\infty} \le K <\infty$.   Let 
$\psi \in\mathrm{H}^2({\Omega})$ be real valued and  let  ${\zeta} \in \L^2({\Omega})$  be defined by
\beq \label{eq}
-\Delta {\psi} +V{\psi}={\zeta}  \qtx{a.e.\  on} \Omega.
\eeq
 Let   ${\Theta} \subset {\Omega}$  be a bounded measurable set where $\norm{\psi_{\Theta}}_2 >0$. 
Set
\beq \label{defRx0}
{Q}(x,\Theta):= \sup_{y \in \Theta } \abs{y - x} \qtx{for} x \in {\Omega}.
\eeq
Consider $x_0 \in {\Omega}\setminus \overline{\Theta}$ such that
\beq
  \label{xR}
{Q}={Q}(x_0,\Theta)\ge  1 \qtx{and} B(x_0, 6{Q}+ 2)\subset {\Omega}.
\eeq
Then, given
\beq \label{delta}
0<  \delta \le \min\set{   \dist \pa{x_0, {\Theta}},\tfrac 1 {2}},
\eeq
we have
\begin{align} \label{UCPbound}
 \pa{\frac \delta{Q}}^{m_d \pa{1 + K^{\frac 2 3}}\pa{ {Q}^{\frac 43}  +  \log \frac{\norm{ {\psi}_{{\Omega}}}_{2}} {\norm{ {\psi}_{{\Theta}}}_2}}}\norm{ {\psi}_{{\Theta}}}^2_2  \le   \norm{ {\psi}_{x_0,\delta}}^2_2 + \delta^2 \norm{{\zeta_{{\Omega}}}}_2^2,
\end{align}
where $m_d>0$ is a constant depending only on $d$.
\end{theorem}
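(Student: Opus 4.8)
This statement is the quantitative unique continuation principle of Bourgain--Kenig, recalled here from \cite{BKl} (as stated in \cite[Theorem~2.1]{Kl2}); the plan is to reproduce its Carleman-estimate proof. The analytic core is a Carleman inequality on a punctured ball: after translating so that $x_0=0$ and rescaling $B(x_0,6Q+2)$ to a ball of a fixed (dimensional) radius $R_0$, one invokes the existence of a constant $C_d$, a positive radial weight $w$ with $w(x)\asymp|x|$ enjoying a suitable pseudoconvexity property of $-\log w$, and a threshold $\beta_0=\beta_0(d)$, such that
\[
\beta^{3}\int w^{-1-2\beta}\,|\vphi|^{2}\,\di x \;\le\; C_{d}\int w^{2-2\beta}\,|\Delta\vphi|^{2}\,\di x
\qquad\text{for all } \beta\ge \beta_0
\]
and all $\vphi\in\mathrm{H}^{2}$ supported in $B(0,R_0)\setminus\{0\}$. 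Constructing the weight $w$ (of Bourgain--Kenig type) so that the conjugated Laplacian $e^{\beta\log w}\Delta e^{-\beta\log w}$ has the convexity yielding this inequality, while keeping the cubic gain $\beta^{3}$, is where the eventual power $K^{2/3}$ has its source, and I expect this to be the main obstacle: extracting the sharp $K^{2/3}$ (rather than a larger power) demands the explicit weight and a careful commutator computation, with every constant kept purely dimensional.

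Granting the Carleman estimate, I would next choose a smooth cutoff $\eta$ equal to $1$ on an annulus that separates a small inner ball (of radius $\asymp\delta/Q$ in the rescaled picture) from the outer sphere $|x|\asymp R_0$, supported in $B(0,R_0)\setminus B(0,c\delta/Q)$, with derivative bounds $|\nabla\eta|\lesssim (\delta/Q)^{-1}$, $|\Delta\eta|\lesssim (\delta/Q)^{-2}$ near the inner sphere and $|\nabla\eta|\lesssim R_0^{-1}$, $|\Delta\eta|\lesssim R_0^{-2}$ near the outer one; the hypothesis $B(x_0,6Q+2)\subset\Omega$ in \eqref{xR} is precisely what lets this cutoff fit inside $\Omega$. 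Applying the inequality to $\vphi=\eta\psi$ and using \eqref{eq}, one has $\Delta\vphi=\eta(\zeta-V\psi)+2\nabla\eta\cdot\nabla\psi+(\Delta\eta)\psi$. Under the rescaling the potential bound $K$ turns into one of order $KQ^{2}$, so $\eta V\psi$ contributes $\lesssim (KQ^{2})^{2}\int w^{-1-2\beta}|\vphi|^{2}$, which is absorbed into the left-hand side once $\beta^{3}\gtrsim(KQ^{2})^{2}$, i.e.\ $\beta\gtrsim(1+K^{2/3})Q^{4/3}$; together with the baseline $\beta\ge\beta_0$ this is the origin of the factor $1+K^{2/3}$ and of the geometric term $Q^{4/3}$ in the final exponent. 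The two commutator terms are supported in thin shells: near the inner sphere, where a Caccioppoli/interior-elliptic estimate bounds $\nabla\psi$ on the shell by $\psi$ on a slightly larger ball, the contribution is controlled (modulo the Carleman weight) by $\|\psi_{x_0,\delta}\|_{2}^{2}+\delta^{2}\|\zeta_{\Omega}\|_{2}^{2}$; near the outer sphere, where $w\asymp R_0$ is a fixed constant, the contribution is a multiple of $\|\psi_{\Omega}\|_{2}^{2}$ carrying an extra small factor $e^{-c\beta}$ relative to the left-hand side.

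To finish I would restrict the left-hand integral to $\Theta$, which by \eqref{defRx0} lies in $\{|x-x_0|\le Q\}$ and, using \eqref{delta} and $Q\ge1$, inside the region where $\eta=1$ and $w$ is bounded above and below by fixed constants; hence the left side dominates $\beta^{3}C^{-\beta}\|\psi_{\Theta}\|_{2}^{2}$. Collecting the powers of $w$ on $\Theta$ and on the two shells then yields an inequality of the schematic form
\[
\Bigl(\tfrac{\delta}{Q}\Bigr)^{C\beta}\|\psi_{\Theta}\|_{2}^{2}\ \lesssim\ \|\psi_{x_0,\delta}\|_{2}^{2}+\delta^{2}\|\zeta_{\Omega}\|_{2}^{2}+\Bigl(\tfrac{\delta}{Q}\Bigr)^{C\beta}e^{-c\beta}\|\psi_{\Omega}\|_{2}^{2},
\]
and the choice $\beta\asymp(1+K^{2/3})\bigl(Q^{4/3}+\log(\|\psi_{\Omega}\|_{2}/\|\psi_{\Theta}\|_{2})\bigr)$, which is $\ge\beta_0$ since $Q\ge1$, makes the last term at most half the left-hand side; after adjusting the dimensional constant $m_d$ in the exponent this is exactly \eqref{UCPbound}. (When $\|\psi_{\Theta}\|_{2}=0$ the assertion is vacuous.)
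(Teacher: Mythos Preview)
The paper does not prove this theorem at all: it is simply quoted from \cite{BKl} (in the formulation of \cite[Theorem~2.1]{Kl2}), introduced with ``For convenience we recall the quantitative unique continuation principle\ldots'', and used as a black box in the proof of Theorem~\ref{lemUCPeig}. So there is no ``paper's own proof'' to compare against.

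Your sketch is a faithful outline of the original Bourgain--Kenig/Bourgain--Klein Carleman argument: the weighted inequality with the cubic gain $\beta^3$, the rescaling that turns $K$ into $KQ^2$, the absorption condition $\beta^3\gtrsim (KQ^2)^2$ producing the $K^{2/3}$ and $Q^{4/3}$, the cutoff splitting into inner-shell and outer-shell commutator contributions, and the final optimization in $\beta$ to kill the $\|\psi_\Omega\|_2^2$ term. The ingredients you flag as delicate (the explicit weight $w$ with the right pseudoconvexity, the Caccioppoli step near the inner sphere, keeping all constants dimensional) are exactly where the work lies in \cite{BKl}, and you have identified them correctly. For the purposes of the present paper, however, none of this is needed---you may simply cite the result, as the authors do.
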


The following theorem is a version of \cite[Theorem~2.2]{Kl2} for rectangles with arbitrary centers and side lengths.

\begin{theorem} \label{lemUCPeig} Let  $H =   -\Delta + V$ be a  Schr\"odinger operator on 
$\mathrm{L}^2(\mathbb{R}^d)$, where V is a bounded potential with $\norm{V}_{\infty}\le K$.  Fix  $\delta \in (0,\frac 1 {2}]$, let  $\set{y_k}_{k \in \Z^d}$ be sites in $\R^d$ with  $B(y_k,\delta) \subset \La_1(k)$ for all $k\in \Z^d$. 
 Consider a rectangle $\La$ as in \eq{rect}, where $a\in \R^d$ and $L_j \ge 114 \sqrt{d}$ for $j=1,\ldots,d$.   Then for all  real-valued $\psi \in \cD(\Delta_\La) $    
we have
\begin{align} \label{UCPdelta}
 \delta^{M_d \pa{1 + K^{\frac 2 3}}} \norm{\psi_\Lambda}_2^2   \le  
\sum_{k \in \widehat{\widehat{\La}}}\norm{ {\psi}_{y_{k},\delta}}^2_2 +  \delta^2 \norm{\pa{(-\Delta + V)\psi}_\Lambda}_2^2,
\end{align}
where $M_d>0$ is a constant depending only on $d$.
\end{theorem}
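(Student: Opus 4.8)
The plan is to deduce Theorem~\ref{lemUCPeig} from the quantitative unique continuation principle (Theorem~\ref{thmucp}) by a covering argument over the lattice cubes intersecting $\La$, exactly paralleling the proof of \cite[Theorem~2.2]{Kl2}, with the only new ingredient being the geometric bookkeeping needed to handle a rectangle with arbitrary center $a$ and arbitrary (possibly unequal) side lengths $L_j \ge 114\sqrt d$. First I would fix a real-valued $\psi \in \cD(\Delta_\La)$ (Dirichlet or periodic boundary condition), extend it to $\Omega=\R^d$ in the standard way used in \cite{Kl2} --- by odd reflection across the faces in the Dirichlet case, periodic extension in the periodic case --- so that the extended function, still called $\psi$, lies in $\mathrm{H}^2_{\mathrm{loc}}(\R^d)$ and satisfies $-\Delta\psi + \widetilde V\psi = \widetilde\zeta$ a.e.\ on an enlarged open set, where $\widetilde V,\widetilde\zeta$ are the corresponding extensions of $V$ and $\zeta=(-\Delta+V)\psi$, with $\|\widetilde V\|_\infty = \|V\|_\infty \le K$ preserved. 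The point of requiring $L_j \ge 114\sqrt d$ is to guarantee that $\widehat{\widehat\La}$ is nonempty and, more importantly, that for a suitable choice of "anchor" cube in $\widehat{\widehat\La}$ the ball $B(x_0, 6Q+2)$ required in \eq{xR} stays inside the region where the extension solves the equation; the constant $114$ comes from chaining the UCP estimate across neighboring unit cubes and is taken from \cite{Kl2}.

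The core of the argument is an interpolation / chaining scheme. I would partition (up to boundary overlaps) $\La$ into the unit cubes $\{\Lambda_1(k)\}_{k\in\widehat\La}$, and among the interior cubes $\widehat{\widehat\La}$ run the following iteration: starting from any cube $\Lambda_1(k_0)$ on which $\|\psi_{\Lambda_1(k_0)}\|_2$ is comparable to $\|\psi_\La\|_2/|\widehat\La|$ (such a cube exists by pigeonhole), apply Theorem~\ref{thmucp} with $\Theta$ a small ball inside $\Lambda_1(k_0)$, $x_0 = y_k$ for an adjacent $k$, and $Q=Q(x_0,\Theta)$ bounded by an absolute constant (since adjacent unit cubes are at bounded distance), to transfer a lower bound on $\|\psi_\Theta\|_2$ into a lower bound on $\|\psi_{y_k,\delta}\|_2^2 + \delta^2\|\widetilde\zeta_{\cdot,\cdot}\|_2^2$; iterate along a path of adjacent cubes to reach every $k\in\widehat{\widehat\La}$. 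Summing the resulting inequalities over all $k\in\widehat{\widehat\La}$, and using that the number of cubes and the length of the connecting paths are controlled polynomially in $\operatorname{diam}\La$ while the exponent in \eq{UCPbound} only contributes the $\delta^{M_d(1+K^{2/3})}$ factor (the $Q^{4/3}$ term staying bounded because each UCP step is between neighboring cubes), I obtain \eq{UCPdelta}. The passage from a product-over-steps estimate to the clean single-power bound $\delta^{M_d(1+K^{2/3})}$ is the standard "telescoping" trick of \cite[proof of Theorem~2.2]{Kl2}: the worst case is when $\psi$ is concentrated far from all the $y_k$, and one absorbs the combinatorial factors into the constant $M_d$ by possibly enlarging it.

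The main obstacle, and the only genuinely new point compared to \cite{Kl2}, is verifying that the geometric hypotheses \eq{xR} of Theorem~\ref{thmucp} can be met uniformly for rectangles of arbitrary shape --- i.e.\ that from any interior unit cube one can reach any other via a chain of neighboring unit cubes all of whose associated UCP balls $B(x_0,6Q+2)$ remain inside the (extended) domain, and that the number and lengths of these chains are bounded by a fixed polynomial in $\max_j L_j$ independent of the aspect ratio. This is where $L_j \ge 114\sqrt d$ enters: it ensures each face has enough room for the reflection/periodization to be valid on a collar of width comparable to the diameter of the local UCP ball, so that the chaining never "falls off" the rectangle. Once this is in place, the estimate \eq{UCPdelta} follows, and then Theorem~\ref{thmUCPSP} is derived from Theorem~\ref{lemUCPeig} by spectral calculus exactly as \cite[Theorem~1.1]{Kl2} is derived from \cite[Theorem~2.2]{Kl2}: expand $\Chi_I(H_\La)\phi$ in eigenfunctions, apply \eq{UCPdelta} to real and imaginary parts of each eigenfunction, control $\|((-\Delta+V)\psi)_\La\|_2 = \|(E-V)\psi + \text{(something small)}\|$ using $E \le E_0$ and $\psi$ an (approximate) eigenfunction to replace $K$ by $2\|V\|_\infty + E_0$, choose $\gamma$ so that $\delta^2$ times this error is at most half of $\delta^{M_d(1+K^{2/3})}$, and sum.
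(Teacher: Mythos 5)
Your extension of $\psi$ and $V$ by reflection/periodization, and your final derivation of Theorem~\ref{thmUCPSP} from the eigenfunction estimate, match the paper. But the core of your argument --- chaining the quantitative UCP along paths of adjacent unit cubes from a pigeonhole-selected ``anchor'' cube to every $k\in\widehat{\widehat{\La}}$ --- does not work, and it is not what \cite[Theorem~2.2]{Kl2} does. Each application of Theorem~\ref{thmucp} costs a multiplicative factor $\delta^{m_d(1+K^{2/3})(Q^{4/3}+\log(\ldots))}$. Composing $O(\diam\La)$ such steps along a chain produces a factor of the form $\delta^{c\,\diam\La}$, i.e.\ a constant that degenerates with the size of the rectangle. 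No ``telescoping'' absorbs a product over $O(L)$ steps into a scale-free exponent $M_d(1+K^{2/3})$; the claim that the combinatorial factors can be pushed into $M_d$ is exactly the point that fails. The entire purpose of the Rojas-Molina--Veseli\'c / Bourgain--Klein technique used in \cite{Kl2} and in this paper is to avoid chaining altogether.

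The missing idea is the notion of \emph{dominating} sites. The paper tiles $\overline{\La}$ by cubes $\La_{\btau}(\kappa)$ with $2\le\tau_j<3$ adapted to the side lengths $L_j$, calls $\kappa$ dominating if $\norm{\psi_{\La_{\btau}(\kappa)}}_2^2\ge \tfrac{1}{2(6Y)^d}\norm{\widetilde\psi_{\La_{Y\btau}(\kappa)}}_2^2$ with $Y\le 19\sqrt d$ \emph{fixed independently of $L$}, and shows the dominating cubes carry at least half of $\norm{\psi_\La}_2^2$. Theorem~\ref{thmucp} is then applied \emph{once, locally,} on each dominating cube with $\Omega=\La_{Y\btau}(\kappa)$, $\Theta=\La_{\btau}(\kappa)$, and $x_0=y_{\cJ(J(\kappa))}$ (a small shift map $J$ guarantees $x_0\notin\overline\Theta$); the dominating condition bounds $\log(\norm{\psi_\Omega}_2/\norm{\psi_\Theta}_2)$ by a dimensional constant and $Q\le 9\sqrt d$, so the exponent in \eq{UCPbound} is uniformly bounded. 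Summing over dominating $\kappa$ gives \eq{UCPdelta} with a genuinely scale-free $M_d$. The hypothesis $L_j\ge 114\sqrt d$ enters only to ensure $Y\le L_j/6$ so that the enlarged cube $\La_{Y\btau}(\kappa)$ (and hence $B(x_0,6Q+2)$) stays in the region where the reflected/periodized extension solves the equation --- not to make room for long chains.
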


\begin{proof}
As in \cite[Proof of Corollary~A.2]{GKber}, we extend $V$ and  functions $\vphi \in \mathrm{L}^2(\Lambda)$ to $\R^d$.

For 
Dirichlet boundary condition, 
given $\vphi \in \mathrm{L}^2(\Lambda)$, we extend it to a function $\widetilde{\vphi}\in \mathrm{L}^2_{\mathrm{loc}}(\R^d)$ by setting $\widetilde{\vphi}=\vphi$ on $\Lambda$ and $\widetilde{\vphi}=0 $ on $\partial \Lambda$, and requiring 
\beq \label{widetildephi}
\widetilde{\vphi}(x)=- \widetilde{\vphi}(x + (L_j  -2 \theta_j(x_j-a_j)) \e_j)\sqtx{for all} x \in \R^d \sqtx{and} j \in \set{1,2 \dots,d},
\eeq
where $\set{\e_j}_{j =1,2\ldots,d}$ is the canonical orthonormal basis in $\R^d$, and for each  $t\in  \R$ we define $\theta_j(t)\in  ( -\frac {L_j} 2, \frac {L_j}  2]$ by $t =
kL_j + \theta_j(t)$ with $k \in \Z$.  We also  extend the potential $V$ to a potential $\widehat{V}$ on  $\R^d$ by  by setting $\widehat{V}=V$ on $\Lambda$ and $V=0$ on $\partial \Lambda$, and requiring that 
for all $x \in \R^d$ and $j \in \set{1,2 \dots,d}$ we have
\beq
\widehat{V}(x)=\widehat{V}(x + (L_j  -2 \theta_j(x_j-a_j)) \e_j).
\eeq
Note that  $\|\widehat{V}\|_\infty =\norm{V}_\infty \le K$.  Moreover, 
$\psi \in \cD(\Delta_\Lambda)$  implies   $\widetilde{\psi} \in \mathrm{H}^2_{\mathrm{loc}}(\R^d)$ and
\beq\label{eqwidehat}
\widetilde{ (-\Delta +V) {\psi} }=   (-\Delta  + \widehat{V} ) \widetilde{\psi} .
\eeq

For periodic boundary condition,
we extend  $\vphi \in \mathrm{L}^2(\Lambda)$  and $V$ to  periodic functions $\widetilde{\vphi}$ and $\widehat{V}$ on $\R^d$ of period $(L_1,\ldots,L_d)$;  note  $\|\widehat{V}\|_\infty =\norm{V}_\infty \le K$.  Moreover, 
$\psi \in \cD(\Delta_\Lambda)$  implies   $\widetilde{\psi} \in \mathrm{H}^2_{\mathrm{loc}}(\R^d)$ and we have  \eq{eqwidehat}.
\smallskip

Let  $\btau=(\tau_1,\ldots,\tau_d)$ be given by 
\beq
\tau_j= \min \set{t\ge 2; L_j\in t\N_{\mathrm{odd}}}=\frac {L_j }{ 2\left\lfloor{\tfrac {L_j -2}4}\right\rfloor +1} , \quad j=1,2,\ldots,d.
\eeq
 It follows that ($L_j >12\sqrt{d} \ge 12$)
\beq
2\le \tau_j \le \frac {L_j} { 2\left(\tfrac {L_j-2}4  -1\right) +1}= \frac 2 {1-\frac 4 {L_j}}<\frac 2 {1-\frac 4 {12}}=3,
\eeq
so
\beq
\tau_\infty= \max_{j=1,\ldots,d} \tau_j <3.
\eeq
We let $\btau \Z^d= \prod_{j=1}^d \tau_j \Z$ and  $\Lambda\up{\btau} = \pa{a + \btau \Z^d}\cap \La$.
Then
\beq\label{Ladecomp}
\overline{\La}=\bigcup_{\kappa \in\Lambda\up{\btau}} \overline{\La_{\btau}(\kappa)}.
\eeq
We define $\cJ:\Lambda\up{\btau} \to\widehat{\widehat{\La}}$ in such a way that  $\La_1(\cJ(\kappa)) \subset   \La_{\btau} (\kappa)$ for all $\kappa\in \Lambda\up{\btau} $. This can always be done since $\tau_j\ge 2$ for $j=1,\ldots,d$; note that $\cJ$ is one to one.

Let  $Y \in \N_{\mathrm{odd}}$, $Y\le \frac {L_j} 6  <\frac {L_j} {2\tau_j}$ for $j=1,2,\ldots,d$. 
It follows that  for  all $ \vphi \in \mathrm{L}^2(\Lambda)$ we have (see  \cite[Subsection~5.2]{RV})
\beq\label{sumY}
\sum_{\kappa \in\Lambda\up{\btau}} \norm{\widetilde{\vphi}_{\La_{Y\btau}(\kappa)}}_2^2
\le (2\tau_\infty Y)^d \norm{\vphi_\La}_2^2 \le  (6Y)^d \norm{\vphi_\La}_2^2.
\eeq

We now fix   $\psi \in \cD(\Delta_\La) $.   Following Rojas-Molina and Veseli\' c, we  call a site $\kappa \in\Lambda\up{\btau}$ \emph{dominating} (for $\psi$) if
\beq\label{dom}
\norm{\psi_{\La_{\btau}(\kappa))}}_2^2 \ge\tfrac 1{2 (6Y)^{d}} \norm{\widetilde{\psi}_{\La_{Y\btau}(\kappa)}}_2^2.
\eeq
Letting  $\widehat{D}\subset\Lambda\up{\btau}$ denote the collection of dominating sites, Rojas-Molina and Veseli\' c   \cite[Subsection~5.2]{RV} observed that it follows from \eq{sumY},  \eq{dom}, and \eq{Ladecomp},  that
\beq\label{sumdom}
\sum_{\kappa \in \widehat{D}} \norm{\psi_{\La_{\btau}(\kappa))}}_2^2 \ge \tfrac 1 2 \norm{\psi_\La}_2^2.  
\eeq

We  define a map $J\colon  \Lambda\up{\btau} \to \Lambda\up{\btau}$ by
\beq \label{defJ}
 J(\kappa)=
\begin{cases}   \kappa+ 2\tau_1 \e_1 \qtx{if} \kappa+ 2\tau_1 \e_1\in \Lambda\up{\btau}\\
\kappa -  2\tau_1  \e_1 \qtx{if} \kappa +   2\tau_1  \e_1 \notin\Lambda\up{\btau}\end{cases}.
\eeq
Note that $J$ is well defined, and 
\beq\label{Jinv}
\# J^{-1} (\set{\kappa}) \le 2 \qtx{for all} \kappa \in \Lambda\up{\btau}.
\eeq
We have (see \eq{defRx0}) 
\begin{align}\label{Qest}
  {Q}(y_{\cJ(J(\kappa))},\La_{\btau}(\kappa)) & = \sup_{x \in \La_{\btau}(\kappa)} \abs{x- y_{\cJ(J(\kappa))}}\\ \notag &   \le  \sup_{x \in \La_{\btau}(\kappa)} \abs{x- \kappa} +\abs{\kappa- J(\kappa)}+
 \abs{J(\kappa)-y_{\cJ(J(\kappa))}} \\
 \notag & \le \tfrac {\tau_\infty}{2} \sqrt{d}  + 2 \tau_1 +  \tfrac {\tau_\infty}{2}  \sqrt{d} \le  3\tau_\infty \sqrt{d}  \le 9\sqrt{d}.
\end{align}
for all $k \in \Lambda\up{\btau}$.

For each $\kappa \in \Lambda\up{\btau}$  we will   apply Theorem~\ref{thmucp}   with $\Omega= \La_{Y\btau}(\kappa) $, $\Theta= \La_{\btau}(\kappa) $, and $x_0= y_{\cJ(J(\kappa))}$. We need to guarantee \eq{xR}, that is, 
\beq
B(y_{\cJ(J(\kappa))}, 6 {Q}(y_{\cJ(J(\kappa))},\La_{\btau}(\kappa)) +2)\subset \La_{Y\btau}(\kappa)  \qtx{for all} k \in \widehat{D}.
\eeq
It suffices, using \eq{Qest} and $\tau_j \ge 2$, to have
\beq
\abs{y_{\cJ(J(\kappa))}-\kappa } +9 \sqrt{d}  \le  \tfrac {\tau_\infty}{2} \sqrt{d}  + 2 \tau_1+9 \sqrt{d}\le \tfrac {33} 2\sqrt{d} \le \tfrac {2Y} 2 =Y.
\eeq
We thus choose
\begin{align}\label{Y36}
Y= \min\set{n \in   \N_{\mathrm{odd}}; n\ge  \tfrac {33} 2\sqrt{d} }\le   \tfrac {33} 2\sqrt{d} +2 \le 19\sqrt{d}.
\end{align}
Since we want  $Y\le \frac {L_j} 6  <\frac {L_j} {2\tau_j}$ for $j=1,2,\ldots,d$, we require 
$L_j \ge 114 \sqrt{d}$ for $j=1,2,\ldots,d$.

Applying  Theorem~\ref{thmucp}, for each $\kappa \in \Lambda\up{\btau}$  we get   
\begin{align} \label{UCPbound3}
 \delta^{m^{\prime}_d\pa{1 + K^{\frac 2 3}}}\norm{ {\psi}_{{\La_{\btau}(\kappa)}}}^2_2  \le   \norm{ {\psi}_{y_{\cJ(J(\kappa))},\delta}}^2_2 + \delta^2 \norm{{\widetilde{\zeta}_{\La_{Y\btau}(\kappa) }}}_2^2,
\end{align}
where  $\zeta= (-\Delta +V)\psi$ and  $m^{\prime}_d>0$ is a constant depending only on $d$.  Summing over $\kappa\in \widehat{D}$ and using \eq{sumdom}, \eq{Jinv},   \eq{sumY}, and \eq{Y36},  and we get
\begin{align}
\tfrac 1 2  \delta^{m^{\prime}_d\pa{1 + K^{\frac 2 3}}}\ \norm{\psi_\La}_2^2 &\le  
2\sum_{k \in\widehat{\widehat{\La}}}\norm{ {\psi}_{y_{k},\delta}}^2_2 +  (6Y)^d \delta^2 \norm{{{\zeta_\La}}}_2^2\\
& \le 2\sum_{k \in \widehat{\widehat{\La}}}\norm{ {\psi}_{y_{k},\delta}}^2_2 +  (114 \sqrt{d})^d \delta^2 \norm{{{\zeta_\La}}}_2^2,  \notag\end{align}
so \eq{UCPdelta} follows.
\end{proof}

\begin{proof}[Proof of  Theorem~\ref{thmUCPSP}] 

Given $E_{0}>0$, set $K =K(V,E_0)= 2\norm{V}_{\infty}+ E_{0} $, and   let $\gamma$ be given by \eq{defgamma}, where $M_d>0$ is the constant in Theorem~\ref{lemUCPeig}.  Let $I \subset (-\infty, E_{0}]$ be a closed interval with $\abs{I} \le 2\gamma $.  Since  $\sigma (H_\La)\subset [-\norm{V }_\infty,\infty)$ for any rectangle $\La$,  without loss of  generality we assume $I=[E-\gamma ,E + \gamma ]$ with $E \in [ -\norm{V }_\infty,E_{0}]$, so 
 \beq
\norm{V -E}_\infty \le \norm{V }_\infty + \max \set{E_0, \norm{V }_\infty}\le K.
\eeq
Moreover, for any 
rectangle $\La$  we have  
 \beq\label{IgammaI}
   \norm{{\pa{H_\La -E}\psi}}_{2}\le  \gamma \norm{\psi}_{2} \qtx{for all} \psi \in \Ran  {\Chi}_I(H_\La).
  \eeq

Let $\La$ be a rectangle as in Theorem~\ref{lemUCPeig} and $\psi \in \Ran  {\Chi}_I(H_\La)$. If $\psi$  is real-valued,  it follows from Theorem~\ref{lemUCPeig}, \eq{defgamma}, and \eq{IgammaI} that
\begin{align} \label{UCPdelta2}
 2\gamma^2\norm{\psi}_2^2   \le  
\sum_{k \in \widehat{\widehat{\La}}}\norm{ {\psi}_{y_{k},\delta}}^2_2 +  \gamma^2 \norm{\psi}_2^2, 
\end{align}
yielding \beq\label{UCPdelta248}
\gamma^2\norm{\psi}_2^2   \le \sum_{k \in\widehat{\widehat{\La}}}\norm{ {\psi}_{y_{k},\delta}}^2_2= \norm{{W\up{\La} \psi}}_2^2,
\eeq
where the equality follows from  \eq{defWthm}.  For arbitrary $\psi \in \Ran  {\Chi}_I(H_\La)$, we write
$\psi = \Re \psi + i\Im \psi$, and note that $\Re \psi, \Im \psi  \in \Ran  {\Chi}_I(H_\La)$, $\norm{\psi}_2^2= \norm{\Re \psi}_2^2 + \norm{\Im\psi}_2^2$, and, since $W\up{\La}$ is real-valued, $\norm{W\up{\La}\psi}_2^2= \norm{W\up{\La}\Re \psi}_2^2 + \norm{W\up{\La}\Im\psi}_2^2$.  Recalling     $\pa{W\up{\La}}^2=W\up{\La}$, we conclude that
\beq\label{UCPdelta2489}
\gamma^2 \scal{\psi,\psi}=\gamma^2\norm{\psi}_2^2   \le  \norm{{W\up{\La} \psi}}_2^2= \scal{\psi,W\up{\La} \psi}
\eeq
for all $\psi \in \Ran  {\Chi}_I(H_\La)$, proving \eq{chivchi}.  
\end{proof}


\begin{thebibliography}{AAAA}

\bibitem[ASFH]{ASFH}  Aizenman, M.,  Schenker, J., Friedrich, R., 
Hundertmark, D.: Finite volume fractional-moment criteria for 
Anderson localization.
Comm. Math. Phys. \textbf{224}, 219-253 (2001)  


\bibitem[AW]{AWmp}  Aizenman, M., Warzel, S.: {Localization bounds for multiparticle systems}. Commun. Math. Phys. {\bf 290}, 903-934 (2009)

\bibitem[BCSS]{BCSS}   Boutet de Monvel, A.,  Chulaevsky, V.,  Stollmann, P., Suhov, Y.:  Wegner-type bounds for a multi-particle continuous Anderson model with
 an alloy-type external potential.
 J. Stat. Phys.  \textbf{138}, 553-566 (2010)
 
 \bibitem[BoKl]{BKl} Bourgain, J., Klein, A.: Bounds on the  density of states for Schr\" odinger operators. Invent. Math. \textbf{194},  41-72  (2013).  \doi{10.1007/s00222-012-0440-1})
 
 \bibitem[CL]{CL}  Carmona, R, Lacroix, J.: {Spectral Theory of Random Schr\"odinger Operators}. Birkh\"auser, Boston (1990)

\bibitem[ChBS]{CBS}  Chulaevsky, V.,  Boutet de Monvel, A.,  Suhov, Y.: {Dynamical localization for a multi-particle model with an alloy-type external random potential}. Nonlinearity, \textbf{24} , 1451-1472 (2011)




\bibitem[ChS1]{CS1}  Chulaevsky, V.,  Suhov, Y.: {Wegner bounds for a two particle tight binding model}. Commun. Math. Phys. {\bf 283}, 479-489 (2008)


\bibitem[ChS2]{CS2}  Chulaevsky, V.,  Suhov, Y.: {Eigenfunctions in a two-particle Anderson tight binding model}. Commun. Math. Phys. \textbf{289}, 701-723 (2009) 

\bibitem[ChS3]{CS3} Chulaevsky, V.,  Suhov, Y.: {Multi-particle Anderson Localization: Induction on the number of particles}. Math. Phys. Anal. Geom., no. 2, 117-139 (2009)

\bibitem[CoH]{CH} Combes,  J.M.,   Hislop, P.D.: {Localization for some
continuous, random Hamiltonians in d-dimensions}. J. Funct. Anal. \textbf{124},
149-180 (1994)

\bibitem[DK]{vDK} von Dreifus, H.,  Klein, A.: {A new proof of localization in
the Anderson tight binding model}.  Comm. Math. Phys. {\bf 124},
285-299 (1989). \doi{10.1007/BF01219198}

\bibitem[FMSS]{FMSS} Fr\"ohlich, J.,    Martinelli, F.,  Scoppola, E., 
Spencer, T.:  {Constructive proof of localization in the Anderson tight
binding model}. Commun. Math. Phys. {\bf 101}, 21-46 (1985)

\bibitem[GK1]{GK1}  Germinet, F., Klein, A.: {Bootstrap multiscale Analysis
and Localization in Random Media}. Commun. Math. Phys. \textbf{222}, 415-448 (2001)

\bibitem[GK2]{GK6}  Germinet, F., Klein, A.: {Decay of operator-valued kernels of functions of Schrodinger and other operators}. Proc. Amer. Math. Soc. \textbf{131}, 911-920 (2003)

\bibitem[GK3]{GK4}  Germinet, F., Klein, A.: {Explicit finite volume criteria for localization in continuous random media and applications}. Geom. Funct. Anal. \textbf{13}, 1201-1238 (2003)

\bibitem[GK4]{GK5}  Germinet, F., Klein, A.:  A characterization of the Anderson metal-insulator transport transition. Duke Math. J. \textbf{124}, 309-351 (2004)


\bibitem[GK5]{GKjsp}  Germinet, F., Klein, A.: {New characterization of the region of complete localization for random Schr\"odinger operators}. J. Stat. Phys. \textbf{122}, 73-94 (2006)

\bibitem[GK6]{GKber}  Germinet, F., Klein, A.: {A comprehensive proof of localization for continuous Anderson models with singular random potentials}. J. Eur. Math. Soc. \textbf{15}, 53-143 (2013)




\bibitem[HK]{HK} Hislop, P.D., Klopp, F.: Optimal Wegner estimate and the density of states for $N$-body, interacting Schr\"odinger operators with random potentials. Markov Process. Related Fields.  To appear

\bibitem[K1]{Ki} Kirsch, W.: {An invitation to random Schr\"odinger operators. In \emph{Random Schr\"odinger Operators}. Panor. Synth\`eses \textbf{25}}, 
Soc. Math. France, Paris, 1-119 (2008) 

\bibitem[K2]{Ki2} Kirsch, W.:  A Wegner estimate for multi-particle random Hamiltonians.
 Zh. Mat. Fiz. Anal. Geom.   \textbf{4}, 121--127  (2008)  
 \bibitem[KM]{KiM}  Kirsch, W.,   Martinelli, F. : On the spectrum of Schr\"odinger operators
with a random potential.  Commun. Math. Phys. {\bf 85}, 329-350 (1982)



\bibitem[Kl1]{Kl} Klein, A.: Multiscale analysis and localization of random operators. In \emph{Random Schr\"odinger Operators}. Panor. Synth\`eses \textbf{25}, 
Soc. Math. France,  Paris, 121-159 (2008)

\bibitem[Kl2]{Kl2}  Klein, A.: Unique continuation principle for spectral projections of Schr\" odinger operators and optimal Wegner estimates for non-ergodic random Schr\" odinger operators.  Comm. Math Phys. \textbf{323},  1229-1246 (2013).  \doi{10.1007/s00220-013-1795-x}




\bibitem[KlK]{KK} Klein, A.,  Koines, A.: {A general
	framework for localization of classical waves, I: Inhomogeneous media and defect eigenmodes}. Math. Phys. Anal. \textbf{4}, 97-130 (2001)

\bibitem[KlKS]{KKS} Klein, A.,  Koines, A., Seifert,  M.: {Generalized
eigenfunctions for waves in inhomogeneous media}. 
J. Funct. Anal. \textbf{190}, 255-291 (2002)

\bibitem[KlN]{KlN} Klein, A.,  Nguyen, S: {The bootstrap multiscale analysis for the multi-particle Anderson model}. J. Stat. Phys. \textbf{151}, 983-973 (2013). \doi{10.1007/s10955-013-0734-8}
 

\bibitem[KZ]{KZ} Klopp, F.,   Zenk, H.:  The integrated density of states for an interacting multiparticle
 homogeneous model and applications to the Anderson model.
 Adv. Math. Phys. 2009,  Art. ID 679827, 15 pp
 
 \bibitem[RV]{RV} Rojas-Molina, C., Veseli\' c, I.:  Scale-free unique continuation estimates and applications to random Schr\" odinger operators.  Commun. Math. Phys. \textbf{320}, 245-274 (2013).  \doi{10.1007/s00220-013-1683-4} 

\end{thebibliography}
\end{document}